\let\cleardoublepage\clearpage
\spnewtheorem*{proofsketch}{Proof Sketch}{\itshape}{}
\begin{document}

\title{Applications and Implications of a General Framework for Self-Stabilizing Overlay Networks}
\author{Andrew Berns}
\institute{Department of Computer Science\\University of Northern Iowa\\Cedar Falls, IA USA\\
\email{andrew.berns@uni.edu}}



\maketitle

\begin{abstract}
  From data centers to IoT devices to Internet-based applications, overlay networks have become an important part of modern computing.  Many of these overlay networks operate in fragile environments where processes are susceptible to faults which may perturb a node's state and the network topology.  Self-stabilizing overlay networks have been proposed as one way to manage these faults, promising to build or restore a particular topology from any initial configuration or after the occurrence of any transient fault.  To date there have been several self-stabilizing protocols designed for overlay networks.  These protocols, however, are either focused on a single specific topology, or provide very inefficient solutions for a general set of overlay networks.

  In this paper, we analyze an existing algorithm and show it can be used as a general framework for building many other self-stabilizing overlay networks.  Our analysis for time and space complexity depends upon several properties of the target topology itself, providing insight into how topology selection impacts the complexity of convergence.  We then demonstrate the application of this framework by analyzing the complexity for several existing topologies.  Next, using insights gained from our analysis, we present a new topology designed to provide efficient performance during convergence with the general framework.  Our process demonstrates how the implications of our analysis help isolate the factors of interest to allow a network designer to select an appropriate topology for the problem requirements.

  \keywords{Topological self-stabilization \and Overlay networks \and Fault-tolerant distributed systems}
\end{abstract}

\section{Introduction}
Distributed systems have become an ubiquitous part of modern computing, with systems continuing to grow in size and scope.  As these systems grow larger, the need for topologies that allow for efficient operations like search and routing increases.  To this end, many systems use \emph{overlay networks} to control the network topology.  In overlay networks, connections are made using logical links, each of which consists of zero or more physical links.  This use of logical links means program actions can add and delete edges in the network, allowing the system to maintain an arbitrary logical topology even when the physical topology may be fixed.

Many of these large systems operate in environments where faults are commonplace.  Servers may crash, communication links may be damaged, and processes may join or leave the system frequently.  This reality has increased the demand for fault tolerant overlay networks.  One particularly strong type of fault tolerance is \emph{self-stabilization}, where a legal configuration is guaranteed to be reached after any transient fault.  For overlay networks, this means a correct topology can always be built when starting from any configuration provided the network is not disconnected.

\subsection{Problem Overview}
Our current work focuses on self-stabilizing overlay networks.  A self-stabilizing overlay network guarantees that program actions will build a legal topology even when the system starts in \emph{any} weakly-connected topology.  Our interest, then, is in the design and analysis of algorithms that, when executed on an arbitrary initial weakly-connected topology, add and delete edges with program actions until a legal \emph{target topology} is reached.

Going further, we are interested in algorithms and analyses for general frameworks for overlay network creation.  To date, most work has focused on algorithms for a single topology, or has been inefficient in terms of time or space complexity.  Work focused on a specific topology is hard to generalize and derive insights from for expanding to other overlay network applications, while general frameworks with high time and space complexity may be too inefficient to be useful in practice.  Our interest is in general frameworks for overlay network creation that allow efficient algorithms to be built while still being general enough to provide insights into the stabilization of arbitrary topologies.

\subsection{Main Results and Significance}
In this paper, we build upon the work of Berns~\cite{berns_avatar_15} to present a general algorithm for creating self-stabilizing overlay network protocols for a variety of target topologies, and provide several examples of the application of our general algorithm, including with a new overlay network topology.  More specifically, our contributions are as follows:

\begin{itemize}
    \item We update the analysis of Berns~\cite{berns_avatar_15} to show how their algorithm can be extended into an algorithm for \emph{any} target topology.  Our updated analysis is the first to show a general framework for self-stabilizing overlay network creation that allows for efficient stabilization in both time and space.
    \item As part of this updated analysis, we introduce several measures of complexity that are properties of the target topology itself.  These measures are useful for two reasons.  First, they allow us to analyze the general algorithm easily for a variety of topologies.  Perhaps just as important, however, is that they provide valuable insight into how the selection of the target topology affects convergence in terms of both time and space.  This insight can be quite useful for not only selecting a target topology, but also in designing new overlay network topologies.
    \item We analyze several existing overlay network topologies, providing the necessary metrics for analysis in our general framework.  This analysis provides an example of how our framework can be applied, and also helps provide concrete insight into factors affecting convergence and demonstrate how diameter, degree, and robustness are balanced when designing self-stabilizing overlay networks.
    \item Using the insight gained from our earlier contribution, we define a new network topology which stabilizes with sublinear time and space complexity in our framework.  This demonstrates how the framework can provide network designers with guidance to help them build new topologies that can stabilize efficiently with our approach.  The design and analysis of new topologies targeted for efficient stabilization in this framework could be an interesting area of future study.
\end{itemize}

The key idea of our work is the extension of the algorithm of Berns~\cite{berns_avatar_15} to work with other topologies by defining the target topology and analyzing several relevant measures regarding this topology.  This definition allows the creation and analysis of many self-stabilizing overlay networks without having to design the algorithm from scratch.  Furthermore, our framework highlights the factors of the network that affect stabilization, allowing a designer to tune the topology to meet their needs.

\subsection{Related Work and Comparison}
The past few decades have seen a large body of work develop on overlay networks.  Early work focused on defining \emph{structured} overlay networks, where a single correct configuration existed for a particular set of nodes.  Examples of these networks include \textsc{Chord}~\cite{stoica_chord_01} and \textsc{Tapestry}~\cite{zhao_tapestry_2004}.  These works often did not consider fault tolerance, or considered a weak model with limited possible failures.

As work expanded in overlay networks, so did work in various types of fault tolerance.  One category of work considered \emph{self-healing} networks, where a particular network property could be maintained even during limited node deletions~\cite{hayes_forgiving-tree_2008,hayes_forgiving-graph_2009}.  Several examples of this work even used virtual nodes~\cite{trehan_virtual_12}, although they were not used to create a specific embedding as done in this current work.  Recently in \textsc{DConstructor}~\cite{gilbert_dconstructor_20}, the authors present a framework for building overlay networks.  \textsc{DConstructor} works by forming clusters and merging these clusters together.  However, as with the other examples, \textsc{DConstructor} is not self-stabilizing as it assumes all nodes begin in a single node cluster.  Said in another way, \textsc{DConstructor} assumes an arbitrary initial \emph{topology}, but not an arbitrary initial state.  This is also the same assumption in the work of G\"{o}tte et al.~\cite{gotte_time-optimal_podc21}, who presented an algorithm for transforming a constant-degree network into a tree in $\mathcal{O}(\log n)$ rounds.

Our work considers \emph{self-stabilizing overlay networks}, where the correct configuration is reached after an arbitrary number of transient faults that do not disconnect the network.  There are several examples of these as well.  The \textsc{Skip+} graph~\cite{jacob_skipplus_09} presents a self-stabilizing variant of the \textsc{Skip} graph~\cite{aspnes_skipgraph_03} with polylogarithmic convergence time, although the space requirements are linear for some configurations.  Other examples of self-stabilizing overlay networks include \textsc{Re-Chord}~\cite{kniesburges_rechord_11}, a \textsc{Chord} variant with virtual nodes that requires $\mathcal{O}(n \log n)$ rounds to converge, and \textsc{MultiSkipGraph}, a skip graph variant which maintained a property called \emph{monotonic searchability} during convergence~\cite{luo_multiskipgraph_2019}.

To date, most work has been focused on the convergence to a particular topology, with algorithms and analysis all targeted at these specific instances.  One exception to this is the \emph{Transitive Closure Framework}~\cite{berns_tcf_11}, which presents a general algorithm for creating any locally-checkable overlay network.  Like our work, they also identify a general measure of interest for stabilization time which they call the \emph{detector diameter}.  While this measure helps to bound the \emph{time} complexity of their algorithm, the space requirements are $\Theta(n)$ for any topology, limiting the applicability.

In \textsc{Avatar}~\cite{berns_avatar_15}, Berns presented both a locally-checkable definition of a network embedding for arbitrary topologies, and a self-stabilizing algorithm for building an embedded binary search tree with polylogarithmic time and space requirements.  The work only considered a single topology, however, and did not offer insight into measures for arbitrary topologies.  Our goal with this work is to build upon \textsc{Avatar} to address these issues.
\section{Preliminaries}
\label{section:prelims}
\subsection{Model of Computation}
We model our distributed system as an undirected graph $G = (V, E)$, with $n$ processes in $V$ communicating over the edges $E$.  Each node $u \in V$ has a unique identifier $u.id \in \mathbb{N}$, which is stored as immutable data in $u$.  Where clear from the context, we will use $u$ to represent the identifier of $u$.

Each node $u \in V$ has a \emph{local state} consisting of a set of variables and their values, along with its immutable identifier $u.id$.  A node executes a \emph{program} whose actions modify the values of the variables in its local state.  All nodes execute the same program.  Nodes can also communicate with their neighbors.  We use the \emph{synchronous message passing} model of computation~\cite{sync_mp_1998}, where computation proceeds in synchronous rounds.  During each round, a node receives messages sent to it in the previous round from any node in its neighborhood $N(u) = \{v \in V:(u,v) \in E\}$, executes program actions to update its local state, and sends messages to any of its neighbors.  We assume reliable communication channels with bounded delay, meaning a message is received by node $u$ in some round $i$ if and only if it was sent to $u$ in round $i - 1$.

In the overlay network model, nodes communicate over logical links that are part of a node's state, meaning a node may execute actions to create or delete edges in $G$.  In any round, a node may delete any edge incident upon it, as well as create any edge to a node $v$ which has been ``introduced'' to it from some neighbor $w$, such that $(u,w)$ and $(w, v)$ are both in $E$.  Said in another way, in a particular round a node may connect its neighbors to one another by direct logical links.

The goal for our computation is for nodes to execute actions to update their state (including modifying the topology by adding and deleting edges to other nodes) until a legal configuration is reached.  A \emph{legal configuration} can be represented as a predicate over the state of the nodes in the system.  In the overlay network model, links are part of a node's state, meaning a legal configuration is defined at least in part by the overlay network topology.  The \emph{self-stabilizing overlay network problem} is to design an algorithm $\mathcal{A}$ such that when executing $\mathcal{A}$ on each node in a connected network with nodes in an arbitrary state, and allowing $\mathcal{A}$ to add and delete edges, eventually a legal configuration, including a predicate defined at least in part by the network's logical topology, is reached.  This means that a self-stabilizing overlay network will always automatically restore a legal configuration (including reconfiguring the network topology) after \emph{any} transient failure so long as the network remains connected.

\subsection{Complexity Measures}
When designing self-stabilizing overlay network protocols, there are two measures of interest: the time required to build a correct configuration, and the space required to do so (in terms of a node's degree).  In our model, we are concerned with the number of synchronous rounds that are required to reach a legal state.  In particular, the maximum number of synchronous rounds required to build a legal topology when starting from any arbitrary configuration is called the \emph{convergence time}.

When measuring the space requirements, we use the \emph{degree expansion} measure from the original \textsc{Avatar} work~\cite{berns_avatar_15}, which is defined as the ratio of the maximum node degree of any node during convergence over the maximum node degree from the initial or final configuration.  This measure is based upon the idea that if a node begins with a large degree in the initial configuration, or ends with a large degree in the final configuration, the overall algorithm cannot be expected to have a low degree during convergence.  Instead, we are interested in the ``extra'' degree growth caused by the algorithm during convergence.

\section{Generalizing \textsc{Avatar}}
The original \textsc{Avatar} work~\cite{berns_avatar_15} provided two things: a definition of a locally-checkable embedding from any set of real nodes to a particular target topology, and a self-stabilizing algorithm for creating a specific binary tree topology.  Below, we review these contributions and expand the analysis of the algorithm to show it can work for arbitrary topologies.

\subsection{\textsc{Avatar} Definition}
The \textsc{Avatar} network definition is simply a dilation-1 embedding between a \emph{guest network} and a \emph{host network}.  More specifically, let $\mathcal{F}$ be a family of graphs such that, for each $N \in \mathbb{N}$, there is exactly one graph $F_N \in \mathcal{F}$ with node set $\{0, 1, \ldots, N-1\}$.  We call $\mathcal{F}$ a \emph{full graph family}, capturing the notion that the family contains exactly one topology for each ``full'' set of nodes $\{0, 1, \ldots, N-1\}$ (relative to the identifiers).  For any $N \in \mathbb{N}$ and $V \subseteq \{0, 1, \ldots, N-1\}$, $\textsc{Avatar}_{\mathcal{F}}(N,V)$ is a network with node set $V$ that realizes a dilation-1 embedding of $F_N \in \mathcal{F}$.  The specific embedding is given below.  We also show that, when given $N$, \textsc{Avatar} is locally checkable ($N$ can be viewed as an upper bound on the number of nodes in the system).  It is on this full graph family \emph{guest network} that our algorithms shall execute, as we will show later.

\begin{definition}
Let $V \subseteq [N]$ be a node set $\{u_0, u_1, \ldots, u_{n-1}\}$, where $u_i < u_{i+1}$ for $0 \leq i < n-1$.  Let the \emph{range} of a node $u_i$ be $\mathit{range}(u_i) = [u_i, u_{i+1})$ for $0 < i < n-1$.  Let $\mathit{range}(u_0) = [0, u_1)$ and $\mathit{range}(u_{n-1}) = [u_{n-1}, N)$.  $\textsc{Avatar}_{\mathcal{F}}(N, V)$ is a graph with node set $V$ and edge set consisting of two edge types:
\begin{description}
\item[Type 1:] $\{(u_i, u_{i+1}) | i=0,\ldots,n-2\}$
\item[Type 2:] $\{(u_i, u_j) |  u_i \neq u_j \wedge \exists (a,b) \in E(F_N), a \in \mathit{range}(u_i) \wedge b \in \mathit{range}(u_j)\}$
\end{description}
\label{defn:avatar}
\end{definition}

When referring to a general \textsc{Avatar} network for any set of nodes, we will omit the $V$ and simply refer to $\textsc{Avatar}_{\mathcal{F}}(N)$.

As with the original work, to reason about \textsc{Avatar}, we consider two ``networks'': a \emph{host network} consisting of the \emph{real} nodes in $V$, and a \emph{guest network} consisting of the $N$ \emph{virtual} nodes from the target topology.  Each real node in $V$ is the host of one or more virtual nodes in $N$.  This embedding provides several advantages.  First, it allows us to make many networks locally-checkable provided all nodes know $N$ in advance.  Second, it provides a simple mechanism for which to reason about network behavior in the guest network.  As the target $N$-node topology is fixed regardless of the actual set of real nodes $V$, the design and analysis of our algorithms is simplified by executing them on the guest network.  Furthermore, since we are using a dilation-1 embedding, most metrics for performance regarding the guest network (e.g. diameter) still apply to the host network.

As an additional note, we shall assume that the guest nodes also use the synchronous message passing model described for the real nodes in the model section.

\subsubsection{A Note on $n$ versus $N$}
The work of \textsc{Avatar} does require all nodes know $N$, an upper bound on the number of nodes in the system.  In cases where network membership is predictable, it may be possible for $n$ to be within a constant factor of $N$.  From a practical standpoint, even in cases where $N$ and $n$ are significantly different, a polylogarithmic convergence time in $N$ may still be small enough (e.g. if we think of IPv6, $\mathcal{O}(\log N)$ is only 128).

\subsection{The \textsc{Avatar} Algorithm}
The original \textsc{Avatar} work was focused on the creation of a specific topology (a binary search tree) as the \emph{target topology}.  Their algorithm followed a divide-and-conquer approach, separating nodes into clusters and then merging them together.  One can think of their self-stabilizing algorithm as involving three different components:
\begin{enumerate}
    \item \emph{Clustering}: The first step in the algorithm is for nodes to form clusters.  These clusters begin as a single host node hosting a full $N$ node guest network of the target topology (\textsc{Cbt} in the original work).  In the initial configuration, nodes may not be a part of a cluster, but since \textsc{Avatar} is locally checkable, all faulty configurations contain at least one node which detects the faulty configuration and will begin forming the single-node clusters.  This fault detection and cluster creation will propagate through the network until eventually all nodes are members of $N$ node clusters of the target topology.
    \item \emph{Matching}: The second step of the algorithm is to match together clusters so that they may merge together.  To do this, the root node of a spanning tree defined on the cluster repeatedly polls the nodes of its cluster, asking them to either find neighboring clusters that are looking for merge partners (called the leader role), or to look at neighboring clusters that can assign them a merge partner (called the follower role).  The role of leader or follower is randomly selected.  Leader clusters will match together all of their followers for merging by adding edges between the roots of each cluster, creating a matching between clusters that may not be direct neighbors.  This ability to create edges to match non-neighboring clusters allows more matches to occur, and thus more merges, and thus a faster convergence time.
    \item \emph{Merging}: The algorithm then deals with the merging of matched clusters.  To prevent degrees from growing too large, a cluster is only allowed to merge with at most one other cluster at a time.  Once two clusters have matched from the previous step, the roots of the clusters connect as ``partners'' and update their successor pointers based upon the identifier of the host of the root of the other cluster.  One node will have its responsible range become smaller, and this node will send all guest nodes that were in its old responsible range to its partner in the other cluster.  The children of the root nodes are connected, and then they repeat the process of updating successor pointers and passing along guest nodes outside their new responsible range.  Eventually this process reaches the leaves, at which point all nodes in both clusters have updated their responsible ranges and now form a new legal cluster of the target topology.
\end{enumerate} 

As it turns out, the algorithm components from the original \textsc{Avatar} work do not depend upon the specific topology that is being built (the \emph{target topology}).  While the analysis of complexity assumes a complete binary search tree, the algorithm components themselves simply rely upon an arbitrary target topology and a spanning tree defined upon that topology on which to execute PIF waves.  We can therefore extend this algorithm to other topologies if we update the analysis and include several additional metrics.  We define these metrics next after discussing the algorithm's intuition.

\subsection{Relevant Metrics}
To update the analysis of the original \textsc{Avatar} work for any target topology requires two measures of the target topology.  The first of these is diameter of the target topology, which will be a factor in determining the convergence time.  The second of these is a measure of a real node's degree inside the embedded target topology, which will be  a factor in determining the degree expansion.

\subsubsection{Spanning Tree Diameter}
In the original \textsc{Avatar} algorithm, a spanning tree embedded onto the target topology was used to communicate and coordinate between nodes in a particular cluster.  For our work, we will simply use a spanning tree with a root of (virtual) node 0 and consisting of the shortest path from node 0 to all other nodes.  Obviously the diameter of this spanning tree is at most the diameter of the target topology, and we therefore shall use the diameter of the target topology as our first metric of interest.  We denote the diameter of a particular target topology $T$ with $N$ nodes as $D(T_N)$.

As we shall see, this diameter measure will be key in determining the stabilization time.  Intuitively, a low-diameter spanning tree results in faster communication within clusters, and therefore faster convergence than a higher diameter spanning tree.

\subsubsection{Maximum Degree of Embedding}
The other measure of interest has to do with the degree of the \emph{real} nodes when embedding the target topology.  More formally, let the \emph{maximum degree of embedding $T_N$ in \textsc{Avatar}} be defined as the maximum degree of any node in $\textsc{Avatar}_T(N,V)$ for any node set $V \subseteq N$.  Where clear from context, we will refer to this simply as the \emph{maximum degree of embedding} and denote it as $\Delta_\textsc{A}(T_N)$.  Note the maximum degree of embedding is almost entirely determined by the target topology $T$, as there are only 2 edges in $\textsc{Avatar}_T(N)$ per node that are not present to realize a dilation-1 embedding of $T$.

The maximum degree of embedding is a critical measure for degree expansion as it determines how many additional edges a node may receive during the various stages of the algorithm.  The clusters in the \textsc{Avatar} algorithm are $N$ node instances of the target topology $T$, and using metrics defined on $T_N$ is acceptable for running time.  However, each time an edge is added to a virtual node within a cluster, we must consider the effects on the degree in the host network, not just the guest network.

Note that the definition of maximum degree of embedding considers \emph{any} possible subset of real nodes $V$.  This differs from typical (non-stabilizing) overlay network results, where it is common to assume that identifiers are uniformly distributed, meaning that the ranges of each real node are of similar size.  However, since we are building a self-stabilizing protocol, and each cluster is by itself an $N$ node instance of the target topology, the ranges of hosts inside clusters during convergence may be quite skewed, even when the final distribution of node identifiers is not.

\subsection{Overall Complexity}
If we are given the diameter and the maximum degree of embedding of an arbitrary target topology, we can then determine the convergence time and degree expansion of Berns' algorithm for the arbitrary target topology.  We give the theorems for these measures below and provide brief proof sketches of each.  As the updated analysis is heavily based on the original work, and the algorithms are from the original work, the full analysis is left for the appendix.  Our contribution is not the original algorithm of Berns, but rather the observation that the algorithm works for any topology, the updated metrics for the analysis, and the examples and discussion that follow.

\begin{theorem}
\label{thm:avatar_convergence}
The algorithm of Berns~\cite{berns_avatar_15} defines a self-stabilizing overlay network for $\textsc{Avatar}_T$, for some full graph family target topology $T$, with convergence time of $\mathcal{O}(D(T_N) \cdot \log N)$ in expectation, where $D(T_N)$ is the diameter of the $N$ node topology $T_N$.
\end{theorem}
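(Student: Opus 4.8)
The plan is to decompose the convergence time as a product of two quantities: the number of \emph{merge phases} needed to reduce the system from many clusters to a single cluster, and the number of rounds consumed by each phase. I would first establish correctness (that a legal configuration is indeed reached), then bound each of these two quantities separately, and finally multiply them.

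For correctness I would invoke the local checkability of $\textsc{Avatar}_T$ that holds once every node knows $N$. Since any configuration that is not a legal $\textsc{Avatar}_T(N,V)$ network is detectable by at least one node, a detecting node resets to become a single-node cluster hosting a full $N$-node copy of $T_N$; this reset propagates so that after $O(D(T_N))$ rounds every real node belongs to a well-formed cluster. From there the matching and merging components only ever combine two well-formed clusters into one well-formed cluster, so the only stable configuration is a single cluster spanning all of $V$, which is precisely the target network. Crucially, the merge step is topology-agnostic: it proceeds as a root-to-leaf wave on the cluster's spanning tree that updates responsible ranges and ships out-of-range guest nodes to the partner cluster, and its correctness depends only on the spanning-tree structure and not on the particular edge set of $T$.

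For the timing I would bound the two factors as follows. First, each phase is dominated by PIF waves along the spanning tree rooted at virtual node $0$, whose depth is at most $D(T_N)$ by our choice of a shortest-path spanning tree; both the matching poll and the subsequent merge wave therefore complete in $O(D(T_N))$ rounds. Second, I would bound the number of phases by a randomized-matching argument: in each phase every cluster independently chooses the leader or follower role with probability $1/2$, and a follower adjacent (in the cluster graph) to a leader is matched and merged. Because the cluster graph remains connected while more than one cluster exists, every cluster has a neighbor, so a fixed cluster is matched with at least constant probability; by linearity of expectation a constant fraction of clusters are eliminated per phase, giving an expected $O(\log N)$ phases (we begin with at most $n \leq N$ clusters). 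Multiplying the two bounds yields the claimed $O(D(T_N)\cdot \log N)$ expected convergence time.

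The main obstacle I anticipate is the probabilistic matching analysis, specifically showing the constant-factor reduction per phase in a way that is genuinely independent of the target topology $T$. The subtlety is that the cluster graph's structure is induced by $T_N$ through the Type~2 edges of Definition~\ref{defn:avatar}, so I must argue that connectivity alone, rather than any special property of $T$, suffices to guarantee the expected constant-fraction merge rate. I would handle this by charging each surviving cluster to an incident cluster-graph edge and arguing that each such edge induces a successful leader/follower pairing with constant probability, which decouples the bound from the particular edge set of $T_N$ and isolates $D(T_N)$ as the only topology-dependent factor in the final expression.
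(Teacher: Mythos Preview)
Your decomposition into ``rounds per phase'' $\times$ ``number of phases'' matches the paper's structure, and your treatment of correctness, reset propagation, and the $O(D(T_N))$ cost of a PIF wave on the spanning tree is essentially what the paper does.

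The genuine gap is in the matching step. You describe the algorithm as ``a follower adjacent (in the cluster graph) to a leader is matched and merged'' and then argue that a fixed cluster is matched with constant probability because it has a neighbor. This fails when the cluster graph is a star: the algorithm constrains each cluster to merge with at most one partner at a time, so if the center is the leader it can absorb only one leaf, and each of the $k$ leaves is matched with probability only $\Theta(1/k)$, not a constant. Your proposed charging argument (charge each surviving cluster to an incident edge whose endpoints have opposite roles) has the same defect: the edge having a leader/follower orientation with probability $1/2$ does not imply the follower is \emph{selected}, since many followers compete for the same leader. The paper explicitly flags this obstacle and resolves it by having leaders \emph{pair their followers with each other} (a distance-$2$ matching, Algorithm~\ref{algo:connect}), so that a leader with $k$ followers produces $\lfloor k/2\rfloor$ merged pairs rather than one. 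That is the mechanism that makes the constant-fraction reduction hold for arbitrary cluster graphs and keeps the bound topology-independent.

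A second, smaller gap is synchronization. You speak of ``each phase'' as though clusters proceed in lockstep, but they do not: clusters are at different points in their PIF waves, and the adversary sets the initial roles. The paper handles this with the short/long follower distinction and the $\mathcal{F}_{\Delta}(G_0)$ device (Lemmas~\ref{dlemma:follow_to_lead}--\ref{lemma:merge_partner_probability}), ultimately showing each cluster is matched within $O(D(T_N))$ rounds with probability at least $1/16$. Your proposal would need an analogous argument to justify treating the process as phased.
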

\begin{proofsketch}
A sketch of the steps for proving this theorem are as follows:
\begin{itemize}
    \item In at most $\mathcal{O}(D(T_N))$ rounds, every node is a member of a cluster.
    \item For any cluster, in an expected $\mathcal{O}(D(T_N))$ rounds, the cluster has completed a merge with another cluster, meaning the number of clusters has decreased by a constant fraction in $\mathcal{O}(D(T_N))$ rounds.
    \item Reducing the number of clusters by a constant fraction needs to be done $\mathcal{O}(\log N)$ times before a single cluster remains.
\end{itemize}
\end{proofsketch}

\begin{theorem}
\label{thm:avatar_degree-expansion}
The algorithm of Berns~\cite{berns_avatar_15} defines a self-stabilizing overlay network for $\textsc{Avatar}_T$, for some full graph family target topology $T$, with degree expansion of $\mathcal{O}(\Delta_\textsc{A}(T_N) \cdot \log N)$ in expectation, where $\Delta_\textsc{A}(T_N)$ is the maximum degree of embedding of the $N$ node target topology $T_N$.
\end{theorem}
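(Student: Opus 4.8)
The plan is to bound the largest degree any real node attains at a single round during convergence and then divide by the baseline degree, i.e.\ the larger of the maximum degrees in the initial and final configurations. Fixing a real node $u$ and a round $t$, I would partition the edges incident on $u$ at round $t$ according to the algorithm phase responsible for them: (i) edges within $u$'s current cluster, (ii) edges created by the single merge in which $u$'s cluster is currently engaged, and (iii) matching edges to the roots of clusters that have been matched to $u$'s cluster but not yet merged.

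For group (i), the observation is that at every round $u$ lies in exactly one cluster, and that cluster is by construction an instance of $\textsc{Avatar}_T(N, V')$ on its current real-node set $V'$. By the definition of the maximum degree of embedding, the number of such edges is at most $\Delta_\textsc{A}(T_N)$ no matter how skewed the host ranges inside the cluster are, which is precisely the subtlety emphasized after the definition of $\Delta_\textsc{A}(T_N)$. For group (ii), I would use the rule that a cluster merges with at most one partner at a time: the merge wave links each real node only to the corresponding node(s) of that single partner cluster, contributing an additional $\mathcal{O}(\Delta_\textsc{A}(T_N))$ and hence only a constant-factor increase.

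The $\mathcal{O}(\log N)$ factor, and the main obstacle, sits in group (iii). Because a leader cluster matches all of its followers simultaneously while clusters merge only one pair at a time, matching edges at a root can remain outstanding across several merge steps before being consumed. I would establish two estimates: first, that the partners matched in a single wave are reached through the bounded boundary of the embedded topology and so number $\mathcal{O}(\Delta_\textsc{A}(T_N))$ per wave; and second, that any fixed root is involved in only $\mathcal{O}(\log N)$ successive rounds of cluster reduction, exactly as shown in the proof of Theorem~\ref{thm:avatar_convergence}. Multiplying these yields at most $\mathcal{O}(\Delta_\textsc{A}(T_N) \cdot \log N)$ simultaneously outstanding matching edges. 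The delicate point is to bound the edges outstanding \emph{at one instant} rather than merely summed over the execution, and to do so without double-counting the inter-cluster edges already present in the initial configuration; this forces the accounting to track the random leader/follower assignments and is the reason the bound holds only in expectation.

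Summing the three groups gives a maximum per-round degree of $\mathcal{O}(\Delta_\textsc{A}(T_N) \cdot \log N)$ in expectation. Since the baseline degree is at least a constant, dividing through preserves the bound, so the degree expansion is $\mathcal{O}(\Delta_\textsc{A}(T_N) \cdot \log N)$ in expectation, with the expectation inherited from the randomized matching. I expect the accounting in group (iii) to be the crux, while groups (i) and (ii) follow directly from the embedding definition and the one-merge-at-a-time rule.
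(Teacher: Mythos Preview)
Your decomposition into intra-cluster edges, merge edges, and matching edges mirrors the paper's argument, and groups (i) and (ii) are handled exactly as the paper does. The $\mathcal{O}(\log N)$ factor in group (iii) also arises for the same reason you give: the number of matching waves is bounded by the number of cluster-reduction phases from Theorem~\ref{thm:avatar_convergence}.

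Where your reasoning slips is the justification for the per-wave bound in group (iii). You assert that ``the partners matched in a single wave are reached through the bounded boundary of the embedded topology and so number $\mathcal{O}(\Delta_\textsc{A}(T_N))$ per wave.'' That is not true: a leader cluster may have arbitrarily many followers---the count is governed by the initial inter-cluster degree, not by $\Delta_\textsc{A}(T_N)$. The paper obtains the $\mathcal{O}(\Delta_\textsc{A}(T_N))$ per wave differently: during the $\mathit{ConnectFollowers}$ wave each virtual node receives at most one forwarded follower edge from each of its spanning-tree children, and a real host has at most $\Delta_\textsc{A}(T_N)$ such children living on other hosts. This bound therefore applies to \emph{every} node in the leader cluster, not only to the root as your phrasing suggests, and it bounds the degree increase at a node rather than the number of matched partners. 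With that mechanism substituted in, your plan coincides with the paper's proof.
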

\begin{proofsketch}
To prove this, we consider the actions that might increase a node's degree.
\begin{itemize}
    \item Regardless of the number of merges a node participates in, the node's degree will grow to at most $\mathcal{O}(\Delta_\textsc{A}(T_N))$ as the result of merge actions.  By definition, a node's degree within its cluster after a merge cannot exceed $\Delta_\textsc{A}(T_N)$.
    \item During the process of matching clusters together, a node's degree may grow by one for every child it has in the spanning tree.  Since the node has at most $\Delta_\textsc{A}(T_N)$ children from other nodes, each time the node participates in the matching its degree grows by $\mathcal{O}(\Delta_\textsc{A}(T_N))$.  In expectation, this matching happens $\mathcal{O}(\log N)$ times (see proof of Theorem \ref{thm:avatar_convergence}).
\end{itemize}
\end{proofsketch}

The implications of Theorems~\ref{thm:avatar_convergence} and \ref{thm:avatar_degree-expansion} are that we can simply define a target topology and analyze its diameter and maximum degree of embedding to have a self-stabilizing protocol for our target topology.  We provide a few examples of this process in the following section.
\section{Examples}
In this section, we demonstrate how the selection of the target topology affects the complexity of our algorithm by considering several different topologies: the \textsc{Linear} network, a complete binary search tree (\textsc{Cbt}, taken from~\cite{berns_avatar_15}), and \textsc{Chord}~\cite{stoica_chord_01}.

\subsection{Linear}
As the name suggests, the \textsc{Linear} network consists of a line of nodes sorted by identifier.  The formal desired end topology for the \textsc{Linear} network is given next.

\begin{definition}
The $\textsc{Linear}(N)$ network, for $N \in \mathbb{N}$, consists of nodes $V = \{0, 1, \ldots, N-1\}$ and edges $E = \{(i, i+1), i \in [0,N-2]\}$.
\end{definition}

\begin{lemma}
The diameter of an $N$ node \textsc{Linear} network is $\mathcal{O}(N)$.
\end{lemma}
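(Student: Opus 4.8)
The plan is to observe that $\textsc{Linear}(N)$ is simply a path graph on the $N$ nodes $\{0, 1, \ldots, N-1\}$, and to compute its diameter directly from the definition. Since the only edges are those of the form $(i, i+1)$, the graph is connected, each interior node $i$ (with $0 < i < N-1$) has exactly the two neighbors $i-1$ and $i+1$, and each endpoint ($0$ and $N-1$) has a single neighbor. The diameter is the maximum shortest-path distance over all pairs of nodes, so it suffices to determine these distances.

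First I would establish that the distance between any two nodes $i$ and $j$ equals $|i - j|$. The upper bound is immediate: the sequence $i, i+1, \ldots, j$ (or its reverse) is a valid path of length exactly $|i - j|$. For the lower bound, I would argue that any $i$-to-$j$ path must traverse each consecutive edge $(k, k+1)$ for every $k$ with $\min(i,j) \leq k < \max(i,j)$, since deleting any one of these edges disconnects $i$ from $j$ in this path graph; hence at least $|i - j|$ edges are required, and the distance is exactly $|i - j|$.

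With the pairwise distances in hand, the diameter is $\max_{i,j} |i - j|$, which is attained by the two endpoints $0$ and $N-1$ and equals $N - 1$. Therefore the diameter of an $N$-node $\textsc{Linear}$ network is $N - 1 = \mathcal{O}(N)$.

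I do not anticipate a genuine obstacle here, as the topology is elementary; the only step that warrants any care is the lower bound on the distance, where one must justify that no shortcut exists because the edge set of Definition contains no edges other than those between consecutive identifiers. This follows directly from the stated edge set $E = \{(i, i+1), i \in [0, N-2]\}$.
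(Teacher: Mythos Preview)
Your proof is correct. The paper actually states this lemma without proof, treating it as immediate from the definition of the \textsc{Linear} network as a path on $N$ nodes; your argument simply makes explicit the obvious computation that the diameter of a path is $N-1$, which is the only natural approach here.
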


Each virtual node has a degree of at most 2, and it is easy to see that each real node also has a degree of at most 2.  Therefore, the maximum degree of embedding for \textsc{Linear} is $\mathcal{O}(1)$.

\begin{lemma}
  The maximum degree of embedding of the \textsc{Linear} topology is $\mathcal{O}(1)$.
\end{lemma}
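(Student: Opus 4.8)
The plan is to bound the degree of an arbitrary real node $u_i$ in $\textsc{Avatar}_{\textsc{Linear}}(N, V)$ for \emph{any} node set $V \subseteq [N]$, accounting separately for its Type 1 and Type 2 edges and showing each contributes only a constant number of distinct neighbors. Taking the maximum over all $V$ then yields $\Delta_\textsc{A}(\textsc{Linear}_N) = \mathcal{O}(1)$ directly from the definition of maximum degree of embedding.

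First I would dispatch the Type 1 edges. By Definition~\ref{defn:avatar}, these form the successor path $(u_j, u_{j+1})$ over the sorted node set, so each real node $u_i$ has at most two incident Type 1 edges: one to its predecessor $u_{i-1}$ and one to its successor $u_{i+1}$ (fewer at the endpoints $u_0$ and $u_{n-1}$).

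The heart of the argument is the Type 2 edges, where I would exploit two facts: in $\textsc{Linear}(N)$ every guest edge has the form $(k, k+1)$, and the ranges $\{\mathit{range}(u_j)\}$ partition $[0, N)$ into consecutive intervals. A Type 2 edge $(u_i, u_j)$ requires a guest edge $(k, k+1)$ with one endpoint in $\mathit{range}(u_i)$ and the other in $\mathit{range}(u_j)$. Since the consecutive integers $k$ and $k+1$ can land in different intervals only at a range boundary, the only crossing guest edges incident to $\mathit{range}(u_i) = [u_i, u_{i+1})$ are $(u_i - 1, u_i)$ at the left boundary, whose other endpoint $u_i - 1$ is the largest element of $\mathit{range}(u_{i-1})$, and $(u_{i+1}-1, u_{i+1})$ at the right boundary, whose other endpoint $u_{i+1}$ lies in $\mathit{range}(u_{i+1})$. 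Every remaining guest edge touching a virtual node in $u_i$'s range has both endpoints inside that range and therefore induces no host edge. Hence the Type 2 edges incident to $u_i$ connect only to $u_{i-1}$ and $u_{i+1}$.

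Combining the two edge types, the only real nodes joined to $u_i$ are $u_{i-1}$ and $u_{i+1}$, so $\deg(u_i) \le 2$ regardless of $N$, of $|V|$, or of how wide $u_i$'s range is. The main obstacle — and the reason the bound is not immediate despite each \emph{virtual} node having degree at most $2$ — is the phenomenon flagged in the maximum-degree-of-embedding discussion: a single real node may host an arbitrarily large contiguous block of virtual nodes, so one must verify that the many guest edges internal to that block collapse onto the host itself and do not fan out to many distinct real neighbors. The boundary-crossing step above is exactly what rules this out, confirming that only the two extreme virtual nodes of a range emit outgoing edges and that they do so only to adjacent ranges.
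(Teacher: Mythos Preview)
Your proof is correct and follows essentially the same approach as the paper: the paper's proof simply notes that for any $\mathit{range}(u) = [x,y]$ the only external edges are $(x-1,x)$ and $(y,y+1)$, with all remaining guest edges internal to the range. Your argument is a more detailed version of this, explicitly separating the Type~1 and Type~2 edges and observing that both connect $u_i$ only to $u_{i-1}$ and $u_{i+1}$, so the host degree is at most~$2$.
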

\begin{proof} 
Note that for any particular $range(u) = [x, y]$, there are at most two external edges: $(x-1, x)$ and $(y, y+1)$.  All other edges are between virtual nodes inside the range.
\end{proof}

The above lemmas combined with Theorems \ref{thm:avatar_convergence} and \ref{thm:avatar_degree-expansion} give us the following corollary.

\begin{corollary}
The self-stabilizing \textsc{Avatar} algorithm builds $\textsc{Avatar}_\textsc{Linear}(N)$ in an expected $\mathcal{O}(N \cdot \log N)$ rounds with an expected degree expansion of $\mathcal{O}(\log N)$.
\end{corollary}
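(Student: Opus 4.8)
The plan is to simply combine the two structural lemmas about the \textsc{Linear} topology with the general complexity theorems for \textsc{Avatar}. This is a direct substitution argument: the hard work has already been done in establishing the metrics $D(T_N)$ and $\Delta_\textsc{A}(T_N)$ for \textsc{Linear}, so the corollary follows by plugging these values into Theorems~\ref{thm:avatar_convergence} and \ref{thm:avatar_degree-expansion}.

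First I would recall that, by the first lemma, the diameter of the $N$-node \textsc{Linear} network satisfies $D(\textsc{Linear}_N) = \mathcal{O}(N)$, since the network is a simple path and the distance between the two endpoints is $N-1$. Substituting this into Theorem~\ref{thm:avatar_convergence}, which guarantees a convergence time of $\mathcal{O}(D(T_N) \cdot \log N)$ in expectation, immediately yields an expected convergence time of $\mathcal{O}(N \cdot \log N)$ rounds.

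Next I would invoke the second lemma, which establishes that the maximum degree of embedding for \textsc{Linear} is $\Delta_\textsc{A}(\textsc{Linear}_N) = \mathcal{O}(1)$: every range $[x,y)$ contributes at most the two external edges $(x-1,x)$ and $(y,y+1)$, with all remaining incident edges internal to the range. Substituting $\Delta_\textsc{A}(\textsc{Linear}_N) = \mathcal{O}(1)$ into Theorem~\ref{thm:avatar_degree-expansion}, which bounds the degree expansion by $\mathcal{O}(\Delta_\textsc{A}(T_N) \cdot \log N)$ in expectation, gives the claimed expected degree expansion of $\mathcal{O}(\log N)$.

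Since both bounds follow by direct substitution into the already-proven general theorems, there is no genuine obstacle here; the only thing to be careful about is confirming that \textsc{Linear} is indeed a full graph family (so that the theorems apply), which is immediate from its definition providing exactly one topology on each node set $\{0, 1, \ldots, N-1\}$. The two expectation bounds combine to give the corollary as stated.
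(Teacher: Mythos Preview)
Your proposal is correct and matches the paper's approach exactly: the paper simply states that the corollary follows from combining the two \textsc{Linear} lemmas with Theorems~\ref{thm:avatar_convergence} and~\ref{thm:avatar_degree-expansion}, which is precisely the direct substitution you carry out. Your extra check that \textsc{Linear} is a full graph family is a harmless (and sensible) addition not made explicit in the paper.
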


Note the convergence time of this algorithm is a logarithmic factor slower than previous results~\cite{onus_linear_07}.  This logarithmic factor comes from the ``cost of coordination'', as edges are only added when clusters have matched.

\subsection{Complete Binary Search Tree}
In the first \textsc{Avatar} paper, the author defined and analyzed an algorithm for one specific topology, the complete binary search tree (called \textsc{Cbt}).  We formally define the desired end topology for \textsc{Cbt}, list the relevant measures for this topology below and omit the proofs, as those are contained in the work of Berns~\cite{berns_avatar_15}.

\begin{definition}
For $a \leq b$, let $\textsc{Cbt}[a,b]$ be a binary tree rooted at $\mathit{r} = \lfloor (b+a)/2 \rfloor$.  Node $r$'s left cluster is $\textsc{Cbt}[a,r-1]$, and $r$'s right cluster is $\textsc{Cbt}[r+1,b]$.  If $a > b$, then $\textsc{Cbt}[a,b] = \bot$.  We define $\textsc{Cbt}(N) = \textsc{Cbt}[0,N-1]$.
\end{definition}

\begin{lemma}
  The diameter of an $N$ node \textsc{Cbt} network is $\mathcal{O}(\log N)$.
\end{lemma}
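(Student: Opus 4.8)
The plan is to bound the diameter by controlling the height of the tree, since in any tree the diameter is at most twice the height: every shortest path between two nodes passes through their lowest common ancestor and so decomposes into at most two root-to-node paths. It therefore suffices to show that the height of $\textsc{Cbt}(N)$ is $\mathcal{O}(\log N)$.

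First I would set up a recurrence for the height in terms of the number of nodes. Consider $\textsc{Cbt}[a,b]$ with $n = b - a + 1$ nodes and root $r = \lfloor (a+b)/2 \rfloor$. A direct count shows the left subtree $\textsc{Cbt}[a,r-1]$ has $r - a = \lfloor (n-1)/2 \rfloor$ nodes and the right subtree $\textsc{Cbt}[r+1,b]$ has $b - r = \lceil (n-1)/2 \rceil = \lfloor n/2 \rfloor$ nodes. Hence both subtrees contain at most $\lfloor n/2 \rfloor$ nodes, so if $h(n)$ denotes the height of a $\textsc{Cbt}$ on $n$ nodes we obtain $h(n) \le 1 + h(\lfloor n/2 \rfloor)$ with base case $h(0) = 0$.

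Next I would solve this recurrence by a straightforward induction to conclude $h(n) \le \lfloor \log_2 n \rfloor + 1 = \mathcal{O}(\log n)$, and therefore $h(N) = \mathcal{O}(\log N)$. Combining this with the observation that the diameter is at most $2\,h(N)$ yields the claimed $\mathcal{O}(\log N)$ bound.

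The argument is essentially routine; the only point requiring care is the floor/ceiling bookkeeping that establishes both subtrees have at most $\lfloor n/2 \rfloor$ nodes. This is precisely what guarantees the tree is \emph{balanced} rather than merely that each root is the median of its range. Once that halving step is pinned down, the logarithmic height, and hence the logarithmic diameter, follow immediately.
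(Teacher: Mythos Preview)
Your argument is correct: the floor/ceiling bookkeeping you carry out shows each subtree of $\textsc{Cbt}[a,b]$ has at most $\lfloor n/2\rfloor$ nodes, the recurrence $h(n)\le 1+h(\lfloor n/2\rfloor)$ then yields $h(N)=\mathcal{O}(\log N)$, and the diameter bound follows from the lowest-common-ancestor observation.

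As for comparison, the paper does not actually prove this lemma; it states it and explicitly omits the proof, deferring to the original \textsc{Avatar} work~\cite{berns_avatar_15}. So there is no in-paper argument to set yours against. Your proof is the standard one for a balanced binary search tree defined by recursive median splitting, and is exactly the kind of routine verification the authors are pointing to when they defer to the prior reference.
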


\begin{lemma}
  The maximum degree of embedding of an $N$ node \textsc{Cbt} network is $\mathcal{O}(\log N)$.
\end{lemma}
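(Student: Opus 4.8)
The plan is to reduce the degree of a real node in $\textsc{Avatar}_{\textsc{Cbt}}(N,V)$ to a purely combinatorial question about how many $\textsc{Cbt}$ edges leave a contiguous interval of keys, and then to exploit the logarithmic height of the complete binary search tree. Fix any node set $V$ and any real node $u_i$ hosting the contiguous range $R = \mathit{range}(u_i) = [x, y)$. By Definition~\ref{defn:avatar}, $u_i$ has exactly two Type~1 edges (to its predecessor and successor in $V$), contributing only $\mathcal{O}(1)$. Every Type~2 edge incident on $u_i$ corresponds to some $\textsc{Cbt}$ edge $(a,b)$ with $a \in R$ and $b \notin R$, and two distinct real neighbors $u_j \neq u_{j'}$ are witnessed by edges with distinct outside endpoints, hence by distinct ``boundary-crossing'' edges. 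So the degree of $u_i$ is at most $2$ plus the number of $\textsc{Cbt}$ edges with exactly one endpoint in $R$, and it suffices to bound the latter by $\mathcal{O}(\log N)$ for every contiguous interval $R \subseteq [0,N)$; since this holds uniformly over all intervals, the maximum over all $V$ and all hosts follows at once.

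First I would observe that $R = [x,y)$ has only two boundaries, at $x$ and at $y$, and that every boundary-crossing edge of $R$ must straddle the single-point cut $C_x = (\{<x\}, \{\ge x\})$ or the single-point cut $C_y = (\{<y\}, \{\ge y\})$: the outside endpoint of a crossing edge is either $<x$ (placing the edge in $C_x$) or $\ge y$ (placing it in $C_y$). Thus the number of boundary-crossing edges of $R$ is at most $|C_x| + |C_y|$, and the whole problem collapses to bounding the number of tree edges crossing a single threshold value $z$.

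The heart of the argument, and the step I expect to be the main obstacle, is showing that a single-point cut $C_z$ of the $\textsc{Cbt}$ is crossed by only $\mathcal{O}(\log N)$ edges. I would prove this using the standard binary-search-tree ``split'' decomposition together with the elementary fact that, in a tree, the number of edges crossing a vertex bipartition $(A,B)$ equals $c_A + c_B - 1$, where $c_A$ and $c_B$ are the numbers of connected components induced on each side. Walking the root-to-$z$ search path, each step that descends to the right (current node $<z$) detaches that node together with its entire left subtree as one component of keys $<z$, while each step that descends to the left detaches that node together with its right subtree as one component of keys $\ge z$. Since $\textsc{Cbt}$ has height $\mathcal{O}(\log N)$, the path has $\mathcal{O}(\log N)$ steps, so $c_{\{<z\}} = \mathcal{O}(\log N)$ and $c_{\{\ge z\}} = \mathcal{O}(\log N)$, giving $|C_z| = \mathcal{O}(\log N)$.

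Combining the pieces, the number of boundary-crossing edges of any contiguous interval is $|C_x| + |C_y| = \mathcal{O}(\log N)$, so every real node has degree $\mathcal{O}(\log N)$ in $\textsc{Avatar}_{\textsc{Cbt}}(N,V)$ regardless of $V$, which is exactly $\Delta_\textsc{A}(\textsc{Cbt}_N) = \mathcal{O}(\log N)$. The essential quantitative input throughout is the $\mathcal{O}(\log N)$ height of the complete binary search tree; the subtlety to get right is the claim that the induced components on each side of a single-point cut number only $\mathcal{O}(\log N)$, which is precisely where the balanced BST structure (as opposed to an arbitrary contiguous cut argument, which would fail) is used.
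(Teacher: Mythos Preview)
Your argument is correct. Note, however, that the paper does not actually supply a proof of this lemma: it explicitly omits the proof and defers to the original \textsc{Avatar} paper of Berns~\cite{berns_avatar_15}. So there is nothing in the present paper to compare your approach against.

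As a self-contained argument, your route---bounding the host degree by the number of \textsc{Cbt} edges leaving a contiguous interval, then dominating that by the two threshold cuts $C_x$ and $C_y$, and finally bounding each threshold cut via the tree identity $|C_z| = c_A + c_B - 1$ together with the $\mathcal{O}(\log N)$ height of the balanced BST---is clean and fully sound. One minor presentational point: in the component-counting step, two consecutive right-descents on the search path produce nodes that are still adjacent, so the pieces you ``detach'' at successive steps need not be distinct components; but this only makes the true component count smaller, so your $c_A, c_B = \mathcal{O}(\log N)$ bound is unaffected.
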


The above lemmas combined with Theorems \ref{thm:avatar_convergence} and \ref{thm:avatar_degree-expansion} give us the following corollary.

\begin{corollary}
The self-stabilizing \textsc{Avatar} algorithm builds a target topology of $\textsc{Avatar}_\textsc{Cbt}(N)$ in expected $\mathcal{O}(\log^2 N)$ rounds with $\mathcal{O}(\log^2 N)$ expected degree expansion.
\end{corollary}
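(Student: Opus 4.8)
The plan is to obtain both bounds by directly composing the two \textsc{Cbt} lemmas stated above with the general complexity theorems, since the corollary is purely a matter of substituting the topology-specific metrics into Theorems~\ref{thm:avatar_convergence} and~\ref{thm:avatar_degree-expansion}. No new algorithmic analysis is required at this level; all of the machinery has already been assembled in the general framework.

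First I would establish the convergence time. Instantiating Theorem~\ref{thm:avatar_convergence} with $T = \textsc{Cbt}$ gives an expected convergence time of $\mathcal{O}(D(\textsc{Cbt}_N) \cdot \log N)$. By the diameter lemma, $D(\textsc{Cbt}_N) = \mathcal{O}(\log N)$, since a complete binary tree on $N$ nodes has height $\mathcal{O}(\log N)$ and its diameter is at most twice that height. Substituting yields $\mathcal{O}(\log N \cdot \log N) = \mathcal{O}(\log^2 N)$ rounds in expectation.

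Next I would bound the degree expansion in exactly the same manner. Instantiating Theorem~\ref{thm:avatar_degree-expansion} with $T = \textsc{Cbt}$ gives an expected degree expansion of $\mathcal{O}(\Delta_\textsc{A}(\textsc{Cbt}_N) \cdot \log N)$. By the maximum-degree-of-embedding lemma, $\Delta_\textsc{A}(\textsc{Cbt}_N) = \mathcal{O}(\log N)$, and substituting gives $\mathcal{O}(\log^2 N)$ expected degree expansion, which together with the previous paragraph completes the corollary.

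The corollary carries essentially no independent difficulty; the substantive content is concealed in the two \textsc{Cbt} lemmas, whose proofs are deferred to~\cite{berns_avatar_15}. I expect the only genuinely delicate point to be the maximum-degree-of-embedding bound: although each virtual node in a binary tree has degree at most three, a single real node hosts a \emph{contiguous range} of virtual identifiers, so one must argue that the number of tree edges crossing the boundary of any such range is $\mathcal{O}(\log N)$ rather than something larger. This holds because any contiguous interval of in-order positions in a \textsc{Cbt} decomposes into $\mathcal{O}(\log N)$ maximal subtrees, each contributing only its single edge to its parent at the frontier. Since that argument is supplied by the cited lemma, I would simply invoke it here rather than reproving it.
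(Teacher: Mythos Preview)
Your proposal is correct and matches the paper's own approach exactly: the corollary is obtained simply by plugging the two \textsc{Cbt} lemmas (diameter $\mathcal{O}(\log N)$ and maximum degree of embedding $\mathcal{O}(\log N)$) into Theorems~\ref{thm:avatar_convergence} and~\ref{thm:avatar_degree-expansion}. Your additional remark about why the embedding-degree bound holds is accurate and goes slightly beyond what the paper states here, but the paper likewise defers that argument to~\cite{berns_avatar_15}.
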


Note the above corollary matches with the detailed proofs given in the original \textsc{Avatar} work.  Unlike the original, however, we reached our conclusions based simply upon the metrics we defined earlier.  This corollary, then, serves as a nice ``sanity check'' on the accuracy of our results.

\subsection{Chord}
Both \textsc{Linear} and \textsc{Cbt} are tree topologies, meaning they are fragile in the sense that a single node or link failure may partition the network.  In this section, we consider a more robust topology.  In particular, we apply the \textsc{Avatar} algorithm to an $N$-node \textsc{Chord} network~\cite{stoica_chord_01} as well.  We define the network's desired topology as follows.

\begin{definition}
For any $N \in \mathbb{N}$, let $\textsc{Chord}(N)$ be a graph with nodes $[N]$ and edge set defined as follows.  For every node $i$, $0 \leq i < N$, add to the edge set $(i, j)$, where $j = (i + 2^k) \mod N$, $0 \leq k < \log N - 1$.  When $j = (i + 2^k)\mod N$, we say that $j$ is the $k$-th finger of $i$.
\end{definition}

The original \textsc{Chord} paper proves the following lemma regarding the network's diameter.

\begin{lemma}
The diameter of an $N$ node \textsc{Chord} network is $\mathcal{O}(\log N)$.
\end{lemma}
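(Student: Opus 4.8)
The plan is to establish the bound via a greedy routing argument driven by the binary representation of inter-node distances. First I would reduce the problem to bounding the shortest-path distance between an arbitrary pair of nodes $u$ and $v$, since the diameter is the maximum such distance over all pairs. Because the edge set is undirected and the identifiers lie on a ring of size $N$, it suffices to route along the shorter of the two arcs joining $u$ and $v$; letting $d$ be the length of that arc, we have $0 \le d \le \lfloor N/2 \rfloor$. The goal then becomes exhibiting a path of length $\mathcal{O}(\log N)$ that traverses this arc.

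The core of the argument uses the finger structure directly: node $i$ has an edge to $(i + 2^k) \bmod N$ for each $0 \le k < \log N - 1$, i.e.\ one available step of length $2^k$ for every such power of two. The key observation is that $d$ admits a binary expansion $d = \sum_{k \in S} 2^k$ with $|S| \le \lceil \log N \rceil$. Routing proceeds greedily: at the current node I would take the finger whose length is the largest power of two not exceeding the remaining distance to $v$. Subtracting that power of two strictly lowers the index of the highest set bit of the remaining distance, so each hop eliminates one term of the expansion and the process terminates after at most $|S| = \mathcal{O}(\log N)$ hops, at which point the remaining distance is zero and $v$ has been reached.

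The main obstacle I anticipate is the edge case introduced by the finger range stopping at $k = \log N - 2$: the largest available finger has length $2^{\log N - 2} = N/4$ rather than $N/2$, so a single hop cannot realize the top bit of a distance as large as $N/2$. I would dispatch this by observing that restricting to the shorter arc already forces $d \le \lfloor N/2 \rfloor$, and that every distance strictly below $N/2$ has all of its set bits in positions $0$ through $\log N - 2$, each of which is a genuine finger; the lone remaining value $d = N/2$ is covered by two applications of the length-$N/4$ finger. In all cases the hop count is $\mathcal{O}(\log N)$, and since this bounds the shortest-path distance for every pair $(u,v)$, the diameter of the $N$-node \textsc{Chord} network is $\mathcal{O}(\log N)$.
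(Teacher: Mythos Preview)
Your proposal is correct and is precisely the standard greedy binary-expansion routing argument for \textsc{Chord}. The paper itself does not supply a proof of this lemma; it simply attributes the result to the original \textsc{Chord} paper of Stoica et al., whose argument is essentially the one you outline here.
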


While the logarithmic diameter means we can efficiently build \textsc{Chord} in terms of time complexity, the results are not so hopeful in terms of degree complexity.  In an $N$-node \textsc{Chord} network, every node has $\mathcal{O}(\log N)$ neighbors, some of which have identifiers up to $N/2$ away.  The result of this is that if a real node has a range of size $N/2$, each virtual node in the range may potentially have a connection to a virtual node on a different host.  This means that a real node in an embedded \textsc{Chord} network may have a maximum degree of embedding of $\mathcal{O}(N)$, as we show next.

\begin{lemma}
  The maximum degree of embedding of the \textsc{Chord} topology is $\mathcal{O}(N)$.
\end{lemma}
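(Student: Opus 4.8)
The plan is to establish the stated bound by pinning down both directions, since the $\mathcal{O}(N)$ upper bound is trivial while the interesting content is that it is achieved. For the upper bound, observe that $\textsc{Avatar}_\textsc{Chord}(N,V)$ has at most $N$ vertices, so no real node can have degree exceeding $N-1$; hence $\Delta_\textsc{A}(\textsc{Chord}_N) = \mathcal{O}(N)$ for every choice of $V$. The substance of the lemma, and the reason Chord embeds poorly, is to exhibit a single node set $V$ that drives some real node's degree up to $\Omega(N)$, so that the trivial upper bound is in fact tight.

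To build the witness I would make the ranges as skewed as the definition allows: let one real node swallow a range of size $\Theta(N)$ while every remaining real node occupies a singleton range. Concretely, assuming $N$ is a power of two for clean arithmetic (which does not affect the asymptotics), take $V = \{0\} \cup \{N/2, N/2+1, \ldots, N-1\}$, so that $\mathit{range}(0) = [0, N/2)$ and each other real node hosts exactly one virtual node. I would then count the Type~2 edges incident on real node $0$ by tracking where the Chord fingers of the virtual nodes inside $[0, N/2)$ land.

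The core computation examines the largest finger, which is $2^{\log N - 2} = N/4$ under the given definition (since $0 \le k < \log N - 1$). For each virtual node $a$ with $a \in [N/4, N/2) \subseteq \mathit{range}(0)$, the finger $a + N/4$ lies in $[N/2, 3N/4)$, a region composed entirely of singleton real nodes, and no modular wraparound occurs since $a + N/4 < 3N/4 < N$. As $a$ ranges over the $N/4$ values in $[N/4, N/2)$, the targets $a + N/4$ are $N/4$ distinct virtual nodes hosted by $N/4$ distinct real nodes. By the Type~2 rule each such finger contributes an edge from real node $0$ to a distinct neighbor, so $\deg(0) \ge N/4 = \Omega(N)$. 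Combining the two directions yields $\Delta_\textsc{A}(\textsc{Chord}_N) = \Theta(N)$, and in particular $\mathcal{O}(N)$.

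I expect the only delicate point to be the bookkeeping that confirms the chosen finger targets genuinely land on \emph{distinct} hosts lying \emph{outside} the large range, so that they are neither absorbed into node $0$'s own range nor collapsed onto a single neighbor. The skewed singleton construction is designed precisely to avoid both collisions: all targets fall in the singleton region $[N/2, 3N/4)$, and distinct sources $a$ give distinct targets $a + N/4$. Everything else is the routine observation that a single worst-case $V$ suffices, because $\Delta_\textsc{A}$ is defined as a maximum over all node sets $V \subseteq [N]$.
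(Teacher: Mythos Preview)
Your proposal is correct and follows essentially the same approach as the paper: both exhibit a real node with range $[0, N/2)$ and argue that the long \textsc{Chord} fingers emanating from the virtual nodes in this range hit $\Omega(N)$ distinct real hosts outside the range, which together with the trivial $N$-vertex upper bound gives $\Theta(N)$. You are somewhat more careful than the paper in two respects---you explicitly populate the complement $[N/2,N)$ with singleton hosts so that distinct finger targets yield distinct real neighbors, and you read off the largest finger as $2^{\log N - 2} = N/4$ in accordance with the paper's stated range $0 \le k < \log N - 1$ (the paper's own proof appeals to ``the $\log N - 1$ \textsc{Chord} finger'' of size $N/2$, tacitly using the standard \textsc{Chord} convention)---but the argument and the asymptotic conclusion are the same.
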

\begin{proof} 
To see this, consider a specific $n$ node embedding of \textsc{Chord} where $range(u) = [0, N/2)$.  Note each node in $u$'s range is incident on at least one edge whose other endpoint is outside $range(u)$ -- specifically, the $\log N - 1$ \textsc{Chord} finger.  As there are $N/2$ nodes in $u$'s range, we have at least $N/2$ edges with exactly one endpoint outside $range(u)$, and our lemma holds.
\end{proof}

Note this result is not a concern in the \emph{final} configuration if we assume identifiers are uniformly distributed (which is a reasonable and common assumption with overlay networks).  However, we are working in a self-stabilizing setting where \emph{any} initial configuration is possible, and therefore one could imagine a scenario where the system reaches a configuration where a single cluster consisting of node $0$ is matched with an $N/2$ node cluster consisting of all real nodes from the range $[N/2, N)$.

Combining the above two lemmas with Theorem \ref{thm:avatar_convergence} and Theorem \ref{thm:avatar_degree-expansion} gives us the following corollary.

\begin{corollary}
\label{corollary:chord}
The self-stabilizing \textsc{Avatar} algorithm builds $\textsc{Avatar}_{\textsc{Chord}}(N)$ in expected $\mathcal{O}(\log^2 N)$ rounds and with expected degree expansion of $\mathcal{O}(N \cdot \log N)$.
\end{corollary}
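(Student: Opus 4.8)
The plan is to obtain both bounds by direct substitution into the two master theorems, since all of the genuine work has already been carried out in the preceding diameter and maximum-degree-of-embedding lemmas for \textsc{Chord}. First I would confirm that \textsc{Chord} satisfies the full graph family requirement of the \textsc{Avatar} definition: for every $N \in \mathbb{N}$ the definition specifies exactly one graph on node set $\{0, 1, \ldots, N-1\}$, so Theorems~\ref{thm:avatar_convergence} and~\ref{thm:avatar_degree-expansion} both apply with $T = \textsc{Chord}$.

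For the convergence time, I would invoke Theorem~\ref{thm:avatar_convergence}, which gives an expected convergence time of $\mathcal{O}(D(T_N) \cdot \log N)$. Substituting the diameter bound $D(\textsc{Chord}_N) = \mathcal{O}(\log N)$ from the preceding lemma yields $\mathcal{O}(\log N \cdot \log N) = \mathcal{O}(\log^2 N)$ expected rounds. For the degree expansion, I would invoke Theorem~\ref{thm:avatar_degree-expansion}, which gives an expected degree expansion of $\mathcal{O}(\Delta_\textsc{A}(T_N) \cdot \log N)$. Substituting the maximum-degree-of-embedding bound $\Delta_\textsc{A}(\textsc{Chord}_N) = \mathcal{O}(N)$ yields $\mathcal{O}(N \cdot \log N)$ expected degree expansion. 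Both substitutions inherit the ``in expectation'' qualifier directly from the respective theorems.

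The interesting point is that there is essentially no obstacle remaining in the corollary itself; the real content lives in the lemma establishing $\Delta_\textsc{A}(\textsc{Chord}_N) = \mathcal{O}(N)$, where one must exhibit a worst-case node set $V$ (here a single host responsible for the range $[0, N/2)$, whose $N/2$ virtual nodes each carry a long \textsc{Chord} finger leaving the range) to show the embedded degree really can grow linearly. Given that lemma and the logarithmic diameter bound, the corollary follows immediately, and I would present it simply as the composition of the two lemmas with the two framework theorems.
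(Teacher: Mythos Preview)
Your proposal is correct and matches the paper's approach exactly: the paper does not even write out a proof, stating only that the corollary follows by combining the two preceding lemmas (diameter $\mathcal{O}(\log N)$ and maximum degree of embedding $\mathcal{O}(N)$) with Theorems~\ref{thm:avatar_convergence} and~\ref{thm:avatar_degree-expansion}. Your additional remark that \textsc{Chord} is a full graph family is a harmless sanity check the paper leaves implicit.
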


Note we can actually improve the bound of the degree expansion, as it is at most $\mathcal{O}(N)$.  As our theorems provide an upper bound, however, we leave them stated as is for simplicity.
\section{\textsc{SkipChord}}
\label{section:skipchord}
We showed above the role the diameter and maximum degree of embedding play in determining performance of the \textsc{Avatar} algorithm.  One of the benefits of the general analysis of the \textsc{Avatar} algorithm is that it highlights the factors of the \emph{topology} that will affect convergence, allowing a network designer to select a topology based upon the problem requirements while weighing the impact of the topology on convergence time and degree expansion.  One can even design \emph{new} topologies with an eye towards these metrics for use in \textsc{Avatar} embeddings.

In this section, we present a new network topology built specifically for embedding in the \textsc{Avatar} framework which we call \textsc{SkipChord}.  This ring-based network is more robust than \textsc{Linear} and \textsc{Cbt} while avoiding the high degree requirements during stabilization of the standard \textsc{Chord} network.  As we shall show, by balancing degree and robustness, we may achieve efficient stabilization while still having sublinear degree expansion.

\subsection{Definition}
As trees, the \textsc{Linear} and \textsc{Cbt} networks make poor choices for many fault-prone applications as they are easily disconnected by node or link failure.  \textsc{Chord} represents a more robust choice, but suffers from a high maximum degree of embedding due to each of the $N$ nodes having a long link (to a neighbor with identifier $\mathcal{O}(N)$ away from itself).  Our \textsc{SkipChord} network tries to balance this by limiting the number of fingers while still maintaining a topology more robust than a simple tree.  We give the formal definition of \textsc{SkipChord} below.

\begin{definition}
For any $N \in \mathbb{N}$, let $\textsc{SkipChord}(N,s)$ be a graph with nodes $[N]$ and \emph{skip factor} $s$.  The edge set for $\textsc{SkipChord}(N,s)$ is defined as follows:
\begin{itemize}
    \item \emph{Ring edges}: For every node $i$, $0 \leq i < N$, add edges $(i-1 \mod N, i)$ and $(i, i+1 \mod N)$
    \item \emph{Finger edges}: For every node $j$, where $j = sk$, for $k=0,1,\ldots,(N-s)/s$, add edge $(j, j + 2^{k \mod \log N} \mod N)$.  We say the \emph{size} of this finger edge is $2^{k \mod \log N}$.
\end{itemize}
\end{definition}

This construction basically takes the $\log N$ fingers from each node in the original \textsc{Chord} and distributes them out over a range of nodes as determined by the \emph{skip factor} $s$.  As we shall show, by ``skipping'' the fingers, fewer virtual nodes in a real node's range have ``long'' outgoing links, and therefore the number of edges to other real nodes is limited while not compromising efficient routing.

\begin{figure}
\centering
\subfigure[Six Fingers of Node $0$ in $\textsc{Chord}(64)$]{
   \includegraphics[scale=0.25]{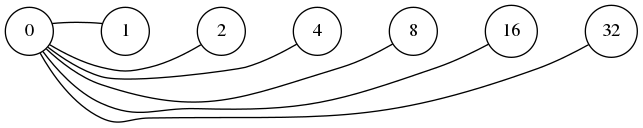}
}\hfill
\subfigure[First Six ``Skipped'' Fingers for $\textsc{SkipChord}(64,2)$]{
   \includegraphics[scale=0.25]{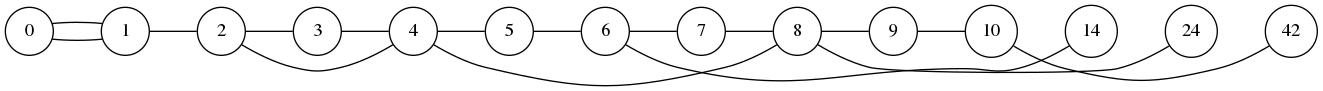}
}
\caption{(a) The neighborhood of node $0$ in \textsc{Chord}, and (b) the corresponding six fingers in \textsc{SkipChord}.  Note how the fingers in \textsc{SkipChord} are no longer all incident on node $0$ but instead have ``skipped'' ahead.}
\label{fig:skipchord}
\end{figure}

To better understand \textsc{SkipChord}, consider Figure~\ref{fig:skipchord}.  The network on the top shows node $0$'s neighborhood for $\textsc{Chord}(64)$, while the network on the bottom shows the corresponding fingers ``skipped'' with a skip factor of 2 (i.e. a subset of the edges for $\textsc{SkipChord}(64,2)$).  The first six fingers in \textsc{Chord} are all incident upon node $0$, while the first six fingers in \textsc{SkipChord} are distributed amongst nodes 0, 2, 4, 6, 8, and 10.  As a result of this, each node in \textsc{SkipChord} has a much smaller degree than in \textsc{Chord}.  This low degree result holds true for the degree of the embedding as well, as we show in the following section.

\subsection{Metrics}
We begin with a proof of the diameter of \textsc{SkipChord}.  The intuition behind our result is simple: any node is at most $s \cdot \log N$ away from an edge that at least halves the distance from itself to any other node.

\begin{lemma}
The diameter of an $N$-node \textsc{SkipChord} network with skip factor $s$ (\textsc{SkipChord}$(N,s)$) is $\mathcal{O}(s \cdot \log^2 N)$.
\end{lemma}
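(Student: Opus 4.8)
The plan is to exhibit, for any pair of nodes, a path of length $\mathcal{O}(s \log^2 N)$ between them, which upper-bounds the diameter. I will assume $N$ is a power of two, so that the finger sizes $2^0, 2^1, \ldots, 2^{\log N - 1}$ are exactly the powers of two up to $N/2$. Since the graph is undirected, for any pair $\{a,b\}$ I may route clockwise starting from whichever endpoint has the other within clockwise distance $d \le N/2$; hence it suffices to bound the length of a clockwise route covering an initial clockwise distance $d \le N/2$ and using only forward finger edges.

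The first step is a structural observation about where fingers of each size live. The finger nodes are exactly the multiples of $s$, namely $0, s, 2s, \ldots, N-s$, and the finger rooted at $j = sk$ has size $2^{k \bmod \log N}$. Thus, as $k$ increases, the finger sizes cycle through $2^0, 2^1, \ldots, 2^{\log N - 1}$ with period $\log N$. Consequently, any window of $s \log N$ consecutive ring positions contains exactly $\log N$ finger nodes (spaced $s$ apart), whose sizes are $\log N$ consecutive terms of this cyclic sequence and therefore realize every size $2^m$ with $0 \le m \le \log N - 1$. This is the fact that formalizes the intuition quoted before the lemma: from any node, a finger of any desired size sits within $s \log N$ ring hops ahead.

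With this in hand I would route greedily using a halving argument on the clockwise distance $d$ to the target. If $d \le 2 s \log N$, I simply walk along the ring to the target in $\mathcal{O}(s \log N)$ hops. Otherwise I set $m = \lfloor \log(d - s \log N) \rfloor$, walk at most $s \log N$ hops to the nearest clockwise finger node of size $2^m$ (which exists by the structural step), and traverse that finger. The conservative choice of $m$ guarantees that $w + 2^m \le d$ for any walk length $w \le s \log N$, so the route never overshoots the target, while $2^m > (d - s\log N)/2 > d/4$ forces the new distance below $3d/4$. Each such phase costs $\mathcal{O}(s \log N)$ hops and shrinks $d$ by a constant factor, so $\mathcal{O}(\log N)$ phases reduce $d$ from at most $N/2$ down to at most $2 s \log N$, after which the direct walk finishes; summing yields a route of length $\mathcal{O}(s \log N \cdot \log N) = \mathcal{O}(s \log^2 N)$.

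The main obstacle is the interaction between the ring-walk and the finger jump: unlike in plain \textsc{Chord}, reaching a finger of the right size itself consumes up to $s \log N$ of forward progress, which could carry the route past the target. This is precisely why I select $m$ against $d - s \log N$ rather than $d$, trading \textsc{Chord}'s clean factor-of-two reduction for a factor-of-$3/4$ reduction that still gives $\mathcal{O}(\log N)$ phases. A secondary detail to check is that the chosen size stays in range, i.e. $m \le \log N - 1$, which follows from $d \le N/2$.
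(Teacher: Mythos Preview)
Your proposal is correct and follows essentially the same strategy as the paper: walk $\mathcal{O}(s\log N)$ ring hops to reach a finger of the desired size, traverse it to cut the remaining distance by a constant factor, and iterate $\mathcal{O}(\log N)$ times. The paper's proof states this at the level of a sketch (``halve the distance''), whereas you additionally handle the overshoot issue by choosing $m$ against $d - s\log N$ and settling for a $3/4$ reduction; this extra care is warranted but does not constitute a different approach.
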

\begin{proof}
Consider the hops required to halve the distance between two arbitrary nodes $u$ and $v$.  In at most $\mathcal{O}(s \cdot \log N)$ hops from $u$ using ring edges, at least one finger of every size is reachable.  One of these fingers will at least halve the distance to the other node $v$.  As the distance to $v$ can be halved in $\mathcal{O}(s \cdot \log N)$ hops, and this halving will occur $\log N$ times before reaching $v$, our lemma holds.
\end{proof}

Next we consider the maximum degree of embedding.  Here we see how spreading the fingers out over a set of nodes has resulted in a lower maximum degree of embedding.

\begin{lemma}
The maximum degree of embedding of an $N$-node \textsc{SkipChord} with a skip factor of $s$ (\textsc{SkipChord}$(N,s)$) is $\mathcal{O}(\frac{N}{s \cdot \log N})$.
\end{lemma}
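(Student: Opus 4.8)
The plan is to bound the degree of an arbitrary real node $u$ in $\textsc{Avatar}_{\textsc{SkipChord}}(N,V)$ by counting the guest edges that leave $\mathit{range}(u)$, mirroring the structure of the \textsc{Chord} lemma but producing an upper bound rather than a matching lower bound. First I would observe that, by Definition~\ref{defn:avatar}, the degree of $u$ is at most two (for the Type-1 predecessor/successor edges) plus the number of distinct Type-2 neighbors, and each such neighbor requires at least one guest edge with exactly one endpoint in $\mathit{range}(u)$. Since $\mathit{range}(u)$ is a contiguous arc $[x,y) \subseteq [N]$, the guest \emph{ring} edges contribute at most two crossing edges (one at each endpoint of the arc). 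Hence it suffices to count the \emph{finger} edges of $\textsc{SkipChord}(N,s)$ that straddle the boundary of $[x,y)$, uniformly over all arcs (since any arc can arise as a range for some $V$).

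The key idea is to group the finger edges by \emph{size}. By definition the finger originating at $j = sk$ has size $2^{k \bmod \log N}$, so the fingers of a fixed size $2^m$ are exactly those with $k \equiv m \pmod{\log N}$; their origins $j = sm,\, sm + s\log N,\, sm + 2s\log N, \dots$ form an arithmetic progression with common difference $\delta = s\log N$. A finger $(j,\, j + 2^m \bmod N)$ is external to $[x,y)$ only if its length-$2^m$ arc contains exactly one of the two boundary gaps of $[x,y)$ (the gap just before $x$ or the gap just before $y$), which forces its origin $j$ to lie in one of the two windows $[x - 2^m, x)$ or $[y - 2^m, y)$, each of length $2^m$. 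An arithmetic progression of spacing $\delta$ meets a window of length $2^m$ in at most $2^m/\delta + 1$ points, so at most $2\bigl(2^m/(s\log N) + 1\bigr)$ fingers of size $2^m$ are external.

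Summing over the $\log N$ sizes $m = 0, 1, \dots, \log N - 1$ gives $\sum_m 2\bigl(2^m/(s\log N) + 1\bigr) = \frac{2}{s\log N}\sum_m 2^m + 2\log N = \frac{2(N-1)}{s\log N} + 2\log N = \mathcal{O}\!\bigl(\tfrac{N}{s\log N}\bigr)$, using $\sum_{m=0}^{\log N - 1} 2^m = N - 1$; adding the $\mathcal{O}(1)$ ring and Type-1 edges preserves the bound, yielding $\Delta_\textsc{A}(\textsc{SkipChord}(N,s)) = \mathcal{O}\!\bigl(\tfrac{N}{s\log N}\bigr)$. The hard part will be the modular bookkeeping at the boundary: a finger of size $2^m$ with $2^m$ comparable to or exceeding the arc length $y - x$ can wrap around the ring, so I must argue carefully that the ``origin lies in a length-$2^m$ window just before a boundary gap'' characterization still captures every external finger (and only over-counts harmlessly), and that the contributions of the two windows combine correctly even when a boundary coincides with the $0/N$ seam. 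Once this case analysis is pinned down, the geometric sum closes the argument.
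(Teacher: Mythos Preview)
Your argument is essentially the paper's: both count external finger edges by grouping them by size, exploit that the origins of each fixed size form an arithmetic progression of step $s\log N$, and close with a geometric sum dominated by the largest size. The paper phrases it as ``the largest external size contributes at most $|\mathit{range}(u)|/(s\log N)$ and each smaller size contributes half as many,'' whereas you make the two length-$2^m$ boundary windows explicit for every size; yours is the more carefully stated version of the same idea, and the modular bookkeeping you flag is indeed the only real work left.

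One small point to tighten: the step $\frac{2(N-1)}{s\log N} + 2\log N = \mathcal{O}\bigl(\tfrac{N}{s\log N}\bigr)$ is not unconditional---it requires $s = O(N/\log^2 N)$ so that the additive $2\log N$ (coming from the ``$+1$'' in each of the $\log N$ size classes) is absorbed. This holds in the regime the paper cares about (e.g., $s$ polylogarithmic), but you should either state that hypothesis or record the bound as $\mathcal{O}\bigl(\tfrac{N}{s\log N} + \log N\bigr)$.
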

\begin{proof}
Let an edge be called an \emph{external edge} for node $u$ if and only if it has exactly one endpoint in $range(u)$.  Note that there are at most 2 external ring edges for any possible range for node $u$.  We consider then the external finger edges in $range(u)$.  Let $k$ be the largest finger that has exactly one endpoint in $range(u)$.  There are at most $|range(u)| / s \cdot \log N$ such fingers, where $|range(u)|$ denotes the size of the range (for our embeddings, the number of nodes $u$ is hosting).  For the $k-1$ fingers, there are at most half as many as the $k$ fingers with exactly one endpoint in $range(u)$.  Similarly, for the $k-2$ fingers, there are at most $1/4$ as many as the $k$ fingers with exactly one endpoint in $range(u)$, and so on.  Summing these together, we get $(|range(u)| / (s \cdot \log N)) + 1/2(|range(u)| / (s \cdot \log N)) + 1/4(|range(u)| / (s \cdot \log N)) + \ldots = 2(|range(u)| / (s \cdot \log N))$.  Since $|range(u)|$ is at most $N$, our lemma holds.
\end{proof}

The above lemmas, combined with Theorem~\ref{thm:avatar_convergence} and Theorem~\ref{thm:avatar_degree-expansion} give us the following corollary.

\begin{corollary}
\label{corollary:skipchord}
The \textsc{Avatar} algorithm builds the $\textsc{SkipChord}(N,s)$ target topology in an expected $\mathcal{O}(s \cdot \log^3 N)$ rounds with an expected degree expansion of $\mathcal{O}(N/s)$.
\end{corollary}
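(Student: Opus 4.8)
The plan is to treat this corollary as a direct instantiation of the two general theorems, since by this point in the paper all the topology-specific work has already been done. I would set $T = \textsc{SkipChord}$ and recall the two immediately preceding lemmas, which establish $D(T_N) = \mathcal{O}(s \cdot \log^2 N)$ and $\Delta_\textsc{A}(T_N) = \mathcal{O}(N / (s \cdot \log N))$. Before substituting, I would briefly verify that the hypotheses of Theorems~\ref{thm:avatar_convergence} and~\ref{thm:avatar_degree-expansion} are met, namely that $\textsc{SkipChord}(N,s)$ (for a fixed skip factor $s$) forms a full graph family in the sense of Definition~\ref{defn:avatar}: for each $N \in \mathbb{N}$ its definition specifies exactly one graph on the node set $\{0, 1, \ldots, N-1\}$, so the \textsc{Avatar} embedding and hence both theorems apply.

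With the prerequisites confirmed, the remainder is arithmetic substitution. For the convergence time, I would plug the diameter lemma into Theorem~\ref{thm:avatar_convergence} to obtain an expected running time of $\mathcal{O}(D(T_N) \cdot \log N) = \mathcal{O}(s \cdot \log^2 N \cdot \log N) = \mathcal{O}(s \cdot \log^3 N)$ rounds. For the degree expansion, I would plug the maximum-degree-of-embedding lemma into Theorem~\ref{thm:avatar_degree-expansion} to obtain $\mathcal{O}(\Delta_\textsc{A}(T_N) \cdot \log N) = \mathcal{O}\!\left(\frac{N}{s \cdot \log N} \cdot \log N\right) = \mathcal{O}(N/s)$ in expectation, where the $\log N$ factors cancel exactly.

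Honestly, there is no real obstacle here: the substantive content lives entirely in the two lemmas on \textsc{SkipChord}'s diameter and maximum degree of embedding, and in the general Theorems~\ref{thm:avatar_convergence} and~\ref{thm:avatar_degree-expansion} proved earlier. The only point deserving a sentence of care is the full-graph-family check above, since the theorems are quantified over full graph families and one should not silently assume \textsc{SkipChord} qualifies. Everything else is a one-line composition of asymptotic bounds, and I would present it as such rather than expanding any of the constants.
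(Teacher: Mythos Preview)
Your proposal is correct and matches the paper's own treatment exactly: the corollary is stated as an immediate consequence of the two preceding lemmas combined with Theorems~\ref{thm:avatar_convergence} and~\ref{thm:avatar_degree-expansion}, with no separate proof given. Your explicit verification that \textsc{SkipChord} forms a full graph family is a small but welcome addition that the paper leaves implicit.
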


Note that we can select a skip factor in such a way as to have efficient time and space complexity.  For instance, if we select a skip factor of $\log N$, we have polylogarithmic convergence time, sublinear degree expansion, and a target topology that is more robust than the tree topologies of \textsc{Linear} and \textsc{Cbt}.

Given the fact that \textsc{SkipChord} can be built efficiently, there are multiple suitable applications for it.  Like \textsc{Chord}, \textsc{SkipChord} can be used as a distributed hash table for storing and retrieving files, particularly in settings where transient failures may cause node and link failures, such as with unreliable Internet connections or in scenarios where a large number of nodes may be deployed and start up at the same time with the goal of forming a distributed hash table.  If node identifiers are uniformly distributed, each real node would host a similar number of files as with the original \textsc{Chord} while having the added benefit of efficient stabilization after any transient faults.

\section{Discussion and Future Work}
Table~\ref{table:measures} summarizes the various measures of interest for the different target topologies.

\begin{table}
\begin{center}
\begin{tabular}{|c|c|c|}
  \hline
  \textbf{Topology} & \textbf{Diameter} & \textbf{Maximum Degree of Embedding} \\
  \hline
  \textsc{Linear} & $\mathcal{O}(N)$ & $\mathcal{O}(1)$ \\
  \hline
  \textsc{Cbt} & $\mathcal{O}(\log N)$ & $\mathcal{O}(\log N)$ \\
  \hline
  \textsc{Chord} & $\mathcal{O}(\log N)$ & $\mathcal{O}(N)$ \\
  \hline
  \textsc{SkipChord} & $\mathcal{O}(s \cdot \log^2 N)$ & $\mathcal{O}(N / (s\cdot \log N))$ \\
  \hline
\end{tabular}
\end{center}
\caption{Relevant Metrics for Various Overlay Topologies}
\label{table:measures}
\end{table}

Besides providing a simple way to build and analyze self-stabilizing overlay networks, our analysis provides a set of parameters that a designer can tune to achieve a target level of efficiency.  One application of our work, then, is in guiding the creation of new topologies that use the \textsc{Avatar} embedding and strive for low diameter \emph{and} low maximum degree of embedding while still maintaining other desirable properties like robustness to node or link failure.  We have demonstrated this process with the creation of the \textsc{SkipChord} topology.  It would be interesting to see how other topologies perform in this framework.

Future work could also consider variations to our framework.  For instance, currently we have only considered spanning trees created by finding the shortest path from $0$ to every node.  Note, however, that this would not have to be the case.  One could imagine spanning trees of larger diameter but smaller maximum degree of embeddings that might help lower the degree complexity of the algorithm.

Finally, our framework can be used to better understand the upper and lower bounds for the work or degree expansion of self-stabilizing overlay network protocols.  While our results deal entirely with network embeddings, it would be interesting to see if the provided insights help make general bounds for any network, embedded or not.

\bibliographystyle{splncs04}
\bibliography{overlays}

\appendix
\appendixpage
\let\cleardoublepage\clearpage
\section{Updated Analysis of Berns~\cite{berns_avatar_15}: Convergence Time}
In this appendix, we restate the original algorithms of Berns~\cite{berns_avatar_15} and provide an updated analysis for any topology.  As stated in the original paper, our analysis depends upon two measures: the diameter of a shortest-path spanning tree rooted at virtual node $0$ embedded onto the target topology $T$ ($D(T_N)$), and the maximum degree of embedding of the target topology $T$ ($\Delta_\textsc{A}(T_N)$).

\subsection{Clustering}
The first step in the algorithm is for real nodes to form clusters.  These clusters will be used to coordinate convergence actions so that convergence time and degree expansion may remain limited.  We shall use as our cluster topology an embedding of the target topology onto a subset of nodes.  More specifically, we shall say a group of nodes $V'$ is part of a cluster if the subgraph induced by $V'$ is a legal configuration of the target network.

To manage cluster membership, we introduce into each node's state a \emph{cluster identifier}, which we shall set as the identifier of the real node which hosts the root of the spanning topology for that cluster.

Clusters are used to coordinate the convergence process to limit degree expansion while also ensuring fast convergence.  This intra-cluster coordination requires a mechanism to communicate with all nodes in a cluster.  As was done with $\textsc{Avatar}_\textsc{Cbt}$, we will use a non-snap-stabilizing variant of the propagation of information with feedback and cleaning (PFC) algorithm~\ref{algo:pfc}.  Communication happens through PFC waves that are initiated by the cluster's root.  The information is sent level-by-level until reaching the leaves, at which point a feedback wave returns back up the tree to the root, collecting any information regarding a reply and ``cleaning'' the tree to prepare it for the next wave of communication.

\subsubsection{Algorithms}
Algorithm \ref{algo:reset} presents the simple reset algorithm nodes execute to ensure membership in a cluster.

\begin{algorithm}
\begin{tabbing}
........\=....\=....\=....\=....\=....\=....\kill
1.\>\textbf{if} a \emph{reset fault} is detected \textbf{and}\\
\>\>did not reset in previous round \textbf{then}\\
2.\>\>$\mathit{clusterSucc}_u = \bot$; $\mathit{clusterPred}_u = \bot$; $\mathit{cluster}_u = \mathit{id}_u$;\\
3.\>\textbf{fi}
\end{tabbing}
\caption{The Reset Algorithm}
\label{algo:reset}
\end{algorithm}

The modified $PFC$ algorithm which will be executed on the spanning tree of the guest network is given in Algorithm \ref{algo:pfc}.  We use $I$ to represent the information being passed down, and $F$ to represent the information that should be fed back.  We also introduce the idea of a \emph{feedback action}, which allows us to specify node behavior that should be executed before completing the feedback portion of the $PFC$ wave.  In particular, we will use this action for a leader cluster to ``wait'' for followers to be ready to receive a merge partner, as well as for maintaining the faulty bit after a merge.

\begin{algorithm}
\begin{tabbing}
........\=....\=....\=....\=....\=....\=....\kill
\textbf{Variable for Node $a$:} $\mathit{PFCState}_a$\\
1.\>\textbf{when} node $\mathit{root}$ satisfies $\mathit{PFCState}_{\mathit{Children}(\mathit{root})} = \mathit{PFCState}_{\mathit{root}} = \mathit{Clean}$ \textbf{then}\\
2.\>\>$\mathit{root}_T$ initiates $\mathit{PFC}$ wave by setting $\mathit{PFCState}_{\mathit{root}} = \mathit{Propagate}(I)$\\
3.\>\>Each node $a$ executes the following:\\
4.\>\>\>\textbf{if} $\mathit{PFCState}_a = \mathit{Clean} \wedge \mathit{PFCState}_{\mathit{Parent}(a)} = \mathit{Propagate}(I) \wedge$\\
\>\>\>\>$\mathit{PFCState}_{\mathit{Children}(a)} = \mathit{Clean}$ \textbf{then}\\
5.\>\>\>\>\>$\mathit{PFCState}_a = \mathit{Propagate}(I)$\\
6.\>\>\>\textbf{else if} $\mathit{PFCState}_a = \mathit{Propagate}(I) \wedge \mathit{PFCState}_{\mathit{Parent}(a)} = \mathit{Propagate}(I) \wedge$\\
\>\>\>\>$\mathit{PFCState}_{\mathit{Children}(a)} = \mathit{Feedback}(F)$ \textbf{then}\\
7.\>\>\>\>\>$\mathit{PFCState}_a = \mathit{Feedback}(F')$\\
8.\>\>\>\textbf{else if} $\mathit{PFCState}_a = \mathit{Feedback}(F) \wedge \mathit{PFCState}_{\mathit{Parent}(a)} = \mathit{Feedback}(F) \wedge$\\
\>\>\>\>$\mathit{PFCState}_{\mathit{Children}(a)} = \mathit{Clean}$ \textbf{then}\\
9.\>\>\>\>\>$\mathit{PFCState}_a = \mathit{Clean}$\\
10.\>\>\>\textbf{fi}\\
11.\>\textbf{fi}
\end{tabbing}
\caption{Subroutine for $\mathit{PFC}(I,F)$ for spanning tree on guest network}
\label{algo:pfc}
\end{algorithm}

We can use the full definition of the $PFC$ mechanism to define the following particular type of cluster.

\begin{definition}
A set of nodes $C$ is called a \emph{proper cluster} for target topology $T$ when, for each $u \in C$, (i) the neighborhood of $u$ induced by nodes in $C$ matches $u$'s neighborhood in $\textsc{Avatar}_T(N,C)$, (ii) cluster successors and predecessors of $u$ are consistent with $u$'s successor and predecessor in the graph induced by nodes in $T$, (iii) all nodes in $C$ have the same correct cluster identifier, (iv) no node in $C$ neighbors a node $v \notin C$ such that $\mathit{cluster}_v = \mathit{cluster}_u$, and (v) the communication mechanism is not faulty.
\end{definition}

\subsubsection{Analysis of Clustering}
Given the definitions and algorithms above, we provide the full analysis of the clustering portion of our algorithm.

\begin{lemma}
If the root of a proper cluster $C$ initiates the $\mathit{PFC}(I, F)$ wave, the $\mathit{PFC}$ wave is complete (all nodes receive the information, the root receives the feedback, and nodes are ready for another $\mathit{PFC}$ wave) in $\mathcal{O}(D(T_N))$ rounds, where $D(T_N)$ is the diameter of the $N$-node target topology $T$.
\label{dlemma:pfc}
\end{lemma}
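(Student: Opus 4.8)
The plan is to bound separately the durations of the three phases of the $\mathit{PFC}$ wave --- propagation, feedback, and cleaning --- by the height of the spanning tree on which the wave runs, and then to observe that this height is $\mathcal{O}(D(T_N))$. Each phase corresponds to one sweep of a state-change wave through the tree, and in the synchronous model a wave advances exactly one tree level per round, so the total cost is a constant number of tree traversals.

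First I would fix the spanning tree. By construction it is the shortest-path tree rooted at virtual node $0$, so its height equals the eccentricity of node $0$ in $T_N$, which is at most the diameter $D(T_N)$; since the \textsc{Avatar} embedding is dilation-1, distances are preserved and the same bound holds for the tree as realized in the host network. Write $h = \mathcal{O}(D(T_N))$ for this height. The proper-cluster hypothesis is what lets me treat the tree as fixed and uncorrupted throughout the wave: clause (v) guarantees the communication mechanism is not faulty, and clauses (i)--(iv) guarantee that the parent/child relationships and cluster membership are consistent, so no node holds a spurious $\mathit{PFCState}$ value that could stall or misdirect the wave.

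Next I would establish each phase by induction on tree depth. For propagation I examine the guard at lines~4--5 of Algorithm~\ref{algo:pfc}: a node enters $\mathit{Propagate}$ in the round after its parent is $\mathit{Propagate}$ and its children are $\mathit{Clean}$; an induction on depth then shows every node at depth $d$ reaches $\mathit{Propagate}$ within $\mathcal{O}(d)$ rounds, so all leaves are reached in $\mathcal{O}(h)$ rounds. The feedback phase (guard at lines~6--7) reverses the sweep: once a node's children are all in $\mathit{Feedback}$ and its parent is still $\mathit{Propagate}$ it transitions to $\mathit{Feedback}$, so the wave ascends level by level and the root receives the feedback after a further $\mathcal{O}(h)$ rounds, with the \emph{feedback action} contributing only constant per-node delay in the basic case. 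Finally, the cleaning phase (guard at lines~8--9) descends from the root, returning each node to $\mathit{Clean}$ in another $\mathcal{O}(h)$ rounds, after which all nodes satisfy the root's re-initiation condition and the cluster is ready for the next wave.

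Summing the three sweeps gives $3 \cdot \mathcal{O}(h) = \mathcal{O}(D(T_N))$ rounds. The main obstacle I anticipate is making these inductions precise under the synchronous message-passing semantics, in which each node observes its neighbors' states with a one-round delay: the argument must confirm that the three sweeps remain temporally disjoint, so that the guards on lines~4, 6, and 8 fire in the intended order rather than interfering at the boundary between adjacent phases. This is exactly where the proper-cluster assumption earns its keep, since it rules out the intermediate configurations that could otherwise let a cleaning sweep overtake a lagging feedback sweep.
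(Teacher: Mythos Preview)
Your decomposition into three sweeps, each bounded by the tree height $h = \mathcal{O}(D(T_N))$, is exactly the paper's argument, and your use of the proper-cluster hypothesis to justify the level-by-level progression is sound.

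One detail to fix: the cleaning sweep does not descend from the root. The guard at lines~8--9 requires $\mathit{PFCState}_{\mathit{Children}(a)} = \mathit{Clean}$ before $a$ can transition to $\mathit{Clean}$, so cleaning must proceed bottom-up, from the leaves toward the root, trailing the feedback wave by a constant lag. The paper makes this explicit: a leaf enters $\mathit{Feedback}$, one round later its parent enters $\mathit{Feedback}$, and in the following round the leaf enters $\mathit{Clean}$; in general a node cleans two rounds after it feeds back. Your asymptotic bound is unaffected, but the induction you propose for the cleaning phase should be on height above the leaves rather than depth below the root, and your worry about a cleaning sweep ``overtaking'' a lagging feedback sweep disappears once you observe that cleaning is pipelined behind feedback rather than launched as an independent top-down pass.
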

\begin{proof}
After the root initiates the $\mathit{PFC}(I,F)$ wave, in every round the information $I$ moves from level $k$ to $k+1$ until reaching a leaf.  As the cluster has $\mathcal{O}(D(T_N))$ levels, after $\mathcal{O}(D(T_N))$ rounds, all nodes have received the propagation wave and associated information.  Upon receiving the propagation wave, leaves set their $\mathit{PFC}$ states to $\mathit{Feedback}$ to begin the feedback wave.  Again, in every round the feedback wave moves one level closer to the root, yielding an additional $\mathcal{O}(D(T_N))$ rounds before the root has received the feedback wave.  Consider the transition to $\mathit{PFC}$ state $\mathit{Clean}$.  When a leaf node $b$ sets its $\mathit{PFC}$ state to $\mathit{Feedback}$, in one round the parent of $b$ will set its state to $\mathit{Feedback}$, and in the second round, $b$ will set its state to $\mathit{Clean}$.  The process then repeats for the parent of $b$.  In general, two rounds after a node $b$ is in state $\mathit{Feedback}$, it transitions to state $\mathit{Clean}$.  Therefore, $\mathcal{O}(D(T_N))$ rounds after the root initiates a $\mathit{PFC}$ wave, all nodes receive the propagation wave, return the feedback wave, and move back to a clean state, ready for another $\mathit{PFC}$ wave.
\end{proof}

\begin{lemma}
Let node $b$ be a member of a proper cluster $C$.  Node $b$ can only execute a reset action if $C$ begins the merging process from Algorithm \ref{algo:merge}.
\label{dlemma:merge_reset}
\end{lemma}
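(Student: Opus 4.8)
The plan is to prove the contrapositive: so long as $C$ remains a proper cluster and does not begin the merge of Algorithm~\ref{algo:merge}, node $b$ never detects a reset fault, and hence---by line~1 of Algorithm~\ref{algo:reset}---never executes a reset action. The central observation is that a reset fault is precisely the \emph{local detection} of a violation of the \textsc{Avatar} legality predicate; because $\textsc{Avatar}_T(N,C)$ is locally checkable, this predicate can be written as a conjunction $P$ of conditions that each node verifies from its own state together with that of its neighbors. First I would make the correspondence explicit: the clauses whose failure constitutes a reset fault at $b$ are exactly the negations of the conditions (i)--(v) in the definition of a proper cluster---namely a mismatch between $b$'s induced neighborhood and its prescribed neighborhood in $\textsc{Avatar}_T(N,C)$, an inconsistency in $b$'s cluster successor or predecessor pointers, a cluster identifier at $b$ disagreeing with that of a cluster neighbor, an edge from $b$ to a foreign node carrying $b$'s own cluster identifier, or a fault in the PFC communication state. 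Since $b \in C$ and $C$ is proper, every clause of $P$ holds at $b$, so no reset fault is present in the current round.

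It then remains to argue by induction on the rounds that $P$ is \emph{preserved} until the merge begins, so that the reset fault never becomes true. The only actions available to $b$ are the reset action, the PFC subroutine of Algorithm~\ref{algo:pfc}, and the merge of Algorithm~\ref{algo:merge}. By the induction hypothesis $P$ holds, so the reset action does not fire and cannot itself introduce a violation. The PFC subroutine, by Lemma~\ref{dlemma:pfc}, only advances the $\mathit{PFCState}$ variables along the spanning tree through the prescribed $\mathit{Clean}$/$\mathit{Propagate}$/$\mathit{Feedback}$ cycle; it modifies neither the overlay edges, the successor and predecessor pointers, nor the cluster identifiers, and its intermediate states are exactly those permitted by condition~(v), so it leaves $P$ intact. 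Hence the merge is the only action that alters the structural state on which $P$ depends, and therefore $b$ can acquire a reset fault, and reset, only once $C$ has begun merging.

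The step I expect to be the main obstacle is the matching phase that immediately precedes a merge, during which an edge is added between the root of $C$ and the root of a neighboring cluster. This creates an edge from a node of $C$ to a node \emph{outside} $C$, and I must confirm that the locally-checkable predicate does not misread it as an illegal extra edge that would trigger a reset. The care here is twofold: to show that such a matching edge always connects to a node whose cluster identifier differs from $C$'s---so condition~(iv) is respected---and to verify that $P$ explicitly sanctions these inter-root matching edges rather than counting them as topology violations. Establishing that the polling and matching waves introduce no reset-inducing inconsistency, i.e. that every transient edge or state they create is accounted for by the legality predicate, is the delicate part; once that is in place, the invariance argument above closes the proof.
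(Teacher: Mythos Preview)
Your approach is correct and matches the paper's own proof in spirit: the paper's argument is a three-sentence version of exactly what you outline---cluster structure is unchanged except by merge actions, and the PFC mechanism alone never induces a reset fault in a proper cluster, so no reset occurs without a merge. Your contrapositive-plus-induction framing is simply a more explicit rendering of that same idea.

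The one place you go beyond the paper is your flagged concern about the matching phase (inter-root edges added during Algorithms~\ref{algo:lead}--\ref{dalgo:follow}). The paper's proof does not address this at all; it implicitly folds matching edges into ``not a change to cluster structure'' and moves on. Your instinct that this needs justification is sound, and the resolution is the one you sketch: matching edges always connect roots with \emph{different} cluster identifiers, so condition~(iv) is not violated, and the legality predicate treats these as sanctioned inter-cluster edges rather than illegal intra-cluster topology. So your proposal is not only aligned with the paper's proof but is actually more careful about a point the paper elides.
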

\begin{proof}
To begin, notice that the cluster structure is not changed unless a node executes a merge action.  Furthermore, the $\mathit{PFC}$ algorithm will never cause a reset fault in a proper cluster.  Therefore, no reset is executed unless a merge action is executed.
\end{proof}

\begin{lemma}
Let a virtual node $r_C$ be a node whose immediate neighborhood matches the neighborhood of a correct root node ($r_C$ has a consistent $\mathit{PFC}$ state and correct cluster neighborhood).  If $r_C$ initiates a $\mathit{PFC}(I, F)$ wave and later receives the corresponding feedback wave, then $\mathit{r}_C$ is the root of a proper cluster.
\label{dlemma:pfc_works}
\end{lemma}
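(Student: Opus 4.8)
The plan is to argue by contradiction: assuming $r_C$ completes a full $\mathit{PFC}(I,F)$ wave, I would show that every node touched by the wave must satisfy conditions~(i)--(v) of the proper-cluster definition, since any violation would have stalled the propagation or corrupted the feedback before it could return to $r_C$. The backbone is an induction on the depth of the spanning tree rooted at $r_C$, driven entirely by the transition guards of Algorithm~\ref{algo:pfc}.

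First I would establish that completion of the wave forces a consistent $\mathit{PFC}$ structure throughout the tree. By hypothesis $r_C$ starts with a consistent $\mathit{PFC}$ state; for the root to receive feedback, lines~6--7 of Algorithm~\ref{algo:pfc} require each of its children to have reached $\mathit{Feedback}$, which by the same guard requires \emph{their} children to have done so, and so on down to the leaves. Propagating this inductively from the root, every participating node must have passed cleanly through the $\mathit{Clean}\to\mathit{Propagate}\to\mathit{Feedback}\to\mathit{Clean}$ cycle with its parent and children in the matching states, which is exactly condition~(v). It also identifies $C$ with the set of nodes reached by the wave; since the cluster identifier is defined as the host identifier of the root and $r_C$ is locally a correct root, every such node shares $r_C$'s own (hence correct) cluster identifier, giving condition~(iii).

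Next I would promote this $\mathit{PFC}$-level consistency to structural correctness of the embedding, namely conditions~(i) and~(ii). The parent and child pointers referenced in Algorithm~\ref{algo:pfc} are fixed by the shortest-path spanning tree on $\textsc{Avatar}_T(N,C)$, so a node whose induced cluster neighborhood fails to match its prescribed $\textsc{Avatar}_T(N,C)$ neighborhood, or whose successor/predecessor pointers are inconsistent, has an ill-defined position in the tree. Because $\textsc{Avatar}$ is locally checkable given $N$, such a node detects a reset fault and, by Algorithm~\ref{algo:reset}, clears its cluster successor, predecessor, and identifier; this removes it from the wave and stalls the propagation or feedback before it reaches $r_C$. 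Since the wave did complete, no participating node detected such a fault, so conditions~(i) and~(ii) hold throughout $C$.

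The step I expect to be the main obstacle is condition~(iv)---ruling out an external neighbor $v\notin C$ with $\mathit{cluster}_v=\mathit{cluster}_u$ for some $u\in C$---because such a $v$ need not lie on the spanning tree and is therefore never exercised by the downward or upward wave. Here I would again invoke local checkability: the proper-cluster definition pins down exactly which same-cluster neighbors each $u\in C$ may have, namely its neighbors in $\textsc{Avatar}_T(N,C)$, so an extraneous same-cluster neighbor would give $u$ an induced cluster neighborhood strictly larger than the prescribed one. This is once more a locally detectable reset fault, which by the reasoning above is incompatible with $u$ maintaining the consistent $\mathit{PFC}$ state demanded by a completed wave. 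Collecting conditions~(i)--(v) then yields that $r_C$ is the root of a proper cluster.
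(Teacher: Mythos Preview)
Your proposal is correct and takes essentially the same approach as the paper: both argue that the $\mathit{PFC}$ wave can only be forwarded by nodes whose cluster neighborhood and $\mathit{PFC}$ state are locally consistent, so a completed wave returning to $r_C$ certifies the proper-cluster conditions. Your version is more explicit---walking through conditions~(i)--(v) individually and naming local checkability plus the reset mechanism as what blocks a faulty node from participating---whereas the paper compresses the same observation into two sentences.
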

\begin{proof}
Every node will only continue to forward the propagate and feedback waves if (i) they have the appropriate cluster neighbors, and (ii) their $\mathit{PFC}$ states are consistent.  Therefore, if $r_C$ receives a feedback wave, $C$ must be a proper cluster.
\end{proof}

We classify clusters based upon their $\mathit{PFC}$ state next.

\begin{definition}
Let $C$ be a proper cluster.  $C$ is a \emph{proper clean cluster} if and only if all nodes in $C$ have a $\mathit{PFC}$ state of $\mathit{Clean}$.
\end{definition}

Let $\mathcal{F}(G_i)$ be all configurations reached by executing our algorithm beginning in configuration $G_i$ (that is, the set of future configurations).  We now bound the  occurrences of reset after a given configuration.

\begin{lemma}
If node $b$ is a member of a proper unmatched clean cluster $C$ in configuration $G_i$, then $b$ will never execute a reset in any configuration $G_j \in \mathcal{F}(G_i)$.
\label{dlemma:clean_merge}
\end{lemma}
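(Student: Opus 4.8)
The plan is to build on Lemma~\ref{dlemma:merge_reset}, which already tells us that while $b$ sits in a proper cluster a reset can only be triggered when that cluster begins the merge of Algorithm~\ref{algo:merge}. So the entire argument reduces to controlling what happens at merges: I must show that every merge $b$'s cluster participates in, starting from the proper clean unmatched cluster $C$ of $G_i$, is a \emph{correct} merge that never causes any participating node to detect a reset fault, and that after such a merge $b$ again belongs to a proper (and, after the ensuing cleaning, proper clean) cluster. The lemma then follows by induction on the number of merges $b$'s cluster undergoes after $G_i$.

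First I would dispose of the ``unmatched'' phase. A cluster with no merge partner cannot enter the merge subroutine, so while $C$ remains unmatched it executes only PFC waves (for polling and matching). By the argument of Lemma~\ref{dlemma:merge_reset} the PFC mechanism never produces a reset fault in a proper cluster, and by Lemma~\ref{dlemma:pfc} each such wave completes and returns $C$ to a clean state while leaving its structure unchanged. Hence $C$ stays a proper clean cluster and $b$ executes no reset until the instant $C$ is matched and a merge begins. The fact that $C$ was clean and unmatched at $G_i$ is precisely what guarantees that any match it later receives is established afresh from a consistent state, so the subsequent merge starts from a legitimate configuration rather than a corrupted one.

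The heart of the proof is the merge step itself. Here I would argue that because $C$ is proper and clean at the moment the merge starts, the merge propagates level-by-level down the spanning tree exactly as Algorithm~\ref{algo:merge} prescribes: partners connect and update successor and predecessor pointers, one node's responsible range shrinks, out-of-range guest nodes are handed to the partner, children are connected, and the process repeats until it reaches the leaves. The key claim is that each intermediate ``partial merge'' configuration is still \emph{locally} consistent with a legal partial cluster at every participating node, so none of the proper-cluster invariants (i)--(v) is violated in a way the reset-fault detector can observe. Consequently no node executes a reset during the merge, and when the wave reaches the leaves the two clusters have become a single larger legal cluster containing $b$; the following clean PFC wave then returns it to a proper clean cluster, and applying the same reasoning to this new cluster completes the induction.

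I expect the main obstacle to be exactly this last claim --- that the transitional states of a correct merge are never mistaken for genuine faults. This requires a careful induction over the levels of the spanning tree, synchronized with the rounds of the merge, showing that at each round every node's neighborhood and successor/predecessor data match what a node at that stage of a legitimate merge \emph{should} have, so that the reset-fault predicate evaluates to false throughout. In other words, the delicate part is not any counting or timing estimate but verifying that fault detection is ``merge-aware'' enough to distinguish an in-progress correct merge from a corrupted configuration; once this is established, the reduction through Lemma~\ref{dlemma:merge_reset} and the inductive packaging are routine.
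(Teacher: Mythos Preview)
Your overall framework is the paper's: invoke Lemma~\ref{dlemma:merge_reset} to reduce to merges, then argue each merge $b$ participates in completes correctly and leaves $b$ in a proper clean cluster, and induct. That part is fine.

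The gap is in the merge step. Your argument focuses entirely on $C$: you reason that because $C$ is proper and clean when the merge begins, the level-by-level process unfolds correctly, and you identify as the ``main obstacle'' showing that the transitional states of a correct merge are not flagged as faults. But this presupposes the merge \emph{is} correct, which requires the \emph{partner} cluster $C'$ to be proper as well. Nothing in your proposal establishes that. Saying that $C$'s match ``is established afresh from a consistent state'' speaks only to $C$'s side; an adversarial initial configuration could leave $C'$ structurally corrupted even though $C$ is pristine and the matching edge was freshly created. If $C'$ is not proper, Algorithm~\ref{algo:merge} can fail partway through and force resets in both clusters, including at $b$.

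The paper closes exactly this gap, and with much less work than the level-by-level induction you anticipate. The key observation is that before any cluster edges are modified, both $C$ and $C'$ must successfully complete a $\mathit{PFC}(\mathit{Prep}(C,C'),\bot)$ wave. By Lemma~\ref{dlemma:pfc_works}, a root that receives the feedback of a PFC wave it initiated is necessarily the root of a proper cluster; hence if $C'$ were not proper, its Prep wave would never complete and the merge would never start. So both partners are certified proper before any structural change, and then Lemma~\ref{dlemma:merge} handles the merge's correctness wholesale. You should replace your proposed intra-merge induction with this Prep-wave certification of $C'$ and a direct appeal to Lemmas~\ref{dlemma:pfc_works} and~\ref{dlemma:merge}.
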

\begin{proof}
By Lemma \ref{dlemma:merge_reset}, only a merge can cause node $b$ to execute a reset fault.  We show that any merge $b$ participates in must be between two proper clusters, and therefore completes correctly (Lemma \ref{dlemma:merge}).

Suppose the root of $C$ has been matched with the root of another cluster $C'$.  $C$ cannot begin modifying its cluster edges for the merge until both $C$ and $C'$ have successfully completed the $\mathit{PFC}(Prep(C,C'),\bot)$ wave.  Suppose $C'$ was not a proper cluster.  In this case, $C'$ would not successfully complete the $\mathit{PFC}$ wave (Lemma \ref{dlemma:pfc_works}), and $C$ would not begin a merge with $C'$.  Therefore, if $C$ and $C'$ merge together, both must be proper clusters, implying the merge completes successfully and $b$ is again a member of a clean proper cluster $C''$ (see Lemma \ref{dlemma:merge}).
\end{proof}

\begin{lemma}
Consider a node $b$ that is not a member of a proper cluster in configuration $G_i$.  In $\mathcal{O}(D(T_N))$ rounds, $b$ is a member of a proper unmatched clean cluster.
\label{dlemma:reset_to_clean}
\end{lemma}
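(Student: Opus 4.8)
The plan is to use the fault detection embedded in the reset algorithm (Algorithm~\ref{algo:reset}) together with the local checkability of $\textsc{Avatar}_T$ to force $b$ into a fresh single-node cluster, and then to verify directly that such a singleton stabilizes into a proper, clean, unmatched cluster within the claimed time.

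First I would locate the fault. Writing $c = \mathit{cluster}_b$, consider the set $C = \{v : \mathit{cluster}_v = c\}$ of nodes sharing $b$'s cluster identifier. If $C$ were a proper cluster then $b$ would belong to a proper cluster, contradicting the hypothesis; hence $C$ violates at least one of the local conditions (i)--(v) of the proper-cluster definition. Because each condition is locally checkable, some node $u$ (in $C$ or incident to it) detects a reset fault, and once the guard ``did not reset in the previous round'' is satisfied it executes the reset, setting $\mathit{cluster}_u = \mathit{id}_u$ and clearing its cluster pointers. I would then argue that this reset cascades: as $u$ leaves cluster $c$, each former cluster neighbor $w$ sees a neighborhood inconsistent with $\textsc{Avatar}_T(N,C)$ and a dangling cluster pointer, detects a reset fault, and resets in turn, so a reset wave advances along the cluster one hop per round. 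Detection reaching $b$ propagates along cluster links no slower than a PFC wave, and so completes in $\mathcal{O}(D(T_N))$ rounds by Lemma~\ref{dlemma:pfc}; thus within $\mathcal{O}(D(T_N))$ rounds $b$ resets to a single-node cluster hosting a full $N$-node copy of $T$.

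Finally I would check that the singleton $\{b\}$ is a proper unmatched clean cluster. The host-induced neighborhood of $b$ is empty, so (i) holds vacuously; $\mathit{clusterSucc}_b = \mathit{clusterPred}_b = \bot$ is consistent for a lone node, giving (ii); $\mathit{cluster}_b = \mathit{id}_b$ is the unique correct identifier, giving (iii); (iv) holds because identifiers are unique and any neighbor still advertising $\mathit{id}_b$ is itself improper and swept up by the same wave; and with no merge partner recorded the cluster is unmatched. For cleanliness and condition (v), the guest network that $b$ now hosts must reach an all-$\mathit{Clean}$ PFC state; since this guest network is a copy of $T$ whose spanning tree has depth $\mathcal{O}(D(T_N))$, the non-snap-stabilizing PFC cleans it in a further $\mathcal{O}(D(T_N))$ rounds, keeping the total at $\mathcal{O}(D(T_N))$. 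Once $b$ is in such a cluster, Lemma~\ref{dlemma:clean_merge} guarantees it stays there until a merge.

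The step I expect to be the main obstacle is this last one: ruling out that the newly reset singleton immediately re-detects a fault and resets again, so that it genuinely settles into a proper clean cluster rather than oscillating. This hinges on the consecutive-reset guard of Algorithm~\ref{algo:reset} giving the singleton a round to be observed, and on the reset installing a fresh copy of $T$ whose PFC states clean themselves through ordinary PFC transitions rather than through structural resets, so that conditions (i)--(v) ultimately hold simultaneously.
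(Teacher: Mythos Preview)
Your overall strategy---locate a detecting node, let resets cascade, and conclude that $b$ ends up in a fresh singleton cluster---matches the paper's shape, but there is a real gap in the first step.

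You assert that because conditions (i)--(v) are locally checkable, some node $u$ \emph{detects a reset fault}. That inference is not valid for this algorithm. A node whose cluster neighborhood violates (i) need not be in a reset-fault state: it may instead believe it is in the middle of a merge (Algorithm~\ref{algo:merge}), in which case it tolerates an illegal neighborhood and continues executing merge actions rather than resetting. Since the initial configuration is adversarial, $b$ (or the detecting node along the path to $b$) may be frozen in such a partially-executed merge, and your cascade argument never gets started. The paper's proof handles this case explicitly: if a node's neighborhood is illegal but no reset fault is flagged, the node must be merging; it then argues that the level-by-level merge either completes correctly or produces a genuine reset fault within $\mathcal{O}(D(T_N))$ rounds, after which the reset cascade you describe can proceed. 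You need an analogous argument to close the gap.

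A second, smaller issue is your distance bound. You invoke Lemma~\ref{dlemma:pfc} to say the cascade reaches $b$ in $\mathcal{O}(D(T_N))$ rounds, but that lemma applies to PFC waves on a \emph{proper} cluster, which is exactly what you do not have here. You need to argue directly (as the paper does) that if $b$ itself detects nothing, then walking along consistent cluster edges from $b$ must hit a detecting node within $\mathcal{O}(D(T_N))$ hops, because the believed-cluster structure locally mimics the spanning tree of $T_N$. Finally, your worry about the singleton needing $\mathcal{O}(D(T_N))$ further rounds to clean its PFC state is unnecessary: the reset action installs a fresh state for all hosted virtual nodes at once, so the singleton is proper, clean, and unmatched immediately upon reset.
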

\begin{proof}
If $b$ is not a member of a proper cluster and detects a reset fault in $G_i$, our lemma holds.

Consider the case where $b$ is not a member of a proper cluster but has no reset fault.  If $b$'s cluster neighborhood is not a legal cluster neighborhood, then $b$ must be performing a merge between its own cluster $C$ and neighboring cluster $C'$.  In 2 rounds, $b$ either has the correct cluster neighbors, each with tree identifier of either $C$ or $C'$, and has passed the merge process on to its children, or $b$ has executed a reset.  The children of $b$ now either execute a reset, or are in a state consistent with the merge process, and we repeat the argument.  As there are $\mathcal{O}(D(T_N))$ levels, after $\mathcal{O}(D(T_N))$ rounds either a node has reset due to this merging, or the merge is complete.  If a node has reset in round $i$, its parent will reset in round $i+1$, its parent's parent will reset in $i+2$, and so on.  After at most $\mathcal{O}(D(T_N))$ rounds, $b$ resets and becomes part of an unmatched clean proper cluster.

If $b$ does not detect locally that it is not a member of proper cluster $C$, then there must exist a node $c$ within distance $\mathcal{O}(D(T_N))$ such that every node $p_i$ on a path from $b$ to $c$ believes it is part of the same cluster as $b$, and node $c$ detects an incorrect cluster neighborhood or inconsistent $\mathit{PFC}$ state.  As with the above argument, either $c$ detects a reset fault immediately, or $c$ is participating in a merge.  In either case, after $\mathcal{O}(D(T_N))$ rounds, $c$ has either reset, or $c$ is a member of a proper unmatched clean cluster resulting from a successful merge.
\end{proof}

Combining Lemmas \ref{dlemma:clean_merge} and \ref{dlemma:reset_to_clean} gives us the following lemma.

\begin{lemma}
\label{lemma:no_more_resets}
No node executes a reset action after $\mathcal{O}(D(T_N))$ rounds.
\end{lemma}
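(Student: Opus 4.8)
The plan is to combine Lemma~\ref{dlemma:reset_to_clean} (every ``bad'' node is absorbed into a good cluster quickly) with Lemma~\ref{dlemma:clean_merge} (a proper unmatched clean cluster is a permanent trap against resets). Concretely, I will show there is a round $t = \mathcal{O}(D(T_N))$ by which every node is a member of a proper unmatched clean cluster, and that no node resets on or after round $t$. Since the statement concerns the latest round in which \emph{any} node resets, it suffices to bound, for each node $b$, the last round in which $b$ can execute a reset.

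First I would classify each node by its status in the initial configuration $G_0$. If $b$ already belongs to a proper unmatched clean cluster, then Lemma~\ref{dlemma:clean_merge} tells us immediately that $b$ never resets in any future configuration, so $b$ contributes nothing after round $0$. If $b$ belongs to no proper cluster at all, then Lemma~\ref{dlemma:reset_to_clean} gives that within $\mathcal{O}(D(T_N))$ rounds $b$ becomes a member of a proper unmatched clean cluster; once that happens Lemma~\ref{dlemma:clean_merge} applies and $b$ resets no more. Hence any reset by such a $b$ occurs within the initial $\mathcal{O}(D(T_N))$ rounds.

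The remaining, and most delicate, case is a node $b$ sitting in a proper cluster $C$ that is \emph{not} yet unmatched-and-clean, e.g. $C$ is in the middle of a $PFC$ wave or has been matched to some partner $C'$. Here I would split into subcases. If $C$'s pending wave is an ordinary wave, it completes and returns all of $C$ to the clean state within $\mathcal{O}(D(T_N))$ rounds by Lemma~\ref{dlemma:pfc}, triggering no reset, since Lemma~\ref{dlemma:merge_reset} says a reset in a proper cluster requires a merge. If instead $C$ is matched to $C'$, then by Lemma~\ref{dlemma:merge_reset} a node of $C$ can reset only once $C$ actually begins merging, which requires both $C$ and $C'$ to complete the preparatory $PFC$ wave; by Lemma~\ref{dlemma:pfc_works} this happens only when both are proper, in which case the merge finishes cleanly into a proper unmatched clean cluster (the argument already used inside Lemma~\ref{dlemma:clean_merge}), again within $\mathcal{O}(D(T_N))$ rounds. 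If $C'$ is improper the preparatory wave never completes, so $C$ never begins the merge and its nodes never reset. In every subcase $b$ is a member of a proper unmatched clean cluster within $\mathcal{O}(D(T_N))$ rounds and, by Lemma~\ref{dlemma:clean_merge}, resets no more thereafter.

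I expect the main obstacle to be the exhaustiveness of this case analysis while keeping the time bound uniform across \emph{all} initial configurations. In particular, the reset of a leaf can force a reset of its parent one round later, propagating up the embedded spanning tree; bounding this cascade by the tree depth, which is $\mathcal{O}(D(T_N))$, is what pins the additive $\mathcal{O}(D(T_N))$ term and must be reconciled with the worst-case simultaneous completion of pending waves and merges. Once every node is shown to reach and remain in a proper unmatched clean cluster within $\mathcal{O}(D(T_N))$ rounds, the lemma follows, since Lemma~\ref{dlemma:clean_merge} forbids any reset from that point on.
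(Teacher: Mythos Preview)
Your proposal is correct and follows the same approach as the paper: the paper's entire proof is the single sentence ``Combining Lemmas~\ref{dlemma:clean_merge} and~\ref{dlemma:reset_to_clean} gives us the following lemma.'' You do precisely this combination, and in fact you are more careful than the paper, explicitly treating the intermediate case of a node in a proper cluster that is not yet unmatched and clean---a case the paper's terse statement leaves implicit but which your subcase analysis (pending $\mathit{PFC}$ wave, matched to proper $C'$, matched to improper $C'$) handles correctly using Lemmas~\ref{dlemma:pfc}, \ref{dlemma:merge_reset}, and~\ref{dlemma:pfc_works}.
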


We call a configuration $G_i$ a \emph{reset-free configuration} if and only if no reset actions are executed in any configuration $G_j \in \mathcal{F}(G_i)$.  For the remainder of our proofs in this appendix, we shall assume a reset-free configuration.

\subsection{Matching}
In \textsc{Avatar}, convergence to the target topology happens through the merging of clusters.  To limit degree expansion, a cluster may be merging with at most one other cluster at any particular point in time.  There are cluster topologies, however, where a maximum matching is of size $\mathcal{O}(1)$ (for instance, the star topology), which would therefore slow convergence.  We use the matching algorithm from \textsc{Avatar} to match together clusters within distance 2 of one another to thus ensure efficient convergence.  The general idea of this matching algorithm is simple: clusters randomly select a role (either leader or follower) and then leaders pair their followers to allow them to merge.

\subsubsection{Algorithms}
Below we include the full algorithm for the matching process done by a leader cluster (Figure \ref{algo:lead}), the subroutine used by leaders to create the matching among followers (Figure \ref{algo:connect}), and the algorithm followed by a follower cluster (Figure \ref{dalgo:follow}).

\begin{algorithm}
\begin{tabbing}
........\=....\=....\=....\=....\=....\=....\kill
\>\textit{// The root $r_C$ of $T$ has selected the leader role}\\
1.\>Node $r_C$ uses $PFC$ to inform all nodes in $C$ of\\
\>\>leader role\\
2.\>Node $r_C$ uses $PFC$ to close all nodes in $C$\\
3.\>Node $r_C$ initiates the $\mathit{ConnectFollowers}$ procedure (Algorithm \ref{algo:connect})\\
4.\>\textbf{if} $C$ was not assigned a merge partner \\
\>\>\>during $\mathit{ConnectFollowers}$ \textbf{then}\\
5.\>\>$r_C$ randomly selects a new role\\
6.\>\textbf{fi}
\end{tabbing}
\caption{The Matching Algorithm for a Leader Cluster $C$}
\label{algo:lead}
\end{algorithm}

\begin{algorithm}
\begin{tabbing}
........\=....\=....\=....\=....\=....\=....\kill
1.\>Execute $\mathit{PFC}(\mathit{ConnectFollowers}, \bot)$:\\
2.\>\>\textbf{Feedback Action for $a$:}\\
3.\>\>\>\textbf{while} $\exists b \in N_a : \mathit{role}_b = \mathit{PotentialFollower}(a) \vee $\\
\>\>\>\>$(\mathit{role}_b = \mathit{Follower}(a) \wedge b \neq \mathit{root})$ \textbf{do}\\
4.\>\>\>\>skip;\\
5.\>\>\>\textbf{od}\\
4.\>\>\>Order the $k$ followers in $N_a$ by tree\\
\>\>\>\> identifiers $b_0, b_1, b_2, \ldots, b_{k-1}$\\
5.\>\>\>\textbf{for} $i = 0,2,4,\ldots,\lfloor k/2 \rfloor$ \textbf{do}\\
6.\>\>\>\>Create edge $(b_i, b_{i+1})$;\\
7.\>\>\>\>Set merge partner of $b_i$ to $b_{i+1}$ and vice versa\\
7.\>\>\>\>Delete edge $(a,b_{i+1})$\\
8.\>\>\>\textbf{od}\\
9.\>\>\>\textbf{if} $k \bmod 2 \neq 0$ \textbf{then}\\
10.\>\>\>\>Create edge $(\mathit{parent}_a,b_{k-1})$\\
11.\>\>\>\>Delete edge $(a,b_{k-1})$\\
11.\>\>\>\textbf{fi}
\end{tabbing}
\caption{Subroutine $\mathit{ConnectFollowers}$}
\label{algo:connect}
\end{algorithm}

\begin{algorithm}
\begin{tabbing}
........\=....\=....\=....\=....\=....\=....\kill
\>\textit{// Assume root $r_C$ of $C$ has selected follower role}\\
1.\>Root $r_C$ selects a role of \emph{short} or \emph{long} follower uniformly at random.\\
2.\>Root $r_C$ propagates role of follower to nodes in $C$.\\
3.\>\textbf{if} $C$ is a \emph{short follower} \textbf{then}\\
4.\>\>Root $r_C$ sets $\mathit{pollCnt} = 2$\\
5.\>\textbf{else} $C$ is a \emph{long follower}:\\
6.\>\>Root $r_C$ sets $\mathit{pollCount} = 12$\\
7.\>\textbf{fi}\\
8.\>\textbf{while} $\mathit{pollCnt} > 0$ \textbf{and} no potential leader found \textbf{do}\\
9.\>\>Root $r_C$ queries cluster $C$ (using $PFC$) for a potential leader.\\
10.\>\>\textbf{if} $u \in C$ found a potential leader $v \in C'$ \textbf{then}\\
11.\>\>\>Forward one edge to $v$ to parent of $u$ during feedback wave\\
12.\>\>\textbf{fi}\\
13.\>\>Root $r_C$ sets $\mathit{pollCnt} = \mathit{pollCnt} - 1$;\\
14.\>\textbf{od}\\
15.\>\textbf{if} a potential leader is returned to $r_C$ \textbf{then}\\
16.\>\>Root node $r_C$ selects one potential leader $v \in C'$\\
\>\>\>and informs nodes in $C$ of $v$\\
17.\>\textbf{else} \\
18.\>\>$r_C$ randomly selects a new role from $(leader, follower)$\\
19.\>\textbf{fi}
\end{tabbing}
\caption{The Matching Algorithm for a Follower Cluster $C$}
\label{dalgo:follow}
\end{algorithm}

\subsubsection{Analysis of Matching}
\begin{lemma}
Let $b \in C$ be a follower that has selected a neighbor $c \in C'$ as a potential leader.  In $\mathcal{O}(D(T_N))$ rounds, the root of $C$ has an edge to some leader cluster $C''$, and all nodes in $C$ know this leader.
\label{dlemma:follow_to_root}
\end{lemma}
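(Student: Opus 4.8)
The plan is to track a single poll wave of the follower cluster $C$ after $b$ has identified $c \in C'$ as a potential leader, and to show this wave delivers a leader to the root and back to the whole cluster within $\mathcal{O}(D(T_N))$ rounds. Throughout I assume a reset-free configuration, so that $C$ is a proper cluster and every PFC wave initiated by its root behaves as described in Lemma~\ref{dlemma:pfc}. The follower's root repeatedly polls its cluster through PFC waves (Algorithm~\ref{dalgo:follow}, lines 8--14); I would focus on the poll wave that is in progress once $b$ holds the edge to $c$, and argue both that the root learns of a leader and that it can disseminate this choice, each within the diameter bound.

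First I would trace the propagate half of the poll wave: by Lemma~\ref{dlemma:pfc} the propagation reaches every node of $C$, including $b$, in $\mathcal{O}(D(T_N))$ rounds. Then I would track the feedback half. When the feedback wave reaches $b$, line~11 forwards an edge to its potential leader up to $b$'s parent. Here I would invoke the overlay model: because $b$ is adjacent to both $c$ and its tree parent, in one round $b$ can introduce $c$ to the parent, giving the parent an edge to a leader-cluster node. Repeating this level by level as the feedback wave climbs, the edge to a potential leader ascends one tree level per step and therefore reaches the root in $\mathcal{O}(D(T_N))$ additional rounds. Thus after one complete poll wave --- again $\mathcal{O}(D(T_N))$ rounds by Lemma~\ref{dlemma:pfc} --- the root holds an edge to some potential-leader node.

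With a potential leader returned, the root takes the branch at lines~15--16 of Algorithm~\ref{dalgo:follow}, selects one such neighbor $v$ in a cluster $C''$, and launches a final PFC propagate wave to inform all of $C$ of $v$, costing a further $\mathcal{O}(D(T_N))$ rounds. Summing the propagate, feedback, and dissemination phases --- three applications of the $\mathcal{O}(D(T_N))$ bound --- yields the claimed $\mathcal{O}(D(T_N))$ total, after which the root has an edge to $C''$ and every node of $C$ knows this leader.

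The delicate step is guaranteeing that the cluster $C''$ the root commits to is genuinely a leader at the moment its edge arrives and is propagated back down, rather than one that has meanwhile changed roles or merged away. I would close this gap with the closedness invariant from the leader protocol: a leader first uses PFC to close all of its nodes (Algorithm~\ref{algo:lead}, line~2) and then blocks in the feedback action of \emph{ConnectFollowers} (Algorithm~\ref{algo:connect}, lines~2--5) waiting for its followers, so that in a reset-free configuration a cluster that was a potential leader for some node of $C$ remains an available leader throughout the poll. This ensures the edge the root selects still points into a leader cluster, completing the argument.
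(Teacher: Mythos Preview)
Your proof is correct and follows essentially the same approach as the paper: bound the poll PFC wave that carries a potential-leader edge from $b$ up to the root, then the leader-inform PFC wave that disseminates the selection, each costing $\mathcal{O}(D(T_N))$ by Lemma~\ref{dlemma:pfc}. Your closing paragraph on the closedness invariant is an extra safeguard the paper does not include here; the paper instead defers the question of whether the selected cluster is still acting as a leader to Lemma~\ref{dlemma:follow_success_unmatched}.
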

\begin{proof}
When $b$ detects $c$ becomes a potential leader, $b$ marks $c$ as a potential leader immediately, regardless of the $\mathit{PFC}$ state.  After $\mathcal{O}(D(T_N))$ rounds, a feedback wave will reach $b$, at which point $b$ will forward the information about its potential leader.  In an additional $\mathcal{O}(D(T_N))$ rounds, the $\mathit{PFC}$ wave completes, at which point the root of $C$ has at least one potential leader (which may or may not be $c$) returned to it.  The root of $C$ will select one returned leader and execute the leader-inform $\mathit{PFC}$ wave.  In $\mathcal{O}(D(T_N))$ rounds, all nodes know the identity of their leader and its cluster.  Cluster $C$'s selected leader $c'$ will be forwarded up the tree during the feedback wave, reaching the root in an additional $\mathcal{O}(D(T_N))$ rounds.
\end{proof}

\begin{lemma}
Let $r_C$ be the root of cluster $C$.  If $r_C$ selects the role of $\mathit{Leader}$, within $\mathcal{O}(D(T_N))$ rounds either $C$ has been paired with a merge partner, or $C$ randomly selects a new role.  Furthermore, all followers of $C$ have been assigned a merge partner.
\label{dlemma:lead_max}
\end{lemma}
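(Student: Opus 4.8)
The plan is to walk through the leader algorithm (Algorithm~\ref{algo:lead}) phase by phase, bounding the time of each phase with the single-wave cost established in Lemma~\ref{dlemma:pfc}, and then to argue the matching correctness separately. Lines~1 and~2 each consist of exactly one $\mathit{PFC}$ wave (informing all nodes of the leader role, then closing them), so by Lemma~\ref{dlemma:pfc} each finishes in $\mathcal{O}(D(T_N))$ rounds. The substance of the argument is the $\mathit{ConnectFollowers}$ procedure of line~3 (Algorithm~\ref{algo:connect}) and the claim that when it returns every follower has a partner.

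First I would bound $\mathit{ConnectFollowers}$. Its propagation wave reaches every leaf in $\mathcal{O}(D(T_N))$ rounds, and once the feedback wave begins, the pairing work at each node (the $\mathbf{for}$ loop and the odd-leftover handoff) only adds and deletes a bounded number of local edges before forwarding a single leftover follower upward, so the feedback wave climbs back to the root in $\mathcal{O}(D(T_N))$ further rounds.

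The hard part will be the feedback action's $\mathbf{while}$ loop (lines~3--5 of Algorithm~\ref{algo:connect}), which blocks a node $a$ until none of its neighbors is still a $\mathit{PotentialFollower}(a)$ nor a non-root $\mathit{Follower}(a)$; I must show this wait is itself $\mathcal{O}(D(T_N))$ rather than unbounded. To do so I would appeal to the follower protocol (Algorithm~\ref{dalgo:follow}): a follower polls its own cluster at most a constant number of times ($2$ for a short follower, $12$ for a long follower), and each poll is a single $\mathit{PFC}$ wave costing $\mathcal{O}(D(T_N))$ by Lemma~\ref{dlemma:pfc}. Hence within $\mathcal{O}(D(T_N))$ rounds every neighbor of $a$ that was a potential follower has either committed---in which case, by Lemma~\ref{dlemma:follow_to_root}, the edge resolves to the follower cluster's root, so the offending non-root condition clears---or has exhausted its poll count and re-randomized its role, again clearing the wait condition. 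This caps line~3 at $\mathcal{O}(D(T_N))$, and summing the constantly-many $\mathcal{O}(D(T_N))$ phases yields the claimed bound; the decision in lines~4--6 to re-randomize is then immediate.

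Finally I would verify the ``furthermore'' claim by induction up the spanning tree, maintaining the invariant that the subtree rooted at any node leaves at most one follower unpaired and forwards any such leftover to its parent. At each node the $\mathbf{for}$ loop pairs its follower-neighbors two at a time and sets them as mutual merge partners, deleting the leader's edges to them; a single leftover (when the local count is odd) is handed up and re-enters the argument at the parent. Thus at the root at most one follower remains unmatched: if one does, it is matched with $C$ itself, giving $C$ a merge partner, and otherwise $C$ receives none and re-randomizes in line~5. Since pairing reduces the count by two and forwarding preserves it, every follower of $C$ is ultimately assigned a merge partner, completing both parts of the lemma.
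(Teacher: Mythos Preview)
Your approach mirrors the paper's proof closely: bound the two initial $\mathit{PFC}$ waves by Lemma~\ref{dlemma:pfc}, then argue that the $\mathit{ConnectFollowers}$ wave also costs $\mathcal{O}(D(T_N))$, with the \textbf{while}-loop wait being the only nontrivial part. Your ``furthermore'' argument by induction up the spanning tree is also essentially what the paper intends.

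There is, however, a genuine gap in your timing argument for the \textbf{while} loop. You establish that at any single node $a$ the wait clears within $\mathcal{O}(D(T_N))$ rounds, and then conclude ``this caps line~3 at $\mathcal{O}(D(T_N))$'' and sum ``constantly-many $\mathcal{O}(D(T_N))$ phases.'' But the feedback wave visits $\mathcal{O}(D(T_N))$ levels sequentially, and a per-node bound of $\mathcal{O}(D(T_N))$ at each level would, as written, yield only $\mathcal{O}(D(T_N)^2)$ for the whole wave. You never argue why the waits at different levels do not accumulate.

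The paper closes this gap with one observation you are missing: after line~2 of Algorithm~\ref{algo:lead} every node in $C$ has role $\mathit{ClosedLead}$, so \emph{no new potential followers can attach} to any node of $C$ thereafter. Consequently every potential follower that will ever exist has already attached by the time the close wave finishes, and all of them resolve (via Lemma~\ref{dlemma:follow_to_root}) concurrently within a single $\mathcal{O}(D(T_N))$ window, not sequentially level by level. The paper phrases this as ``the \emph{total} wait for all nodes in $C$ is $\mathcal{O}(D(T_N))$.'' Adding this one sentence to your argument would make it complete and essentially identical to the paper's proof.
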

\begin{proof}
First, note that $\mathit{PFC}(\mathit{Lead}, \bot)$ requires $\mathcal{O}(D(T_N))$ rounds to complete (Lemma \ref{dlemma:pfc}).  The $\mathit{PFC}(\mathit{ConnectFollowers}, \bot)$ wave also requires $\mathcal{O}(D(T_N))$ rounds.  To see this, notice that the feedback action for this wave cannot advance past a node $b \in C$ until $b$ has no neighbors that are potential followers and all followers are root nodes.  Let $C'$ be a follower cluster that has selected $C$ as a potential leader.  By Lemma \ref{dlemma:follow_to_root}, after $\mathcal{O}(D(T_N))$ rounds, all potential followers of $b$ are either no longer following $b$, or are connected with a root to $b$.

Notice that the \emph{total} wait for all nodes in $C$ is $\mathcal{O}(D(T_N))$, since all nodes in $C$ have a role of $\mathit{ClosedLead}$ and are not assigned any additional potential followers.  Therefore, the feedback wave can be delayed $\mathcal{O}(D(T_N))$ rounds, leading to a total $\mathcal{O}(D(T_N))$ rounds for the $\mathit{PFC}(\mathit{ConnectFollowers}, \bot)$ wave.  When the $\mathit{PFC}(\mathit{ConnectFollowers},\bot)$ wave completes, all followers of $C$ have been assigned a merge partner.  If there were an odd number of followers, $C$ has also been assigned a merge partner, else $C$ will randomly re-select a role.
\end{proof}

\begin{lemma}
Let $C$ be a short follower cluster.  In $\mathcal{O}(D(T_N))$ rounds, either $C$ has selected a leader, or $C$ randomly re-selects a role.
\label{dlemma:short_follow}
\end{lemma}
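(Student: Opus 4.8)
The plan is to bound the running time by counting the number of $\mathit{PFC}$ waves the short follower cluster $C$ executes and invoking Lemma~\ref{dlemma:pfc}, which guarantees each wave completes in $\mathcal{O}(D(T_N))$ rounds. This mirrors the structure of the leader analysis in Lemma~\ref{dlemma:lead_max}: since we work in a reset-free configuration, $C$ is a proper clean cluster, so every $\mathit{PFC}$ wave its root initiates terminates within the stated per-wave bound, and the whole argument reduces to showing only a constant number of such waves occur.

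First I would account for the wave in which the root $r_C$ propagates the (short) follower role to all nodes of $C$ (lines~1--2 of Algorithm~\ref{dalgo:follow}); this is a single $\mathit{PFC}$ wave costing $\mathcal{O}(D(T_N))$ rounds. Next I would examine the polling loop (lines~8--14). Because $C$ is a short follower, its root sets $\mathit{pollCnt} = 2$, so the loop body---each iteration of which is one $\mathit{PFC}$ query for a potential leader---executes at most twice before the guard $\mathit{pollCnt} > 0$ fails. Hence the polling phase consumes at most $2 \cdot \mathcal{O}(D(T_N)) = \mathcal{O}(D(T_N))$ rounds. The key point is that each individual query terminates in $\mathcal{O}(D(T_N))$ rounds by Lemma~\ref{dlemma:pfc} whether or not a potential leader is discovered: the propagation and feedback waves always traverse the tree and return, carrying back ``no leader found'' when appropriate.

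It then remains to handle the loop's two exits. If some node of $C$ located a potential leader during a poll, then by Lemma~\ref{dlemma:follow_to_root} the information reaches $r_C$ and the subsequent leader-inform wave (line~16) completes within $\mathcal{O}(D(T_N))$ rounds, after which $C$ has selected a leader. Otherwise, the loop exits with no potential leader returned, and $r_C$ immediately re-selects a role uniformly at random (line~18). In either case only a constant number of $\mathit{PFC}$ waves---role propagation, at most two polls, and at most one leader-inform wave---are executed, each $\mathcal{O}(D(T_N))$, so the total is $\mathcal{O}(D(T_N))$ rounds, establishing the lemma.

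I do not expect a genuinely hard step here; the proof is essentially bookkeeping once Lemma~\ref{dlemma:pfc} is in hand. The one subtlety worth verifying carefully is that a poll cannot stall: a $\mathit{PFC}$ query searching for a leader that does not exist in $C$'s neighborhood must still complete on schedule. This is immediate from Lemma~\ref{dlemma:pfc}, since wave completion depends only on the cluster being proper and clean, not on the content of the feedback carried back; so the bounded poll count $\mathit{pollCnt} = 2$ is precisely what prevents the follower from looping indefinitely and pins the total cost to $\mathcal{O}(D(T_N))$.
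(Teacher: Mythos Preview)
Your proposal is correct and follows essentially the same approach as the paper: bound the number of $\mathit{PFC}$ polls by the short follower's $\mathit{pollCnt}=2$ and invoke Lemma~\ref{dlemma:pfc} to get $\mathcal{O}(D(T_N))$ per poll. The paper's proof is simply a terser version of your argument, omitting the explicit accounting of the role-propagation and leader-inform waves that you (correctly) include.
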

\begin{proof}
A short follower polls its cluster for a leader at most twice, each requiring $\mathcal{O}(D(T_N))$ rounds (Lemma \ref{dlemma:pfc}).  If a leader is not returned, $C$ will randomly select a new role.  If at least one leader is returned, $C$ selects it.
\end{proof}

\begin{lemma}
Let $C$ be a long follower cluster.  In $\mathcal{O}(D(T_N))$ rounds, either $C$ has selected a leader, or $C$ randomly re-selects a role.
\label{dlemma:long_follow}
\end{lemma}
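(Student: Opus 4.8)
The plan is to mirror the argument used for short followers in the proof of Lemma~\ref{dlemma:short_follow}, observing that the only structural difference between a short and a long follower lies in the value of the poll counter. First I would recall from the follower algorithm (Algorithm~\ref{dalgo:follow}) that a long follower initializes its poll counter to the constant $12$ (line~6), rather than the value $2$ used by a short follower (line~4). The body of the polling loop (lines~8--14) is identical in both cases: on each iteration, the root $r_C$ issues a single $\mathit{PFC}$ query wave across its cluster searching for a potential leader, and by Lemma~\ref{dlemma:pfc} each such wave completes in $\mathcal{O}(D(T_N))$ rounds.

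The key observation is that the poll counter is bounded by the \emph{constant} $12$, independent of $N$. Hence the loop executes at most $12$ iterations, each costing $\mathcal{O}(D(T_N))$ rounds, for a total of $12 \cdot \mathcal{O}(D(T_N)) = \mathcal{O}(D(T_N))$ rounds. Once the loop terminates---either because a potential leader was located, or because the counter reached $0$---the cluster resolves immediately: it either selects one returned leader (lines~15--16) or randomly re-selects a new role (line~18). In either outcome, the resolution occurs within $\mathcal{O}(D(T_N))$ rounds of $C$ becoming a long follower, which establishes the lemma.

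I expect this proof to be essentially immediate given Lemma~\ref{dlemma:pfc}, so there is no real obstacle; the only point worth flagging is that the larger constant poll count ($12$ versus $2$) does not affect the asymptotic bound, so the distinction between short and long followers is irrelevant to the round-complexity claimed here. The purpose of the larger count presumably lies elsewhere in the analysis---namely, in guaranteeing a sufficient probability that the two follower types successfully match with leaders---and is not needed for this particular time bound.
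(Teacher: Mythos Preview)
Your proposal is correct and follows essentially the same approach as the paper: both argue by analogy with Lemma~\ref{dlemma:short_follow}, noting that the only difference is the constant poll count of $12$, that each poll is a $\mathit{PFC}$ wave costing $\mathcal{O}(D(T_N))$ rounds by Lemma~\ref{dlemma:pfc}, and hence the total is still $\mathcal{O}(D(T_N))$. Your added remark about the purpose of the larger constant lying in the probabilistic matching analysis is accurate but not needed for this lemma.
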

\begin{proof}
By similar argument to Lemma \ref{dlemma:short_follow}, a long follower polls its cluster at most 12 times, each requiring $\mathcal{O}(D(T_N))$ rounds (Lemma \ref{dlemma:pfc}).  If a leader is not returned, $C$ will randomly select a new role.  If at least one leader is returned, $C$ selects it.
\end{proof}

%

\begin{lemma}
Let $C$ be a follower cluster that has returned a leader after a $\mathit{PFC}$ search wave of Algorithm \ref{dalgo:follow}.  After $\mathcal{O}(D(T_N))$ rounds, $C$ has a merge partner.
\label{dlemma:follow_success_unmatched}
\end{lemma}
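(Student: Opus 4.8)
The plan is to derive this lemma as a direct consequence of the two preceding matching lemmas, Lemma~\ref{dlemma:follow_to_root} and Lemma~\ref{dlemma:lead_max}, by chaining the follower's connection process with the leader's pairing process. Throughout I write $C''$ for the leader cluster whose node was returned to $C$, to avoid clashing with the use of $C$ as the \emph{leader} cluster in the statement of Lemma~\ref{dlemma:lead_max}.

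First I would pin down what ``$C$ has returned a leader'' gives us. Since the search wave of Algorithm~\ref{dalgo:follow} returned a potential leader to the root $r_C$, the root selects one such leader $v \in C''$ and informs the cluster (line 16). By Lemma~\ref{dlemma:follow_to_root}, within $\mathcal{O}(D(T_N))$ rounds the root $r_C$ establishes the edge to $v$ and becomes a connected root-follower of $v$; in particular $r_C$ is no longer a mere potential follower. This is precisely the condition that unblocks the feedback action of $v$ in the $\mathit{ConnectFollowers}$ procedure (Algorithm~\ref{algo:connect}), whose \textbf{while} loop waits exactly until each neighbor that is a potential follower has resolved and every remaining follower is a root connected to $v$.

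Next I would invoke the leader side. Because $v$ was advertised as a potential leader, the root of $C''$ must have selected the leader role and initiated $\mathit{ConnectFollowers}$. By Lemma~\ref{dlemma:lead_max}, this wave completes in $\mathcal{O}(D(T_N))$ rounds and, upon completion, \emph{every} follower of $C''$ has been assigned a merge partner. Since $C$ is one of those followers --- its root $r_C$ now appears in the ordered list $b_0, b_1, \ldots, b_{k-1}$ at the node $v$ --- the pairing step assigns $C$ a merge partner (either an adjacent follower in the list, or, if $C$ is passed upward as an odd leftover, a follower at a higher tree level or $C''$ itself). Adding the $\mathcal{O}(D(T_N))$ rounds of the connection phase to the $\mathcal{O}(D(T_N))$ rounds of the $\mathit{ConnectFollowers}$ wave keeps the total at $\mathcal{O}(D(T_N))$, establishing the lemma.

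The hard part will be the coordination between the two phases: the leader node $v$ must wait long enough for $r_C$ to finish connecting (Lemma~\ref{dlemma:follow_to_root}), yet the cumulative waiting across all leader nodes must not inflate the total time beyond $\mathcal{O}(D(T_N))$. I would resolve this exactly as in Lemma~\ref{dlemma:lead_max}: every node of $C''$ is already in the $\mathit{ClosedLead}$ state and receives no further potential followers, so the per-node waits overlap rather than accumulate, bounding the total delay of the feedback wave by $\mathcal{O}(D(T_N))$. A secondary point to verify is that $C$ is never silently dropped when it happens to be the odd follower at some node; since the odd leftover is forwarded to the parent \emph{within the same feedback wave} (the odd-follower branch of Algorithm~\ref{algo:connect}), it is paired before the wave reaches the root, so $C$ still obtains a partner within the stated bound.
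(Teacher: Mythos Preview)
Your argument covers only one of the two cases the paper treats, and the omitted case is not vacuous. You infer from ``$v$ was advertised as a potential leader'' that the root of $C''$ has already selected the leader role and is mid-$\mathit{ConnectFollowers}$. But in this framework a cluster can be seen as a potential leader while it is still \emph{merging}: see the proof of Lemma~\ref{dlemma:lead_probability}, which explicitly counts ``currently participating in a merge'' as satisfying the potential-leader condition. Consequently, when $r_C$ attaches to $v$, the cluster containing $v$ may not yet be running Algorithm~\ref{algo:lead} at all, so Lemma~\ref{dlemma:lead_max} cannot be invoked directly.

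The paper's proof handles this by a case split: if the selected cluster is already a leader, Lemma~\ref{dlemma:lead_max} finishes the job exactly as you argue; if instead it is merging, one first appeals to Lemma~\ref{dlemma:merge} to bound the remaining merge time by $\mathcal{O}(D(T_N))$, observes that the resulting cluster then assumes the leader role, and only then applies Lemma~\ref{dlemma:lead_max}. Your write-up needs this second branch; without it the claim that $\mathit{ConnectFollowers}$ is already underway is unjustified, and the $\mathcal{O}(D(T_N))$ bound is not established.
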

\begin{proof}
Let $r_C$ be the root node of $C$.  Node $\mathit{r}_C$ selects a returned leader $C'$ and, in $\mathcal{O}(D(T_N))$ additional rounds, all nodes in $C$ have been informed of leader $C'$ and $\mathit{r}_C$ has an edge to a node $b$ from $C'$.

After $r_C$ has an edge to a node $b \in C'$, $r_C$ waits to be assigned a merge partner.  Suppose $C'$ had the role of leader when selected by $C$.  By Lemma \ref{dlemma:lead_max}, after $\mathcal{O}(D(T_N))$ rounds, $C$ will be assigned a merge partner.

Suppose $C'$ was executing a merge when selected by $C$.  $C$ will be assigned a merge partner when (i) $C'$ finishes its merge, and (iii) $C'$ finishes executing Algorithm \ref{algo:lead}.  If $C'$ was merging, it completes in $\mathcal{O}(D(T_N))$ rounds (Lemma \ref{dlemma:merge}).  The resulting cluster $C''$ will be a leader, and will require $\mathcal{O}(D(T_N))$ rounds before all followers have been assigned a merge partner (Lemma \ref{dlemma:lead_max}).
\end{proof}

Since the initial configuration is set by the adversary, it can be difficult to make probabilistic claims when dealing with the initial configuration.  For instance, the adversary could assign all clusters the role of long follower, in which case the probability that a merge happens over $\mathcal{O}(D(T_N))$ rounds is 0.  Notice, however, that after a short amount of time, regardless of the initial configuration, clusters are guaranteed to have randomly selected their current roles.  Therefore, we ignore the first $\mathcal{O}(D(T_N))$ rounds of execution in the following lemmas.

\begin{definition}
Let $G_0$ be the initial network configuration.  We define $\mathcal{F}_{\Delta}(G_0)$ to be the set of future configurations reached after $\Delta = \mathcal{O}(D(T_N))$ rounds of program execution from the initial configuration.
\end{definition}

\begin{lemma}
Let $C$ be a follower cluster in configuration $G_i \in \mathcal{F}_{\Delta}(G_0)$.  With probability at least $1/2$, $C$ either randomly selects a new role or has found a leader in $\mathcal{O}(D(T_N))$ rounds.
\label{dlemma:follow_to_lead}
\end{lemma}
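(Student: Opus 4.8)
The plan is to exploit the uniformly random choice between the short- and long-follower roles made in line~1 of Algorithm~\ref{dalgo:follow}, and to reduce the claim entirely to the deterministic guarantee already established for short followers in Lemma~\ref{dlemma:short_follow}. The restriction to configurations $G_i \in \mathcal{F}_{\Delta}(G_0)$ is essential: it guarantees that the short/long sub-role currently held by $C$ was set by an actual random coin flip during execution rather than fixed adversarially in the initial configuration, so that the event ``$C$ is a short follower'' genuinely occurs with probability $1/2$. Since $C$ is given to be a follower, the only relevant randomness is this short-versus-long sub-choice, which is a fair coin flip.

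First I would fix the time window of interest to be the duration of a \emph{short} follower's polling cycle: at most two $\mathit{PFC}$ waves, which is $\mathcal{O}(D(T_N))$ rounds by Lemma~\ref{dlemma:pfc}. Next I would condition on the sub-role. Conditioned on $C$ being a short follower (probability $1/2$), Lemma~\ref{dlemma:short_follow} directly yields that within this window $C$ has either selected a leader or randomly re-selected a role, so the target disjunction holds with conditional probability $1$. Conditioned on $C$ being a long follower, the event may still hold---if a leader is returned during the first two polls---but it need not, since a long follower re-selects only after twelve failed polls (line~6 of Algorithm~\ref{dalgo:follow}) and may still be mid-polling when the short window closes. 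Discarding the long-follower contribution entirely, the probability of the event is at least $P(\text{short follower}) \cdot 1 = 1/2$.

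The subtle point, and the one thing to get right, is the choice of the constant hidden in the $\mathcal{O}(D(T_N))$ window. Both short and long followers complete their cycles in $\mathcal{O}(D(T_N))$ rounds (Lemmas~\ref{dlemma:short_follow} and~\ref{dlemma:long_follow}), so a naive reading would suggest probability $1$. The reason the bound is only $1/2$ is that the window in this lemma must be taken as the \emph{short} cycle length (two polls); within that shorter window a long follower that has not yet located a leader has made no decision at all. Pinning this constant down consistently---so that the short follower is guaranteed to finish while the long follower may not---is the delicate step, and it is precisely what produces the clean $1/2$ lower bound later used to bound the per-round merge rate.
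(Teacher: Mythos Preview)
Your proposal is correct and follows essentially the same argument as the paper: use the assumption $G_i \in \mathcal{F}_{\Delta}(G_0)$ to guarantee the follower sub-role was chosen by a genuine coin flip, condition on the short-follower event (probability $1/2$), and invoke Lemma~\ref{dlemma:short_follow}. The paper's proof is terser---it justifies the randomness by noting that no cluster can remain a follower for more than $\mathcal{O}(D(T_N))$ rounds, and it does not discuss your ``subtle point'' about the window constant at all---but the logical skeleton is identical.
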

\begin{proof}
Cluster $C$ must have randomly selected its follower role in $G_i$, as no cluster can be a follower for longer than $\mathcal{O}(D(T_N))$ rounds.  Given that $C$ is a follower, with probability $1/2$, $C$ must have been a short follower, and therefore either $C$ finds a leader or selects a new role after $\mathcal{O}(D(T_N))$ rounds (Lemma \ref{dlemma:short_follow}).
\end{proof}

\begin{lemma}
Consider configuration $G_i \in \mathcal{F}_{\Delta}(G_0)$.  With probability at least $1/4$, every node in cluster $C$ will have been a potential leader for at least one round over the next $\mathcal{O}(D(T_N))$ rounds.
\label{dlemma:lead_probability}
\end{lemma}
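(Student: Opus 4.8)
The plan is to show that whenever $C$ selects the leader role, the two sequential PFC waves of Algorithm~\ref{algo:lead} drive every node of $C$ through a ``potential leader'' state for at least one round, and then to bound the probability of that role selection. First I would condition on $C$ having selected the leader role at the relevant role attempt. Since $G_i \in \mathcal{F}_\Delta(G_0)$, the current role of $C$ is the outcome of a random selection (the adversarial initial roles have been flushed after $\Delta = \mathcal{O}(D(T_N))$ rounds), so this event has constant probability. A node advertises itself as a potential leader precisely once it has learned the leader role from the inform wave (line~1 of Algorithm~\ref{algo:lead}) and before the subsequent close wave (line~2) reaches it, so the whole claim reduces to arguing that the inform wave reaches every node strictly before the close wave does.

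The key structural step is the sequential nature of the PFC waves. By the progress conditions of Algorithm~\ref{algo:pfc}, the close wave of line~2 cannot begin until the inform wave of line~1 has completed its feedback back to the root. Hence, by the time any node is closed by the second wave, every node has already received the leader role from the first wave, and each node therefore spends at least one round (in fact $\Omega(D(T_N))$ rounds) in the potential-leader state before being closed. Each wave completes in $\mathcal{O}(D(T_N))$ rounds by Lemma~\ref{dlemma:pfc}, so the entire inform-then-close sequence, and thus the window in which every node is a potential leader, fits inside $\mathcal{O}(D(T_N))$ rounds.

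The main obstacle I anticipate is the timing and probability bookkeeping needed to guarantee that a \emph{complete, fresh} leader phase lands within the next $\mathcal{O}(D(T_N))$ rounds. At $G_i$ the cluster may be partway through an existing attempt, so I cannot simply use ``$C$ is currently a leader'': some of its nodes may already have passed through the potential-leader state in the past. To handle this I would argue that $C$ completes its current attempt within $\mathcal{O}(D(T_N))$ rounds (using Lemmas~\ref{dlemma:short_follow}, \ref{dlemma:long_follow}, and~\ref{dlemma:lead_max} to bound the lengths of follower and leader attempts) and then begins a fresh random selection that is leader with probability $1/2$. Accounting conservatively for this need to first flush the in-progress attempt before catching a fresh leader phase yields the stated bound of at least $1/4$, and enlarging the $\mathcal{O}(D(T_N))$ window by a suitable constant absorbs both the residual current attempt and the full inform-then-close sequence of the fresh leader phase.
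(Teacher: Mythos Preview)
Your overall strategy---let the current attempt run out in $\mathcal{O}(D(T_N))$ rounds and then catch a fresh leader phase---is close to the paper's, and your first two paragraphs about the inform-then-close PFC sequence are actually more explicit than the paper about \emph{why} a leader phase exposes every node as a potential leader for at least one round.

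There is, however, a genuine gap in the third paragraph. You assert that once the current attempt completes, $C$ ``begins a fresh random selection.'' That is not guaranteed: the current attempt may terminate with $C$ being assigned a merge partner (this can happen to followers via Lemma~\ref{dlemma:follow_success_unmatched} and to leaders via Lemma~\ref{dlemma:lead_max}), in which case $C$ merges rather than re-selects, and your probability-$1/2$ coin flip never occurs. The paper's proof is a case analysis on $C$'s current state precisely to cover this, and it relies on a fact you never invoke: a cluster that has just completed a merge automatically enters a leader phase, so along the merge branch every node becomes a potential leader with probability $1$. Without that merge$\to$leader path, your argument simply does not cover the case where the ``flush'' ends in a match rather than a re-selection.

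A smaller point: your derivation of $1/4$ is vague. The flush itself is deterministic in time (all of Lemmas~\ref{dlemma:short_follow}, \ref{dlemma:long_follow}, \ref{dlemma:lead_max} give $\mathcal{O}(D(T_N))$), so it should not cost a factor of $1/2$. In the paper the $1/4$ arises in the follower case from the extra $1/2$ in Lemma~\ref{dlemma:follow_to_lead} composed with the $1/2$ for the subsequent leader selection; once you add the merge branch (which contributes probability $1$) the accounting becomes straightforward.
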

\begin{proof}
Consider the possible roles and states of any cluster $C$.  If $C$ is currently participating in a merge or is an $\mathit{OpenLeader}$, then our lemma holds.

Suppose $C$ is a follower in configuration $G_i$.  By Lemma \ref{dlemma:follow_to_lead}, with probability at least $1/2$, $C$ will either find a leader or randomly select another role after $\mathcal{O}(D(T_N))$ rounds.  If $C$ finds a leader, after an additional $\mathcal{O}(D(T_N))$ rounds (Lemma \ref{dlemma:follow_success_unmatched}), $C$ is assigned a merge partner, and after at most $\mathcal{O}(D(T_N))$ additional rounds, all nodes in $C$ are potential leaders.  If $C$ randomly selects another role, with probability $1/2$ $C$ selects the leader role, and all nodes are potential leaders after at most $\mathcal{O}(D(T_N))$ additional rounds.

Suppose a node $b \in C$ is a closed leader ($\mathit{role}_b = \mathit{ClosedLeader}$).  After at most $\mathcal{O}(D(T_N))$ rounds (Lemma \ref{dlemma:lead_max}), either the root of $C$ is assigned a merge partner and $b$ becomes a potential leader after an additional $\mathcal{O}(D(T_N))$ rounds, \emph{or} the root of $C$ is not assigned a merge partner and selects a new role at random.  With probability $1/2$, then, $b$ becomes a potential leader after an additional $\mathcal{O}(D(T_N))$ rounds.
\end{proof}

\begin{lemma}
\label{lemma:merge_partner_probability}
Every cluster $C$ in configuration $G_i \in \mathcal{F}_{\Delta}(G_0)$ has probability at least $1/16$ of being assigned a merge partner over $\mathcal{O}(D(T_N))$ rounds.
\end{lemma}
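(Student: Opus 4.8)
The plan is to exhibit, for an arbitrary cluster $C$, two favorable events whose joint occurrence forces $C$ to acquire a merge partner, and to show their probabilities combine to at least $1/16$. First I would dispose of the trivial case: if only a single cluster remains there is nothing to merge, so I assume at least two clusters are present. Since the network is weakly connected, $C$ then has a neighboring cluster $C'$ within distance $2$; fix nodes $u \in C$ and $v \in C'$ witnessing this adjacency. Define event $A$ to be ``$C$ selects the \emph{long follower} role at its next role selection,'' which occurs with probability $\tfrac12 \cdot \tfrac12 = \tfrac14$ (probability $\tfrac12$ of choosing follower over leader, and a further $\tfrac12$ of choosing long over short, per Algorithm~\ref{dalgo:follow}). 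Define event $B$ to be ``every node of $C'$ is a potential leader for at least one round over the next $\mathcal{O}(D(T_N))$ rounds''; by Lemma~\ref{dlemma:lead_probability} applied to $C'$, $\Pr[B] \ge \tfrac14$.

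The core of the argument is that these two events combine multiplicatively and that their conjunction yields a merge partner for $C$. Since $A$ is determined solely by the coin flips of $C$, while the lower bound in Lemma~\ref{dlemma:lead_probability} is driven solely by the coins of $C'$ and makes no use of $C$'s role, we have $\Pr[B \mid A] \ge \tfrac14$, whence $\Pr[A \cap B] = \Pr[A]\,\Pr[B \mid A] \ge \tfrac14 \cdot \tfrac14 = \tfrac{1}{16}$. Conditioned on $A \cap B$, the node $v \in C'$ is a potential leader during some round of the window, and as a long follower $C$ issues up to $12$ successive polling waves (Lemma~\ref{dlemma:long_follow}), each a $\mathit{PFC}$ search spanning $\mathcal{O}(D(T_N))$ rounds, so the polling at $u$ coincides with a round in which its neighbor $v$ is a potential leader. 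Thus $C$ returns a leader, and by Lemma~\ref{dlemma:follow_success_unmatched} the root of $C$ is assigned a merge partner within an additional $\mathcal{O}(D(T_N))$ rounds. All of this fits inside $\mathcal{O}(D(T_N))$ rounds, giving the claimed probability bound.

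The step I expect to be the main obstacle is verifying the timing in the conditioned case: that the $12$ consecutive polls of a long follower are guaranteed to cover the round in which $C'$'s nodes attain potential-leader status. This is precisely why the follower role is split into short and long variants and why the long poll count is set to a sufficiently large constant—the window produced by Lemma~\ref{dlemma:lead_probability} consists of a bounded number of $\mathcal{O}(D(T_N))$-round phases (a follower re-selecting to leader, then opening), and $12$ polls are chosen to outlast every such phase so that some poll of $u$ necessarily overlaps the interval during which $v$ is an open leader. A secondary point to justify is the conditioning claim: although $C$ and $C'$ are neighbors and may interact, any interaction in which the long follower $C$ attaches to the open leader $C'$ is exactly the favorable outcome, so conditioning on $A$ cannot decrease $\Pr[B]$, and the product bound $1/16$ is safe.
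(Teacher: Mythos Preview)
Your proposal is correct and follows essentially the same argument as the paper: both identify the two events ($C$ selects long follower with probability $1/4$, and a neighboring cluster $C'$ has a potential-leader phase with probability at least $1/4$ via Lemma~\ref{dlemma:lead_probability}), multiply to obtain $1/16$, and invoke Lemma~\ref{dlemma:follow_success_unmatched} to conclude. Your write-up is more careful about the independence and timing justifications that the paper glosses over, but the structure is identical.
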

\begin{proof}
In $\mathcal{O}(D(T_N))$ rounds, if $C$ has not been assigned a merge partner, $C$ will re-select its role.  With probability $1/4$, $C$ will be a long follower and be searching for a leader for an additional $\mathcal{O}(D(T_N))$ rounds.  By Lemma \ref{dlemma:lead_probability}, a neighboring cluster $C'$ has probability at least $1/4$ of being a potential leader during this time.  Therefore, $C$ has probability at least $1/16$ of selecting a leader within $\mathcal{O}(D(T_N))$ rounds, which will result in $C$ being assigned a merge partner after $\mathcal{O}(D(T_N))$ additional rounds (Lemma \ref{dlemma:follow_success_unmatched}).
\end{proof}

\subsection{Merge}
The final component of the algorithm is merging together two matched clusters.  The merge process can be thought of as a series of comparisons between the virtual nodes with the same identifier in the two clusters, with the node with the appropriate host being the ``winner'' in the comparison and remaining in the cluster.  We shall use the spanning tree defined on the target topology to coordinate the merge, beginning with the two cluster roots and moving down level by level.

\subsubsection{Algorithms}
We provide the full algorithms used for merging in our self-stabilizing algorithm: the procedure used in the guest network for replacing a virtual node with another (Figure \ref{algo:replace_node}), and the algorithm for merging the entire cluster (Figure \ref{algo:merge}).

\begin{algorithm}
\begin{tabbing}
........\=....\=....\=....\=....\=....\=....\kill
$\mathit{ReplaceNode}(c, d):$\\
1.\>\textbf{if} $\mathit{partner}_a \neq \mathit{cluster}_b \vee \mathit{partner}_b \neq \mathit{cluster}_a$\\
\>\>$\vee \mathit{rs}_a \neq L \vee \mathit{rs}_b \neq L$ \textbf{then}\\
2.\>\>Reset hosts of nodes $c$ and $d$ (ends the merge process)\\
3.\>\textbf{fi}\\
4.\>\textbf{if} $\mathit{host}_d < \mathit{clusterSucc}_{\mathit{host}_c}$ \textbf{then}\\
5.\>\>$\mathit{clusterSucc}_{\mathit{host}_c} = \mathit{host}_d$\\
6.\>\>$\mathit{LostNodes}_c = \{b : \mathit{host}_b = \mathit{host}_c \wedge b > \mathit{clusterSucc}_{\mathit{host}_c}\}$\\
7.\>\textbf{else if} $\mathit{clusterPred}_{\mathit{host}_c} = \bot \wedge \mathit{host}_d < \mathit{host}_c$ \textbf{then}\\
8.\>\>$\mathit{clusterPred}_{\mathit{host}_c} = \mathit{host}_d$;\\
9.\>\>$\mathit{LostNodes}_c =$\\
\>\>\>$\{b : \mathit{host}_b = \mathit{host}_c \wedge \mathit{host}_d < b < \mathit{host}_c\}$\\
10.\>\textbf{else} \textit{// No successor pointer is updated}\\
11.\>\>Connect cluster children of $c$ to $d$; Delete node $c$\\
12.\>\textbf{fi}\\
13.\>\textbf{for each} $a \in \mathit{LostNodes}_c$ \textbf{do}\\
14.\>\>Copy node $a$ and tree neighbors to $\mathit{host}_d$\\
15.\>\>Delete $a$ from $\mathit{host}_c$'s embedding\\
16.\>\textbf{od}
\end{tabbing}
\caption{The $\mathit{ReplaceNode}$ Procedure}
\label{algo:replace_node}
\end{algorithm}

\begin{algorithm}
\begin{tabbing}
........\=....\=....\=....\=....\=....\=....\kill
\>\textbf{Precondition:} $C$ and $C'$ are merge partners with\\
\>\>connected roots.\\
1.\>$\mathit{root}_C$ ($\mathit{root}_{C'}$) notifies $C$ ($C')$ of \\
\>\>(i) merge partner $C'$ ($C$), and\\
\>\>(ii) value of the shared random sequence.\\
2.\>Remove all \emph{matched edges between $C$ and $C'$}.\\
3.\>$\mathit{ResolveCluster}(\mathit{root}_C, \mathit{root}_{C'})$\\
4.\>Once $\mathit{ResolveCluster}$ completes at leaves,\\
\>\>inform nodes in new cluster $C'' = C \cup C'$ about\\
\>\>new cluster identifier\\
\\
$\mathit{ResolveCluster}(a,b):$ for $a \in T_C(N), b \in T_{C'}(N)$\\
\>\>\textit{// without loss of generality, assume $a \prec b$}\\
1.\>\>$\mathit{ReplaceNode}(a,b)$\\
\>\>\textit{// Node $b$ is now connected to children of $a$.}\\
\>\>\textit{// Let $c_{i,a}$ be the $i$-th child of $a$ in the spanning tree}\\
\>\>\textit{// and $c_{i,b}$ be the $i$-th child of $b$ in the spanning tree.}\\
2.\>\>For each child of $a$ (and $b$), create edge $(c_{i,a}, c_{i,b})$ (that is,\\
\>\>\>connect corresponding children of $a$ and $b$;\\
3.\>\>Concurrently execute $\mathit{ResolveCluster}(c_{i,a}, c_{i,b})$ for all children\\
\end{tabbing}
\caption{The Merge Algorithm}
\label{algo:merge}
\end{algorithm}

\subsubsection{Analysis of Merging}
We present the following lemma concerning the time required to complete a merge between two clusters $C$ and $C'$.

\begin{lemma}
Let $C$ and $C'$ be proper clusters, and let the merge partner of $C$ ($C'$) be $C'$ ($C$).  Assume the root $r_C$ of $C$ and the root $r_{C'}$ of $C'$ are connected.  In $\mathcal{O}(D(T_N))$ rounds, $C$ and $C'$ have merged together into a single proper unmatched clean cluster $C''$, containing exactly $N$ nodes.
\label{dlemma:merge}
\end{lemma}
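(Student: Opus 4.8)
The plan is to prove the merge lemma by tracking the $\mathit{ResolveCluster}$ recursion as it descends the spanning tree level by level, showing that each level completes in $\mathcal{O}(1)$ rounds so that the whole merge finishes in $\mathcal{O}(D(T_N))$ rounds, and then arguing correctness of the final cluster. First I would establish the base case at the roots: since $r_C$ and $r_{C'}$ are connected merge partners of proper clusters, the preconditions checked in $\mathit{ReplaceNode}$ (the partner/cluster consistency and random-sequence checks on lines 1--3 of Algorithm~\ref{algo:replace_node}) hold, so no spurious reset fires, and exactly one of the three cases executes --- either a cluster-successor update, a cluster-predecessor update, or a deletion of the ``losing'' virtual node with its children reconnected to the winner. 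In every case the winning virtual node inherits the correct children and the losing node's out-of-range guest nodes are copied to the appropriate host via the $\mathit{LostNodes}$ loop.

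Next I would set up the inductive step on tree level. The key observation is that when $\mathit{ResolveCluster}(a,b)$ finishes its constant-round local work at a node, it connects the corresponding children $(c_{i,a}, c_{i,b})$ (line 2 of $\mathit{ResolveCluster}$) and then recurses concurrently on all children (line 3). Because the two clusters are dilation-1 embeddings of the \emph{same} target topology $T_N$, their spanning trees are isomorphic, so the $i$-th child of $a$ in $T_C$ corresponds exactly to the $i$-th child of $b$ in $T_{C'}$, and the child-pairing is well-defined. Thus the merge front advances one tree level per $\mathcal{O}(1)$ rounds; since the spanning tree has $\mathcal{O}(D(T_N))$ levels by the diameter metric, the recursion reaches all leaves in $\mathcal{O}(D(T_N))$ rounds. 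After the leaves complete, the feedback portion propagates the new cluster identifier back to the root (line 4 of Algorithm~\ref{algo:merge}), adding at most another $\mathcal{O}(D(T_N))$ rounds, so that $C''$ becomes a clean unmatched cluster.

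I would then argue the two remaining claims: that $C''$ is a \emph{proper} cluster and that it contains exactly $N$ nodes. For properness, I would verify each clause of the proper-cluster definition holds after the merge: the range-comparison logic in $\mathit{ReplaceNode}$ guarantees each surviving host's cluster-successor and cluster-predecessor pointers are consistent with the target topology's induced order, the child-reconnection preserves the $\textsc{Avatar}_T$ neighborhood structure, and line 4 of the merge ensures all nodes share the single correct cluster identifier and are in a clean $\mathit{PFC}$ state. For the node count, I would use the fact that the merge is essentially a parallel ``merge'' of two sorted guest-node sequences over the same identifier space $[N]$: every virtual identifier $0, \ldots, N-1$ is present in both clusters, and at each comparison exactly one copy survives as its host, so the union $C'' = C \cup C'$ realizes exactly one copy of each of the $N$ guest nodes.

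The hard part will be the correctness argument rather than the timing. Specifically, I expect the main obstacle to be verifying that the concurrent level-by-level descent never leaves a guest node orphaned or duplicated when hosts differ in range size --- that is, showing the interaction between the $\mathit{clusterSucc}$/$\mathit{clusterPred}$ updates and the $\mathit{LostNodes}$ transfer in $\mathit{ReplaceNode}$ correctly partitions the $N$ guest identifiers among the surviving hosts so that the resulting structure is precisely $\textsc{Avatar}_T(N, C'')$. This requires a careful invariant stating that, at each level, the portion of the guest network already processed forms a valid partial embedding whose unprocessed boundary is exactly the set of child-pairs about to be resolved, and that this invariant is preserved by one round of $\mathit{ReplaceNode}$ at every node simultaneously.
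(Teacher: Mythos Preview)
Your proposal is correct and follows essentially the same approach as the paper: a level-by-level timing argument showing $\mathit{ResolveCluster}$ does $\mathcal{O}(1)$ work per level over $\mathcal{O}(D(T_N))$ levels, plus an additional $\mathcal{O}(D(T_N))$ for the final feedback. You in fact go further than the paper, which confines itself to the timing recurrence and does not spell out the properness or $N$-node-count arguments you outline; the one small item you omit is the initial $\mathit{PFC}(\mathit{Prep})$ wave (line~1 of Algorithm~\ref{algo:merge}), but that is just another $\mathcal{O}(D(T_N))$ term and does not affect the bound.
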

\begin{proof}
The first step of the merge procedure is to execute the $\mathit{PFC}(\mathit{Prep})$ wave, which requires $\mathcal{O}(D(T_N))$ rounds (Lemma \ref{dlemma:pfc}).  Consider an invocation of the procedure $\mathit{ResolveCluster}(a,b)$.  Let $a$ be from cluster $C$, $b$ be from cluster $C'$, and without loss of generality let $a$ be the virtual node which is to be deleted (that is, the range of the host of $b$ will contain $a$ after the merge).  $\mathit{ReplaceNode}(a,b)$ requires only 1 round, and results in the children of $a$ being connected to node $b$.  In the next round, $b$ will connect its children with the children from $a$, which requires 1 round.  $\mathit{ResolveCluster}$ is then executed concurrently for nodes from level $i+1$.  Therefore, the running time starting from level $i$ is $T(i) = 2 + T(i+1)$.  Since there are $\mathcal{O}(D(T_N))$ levels, we have $T(0) = \sum_{i=0}^{\log N}{2} = \mathcal{O}(D(T_N))$.  After the resolution process reaches the leaves, the final feedback travels up the tree, requiring an additional $\mathcal{O}(D(T_N))$ rounds.
\end{proof}

\begin{lemma}
With probability at least $(1 - N/2^k)$ (where $k = |L|$ and $k \geq \log N$), the algorithm does not disconnect the network.
\end{lemma}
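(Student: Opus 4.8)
The plan is to show that the only program actions capable of disconnecting $G$ are the edge deletions performed during a merge, that every \emph{legitimate} merge (between two proper clusters) reroutes rather than severs, and that the sole remaining threat—an \emph{illegitimate} merge triggered by the adversarial initial state—must pass the shared-random-sequence check in $\mathit{ReplaceNode}$, an event I can bound by a union bound over the $N$ virtual nodes.

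First I would enumerate every action that removes an edge. Edges disappear only in the matching subroutine $\mathit{ConnectFollowers}$ (Algorithm~\ref{algo:connect}) and in the merge (Algorithm~\ref{algo:merge}, via the matched-edge removal of line~2 and via $\mathit{ReplaceNode}$, Algorithm~\ref{algo:replace_node}). In $\mathit{ConnectFollowers}$ each deletion of $(a,b_{i+1})$ is preceded by the creation of $(b_i,b_{i+1})$ (or $(\mathit{parent}_a,b_{k-1})$), so every follower stays reachable through its new partner; matching merely reroutes and cannot disconnect $G$. For merges, I would invoke Lemma~\ref{dlemma:merge}: when $C$ and $C'$ are proper clusters with connected roots, the procedure yields a single proper, hence connected, cluster $C''=C\cup C'$. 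The only edges deleted are internal to $C''$, and the transfer in lines~13--16 of $\mathit{ReplaceNode}$ copies each moved virtual node together with its neighbors \emph{before} deletion. Thus a legitimate merge preserves connectivity, and the only way to sever $G$ is an illegitimate merge in which a corrupted node executes the destructive steps of $\mathit{ReplaceNode}$ without the compensating structure.

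The crux is that $\mathit{ReplaceNode}$ gates all destructive steps behind its lines~1--2: unless $\mathit{rs}_a = L$ and $\mathit{rs}_b = L$ (with consistent partner pointers), the hosts reset rather than delete, which by the reset analysis only rejoins them safely. Hence a disconnecting merge requires a corrupted participant's stored sequence to equal the freshly chosen $k$-bit value $L$ propagated by the initiating root. Since $L$ is drawn uniformly from $\{0,1\}^k$ independently of the initial state, any fixed corrupted value equals $L$ with probability at most $1/2^k$; a union bound over the at most $N$ virtual nodes that could participate bounds the probability that \emph{some} illegitimate merge clears the check by $N/2^k$. Therefore, with probability at least $1-N/2^k$ no destructive deletion ever occurs and $G$ remains connected—and the hypothesis $k\ge\log N$ is exactly what guarantees $1-N/2^k\ge 0$, so the stated bound is well defined.

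I expect the main obstacle to be the connectivity step: rigorously confirming that a legitimate merge reroutes \emph{every} incident edge of a deleted virtual node—including external and finger edges, not merely spanning-tree edges—before deletion, so that connectivity is maintained in the host network and not just in the guest tree. A secondary delicate point is formalizing ``freshly chosen $L$'' in the self-stabilizing model, where even a node's random source could be corrupted; for this I would restrict attention to the reset-free regime (Lemma~\ref{lemma:no_more_resets}), after which every $L$ used in a merge is genuinely generated during execution rather than supplied by the adversary.
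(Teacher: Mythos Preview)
Your proposal is correct and follows essentially the same route as the paper: legitimate merges between proper clusters only delete edges internal to $C\cup C'$ and hence preserve connectivity, so the only danger is an adversarially fabricated merge, which is blocked unless the stored random sequence matches the shared string $L$; a union bound then yields the claimed probability. The only cosmetic difference is that the paper counts $N/2$ adversarial ``guesses'' (since each requires a \emph{pair} of nodes with matching sequence) and obtains the slightly sharper $N/2^{k+1}$, whereas you union-bound over $N$ nodes to get $N/2^k$; both suffice for the stated lemma, and your explicit treatment of the deletions in $\mathit{ConnectFollowers}$ is a detail the paper omits.
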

\begin{proof}
First, notice that deletions (of edges and nodes) that occur due to proper clusters $C$ and $C'$ merging do not disconnect the network -- the only edges deleted are those between nodes in $C \cup C'$, and these nodes will form a proper cluster $C''$ after the merge.  The only way in which the network can be disconnected is if the adversary creates an initial configuration such that a node $b$ believes it is either merging, or preparing for a merge, and thus deletes an edge to a node $c$.  Notice that for any edge $(b,c)$ to be deleted, both $b$ and $c$ must have the same value for their random sequence, and this value must match the shared random string $L$.  While the adversary can enforce the first condition, they are unable to guarantee the second.  Instead, for any particular pair of nodes $a$ and $b$, the adversary has probability $1/2^k$ of setting the random sequences of $a$ and $b$ to match $L$.  An adversary can have up to $N/2$ different ``guesses'' in any initial configuration.  Therefore, the probability that the network is disconnected is at most $N/2^{k+1}$ (for $k \geq \log N$).
\end{proof}

\subsection{Final Time Complexity}
Putting together the pieces for each component above yields the following result.

\begin{theorem}
The algorithm of Berns~\cite{berns_avatar_15} defines a self-stabilizing overlay network for $\textsc{Avatar}_T$, for some full graph family target topology $T$, with convergence time of $\mathcal{O}(D(T_N) \cdot \log N)$ in expectation.
\end{theorem}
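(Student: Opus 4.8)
The plan is to partition the execution into a sequence of \emph{epochs}, each lasting $\mathcal{O}(D(T_N))$ rounds, and to track the number of clusters across epochs. First I would dispense with the clustering phase: by Lemma~\ref{lemma:no_more_resets}, after an initial $\mathcal{O}(D(T_N))$ rounds no node ever executes a reset again, so from that point on the configuration is reset-free and, by the clustering analysis, every real node belongs to some proper unmatched clean cluster. This reduces the problem to bounding the time for the merging process to collapse these clusters into a single one, since a lone cluster spanning all of $V$ is exactly $\textsc{Avatar}_T(N,V)$ and hence the legal configuration.

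Let $X_i$ denote the number of clusters at the start of epoch $i$, measured from a configuration in $\mathcal{F}_\Delta(G_0)$. The key per-epoch step combines two earlier results: by Lemma~\ref{lemma:merge_partner_probability} each cluster is assigned a merge partner within $\mathcal{O}(D(T_N))$ rounds with probability at least $1/16$, and by Lemma~\ref{dlemma:merge} any two matched clusters then merge into a single proper clean cluster in a further $\mathcal{O}(D(T_N))$ rounds. By linearity of expectation the expected number of clusters assigned a partner in an epoch is at least $X_i/16$; since matched clusters pair off, this yields at least $X_i/32$ merges in expectation, each reducing the cluster count by one. Hence $E[X_{i+1} \mid X_i] \leq (31/32)\,X_i$, and the tower property gives $E[X_k] \leq (31/32)^k\,N$, using that the system starts with at most $N$ clusters.

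It remains to convert this constant-fraction expected decay into a bound on the expected number of epochs. Since the cluster count is a non-increasing, integer-valued process bounded below by $1$, Markov's inequality gives $\Pr[X_k \geq 2] = \Pr[X_k - 1 \geq 1] \leq E[X_k] \leq (31/32)^k N$, so for $k = \Theta(\log N)$ we have $\Pr[X_k \geq 2] \leq 1/2$; that is, a single cluster remains with probability at least $1/2$ after $\Theta(\log N)$ epochs. Treating each such block of $\Theta(\log N)$ epochs as a trial that succeeds with probability at least $1/2$ (re-applying the same argument from the start of the block, and noting the process only ever shrinks $X$), a standard geometric argument bounds the expected number of blocks by $\mathcal{O}(1)$, hence the expected number of epochs by $\mathcal{O}(\log N)$. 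Multiplying by the $\mathcal{O}(D(T_N))$ rounds per epoch and adding the initial clustering phase yields the claimed $\mathcal{O}(D(T_N)\cdot\log N)$ expected convergence time.

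I expect the main obstacle to be the last step: the per-epoch lemmas only give a constant-fraction decrease \emph{in expectation}, so turning this into an expected-time bound requires the Markov-plus-restart argument above rather than a naive deterministic count, and one must check that conditioning on not-yet-converged configurations preserves the per-epoch probability guarantee of Lemma~\ref{lemma:merge_partner_probability} (which holds because that lemma applies to every configuration in $\mathcal{F}_\Delta(G_0)$, a set closed under execution). The clustering and merging time bounds, by contrast, are deterministic and follow directly from the cited lemmas.
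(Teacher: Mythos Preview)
Your proposal is correct and follows essentially the same route as the paper: an initial $\mathcal{O}(D(T_N))$ clustering phase (via Lemmas~\ref{dlemma:reset_to_clean}/\ref{lemma:no_more_resets}), then repeated $\mathcal{O}(D(T_N))$-round match-and-merge epochs (Lemmas~\ref{lemma:merge_partner_probability} and~\ref{dlemma:merge}) that cut the cluster count by a constant fraction, iterated $\mathcal{O}(\log N)$ times. The only difference is that you make the final probabilistic step explicit with a Markov-plus-geometric-restart argument, whereas the paper argues more informally that a constant-fraction decrease per $\mathcal{O}(D(T_N))$ rounds repeated $\mathcal{O}(\log N)$ times suffices; your version is a strictly more careful rendering of the same idea.
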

\begin{proof}
By Lemma~\ref{dlemma:reset_to_clean}, every node is part of a proper cluster in $\mathcal{O}(D(T_N))$ rounds.  By Lemma \ref{lemma:merge_partner_probability}, every cluster has probability at least $1/16$ of being assigned a merge partner over a span of $\mathcal{O}(D(T_N))$ rounds, meaning in $\mathcal{O}(D(T_N))$ rounds in expectation, every cluster will be assigned a merge partner.  By Lemma \ref{dlemma:merge}, in $\mathcal{O}(D(T_N))$ rounds, two matched clusters will complete their merge and form a single cluster.  This implies that the number of clusters is decreased by a constant fraction in $\mathcal{O}(D(T_N))$ rounds.  This decrease by a constant fraction will happen $\mathcal{O}(\log N)$ times before a single cluster remains, and our theorem holds.
\end{proof}

\section{Updated Analysis of Berns~\cite{berns_avatar_15}: Degree Expansion}
In this section, we update the analysis of Berns~\cite{berns_avatar_15} to show the degree expansion of their algorithm is an expected $\mathcal{O}(\Delta_\textsc{A}(T_N) \cdot \log N)$, where $\Delta_\textsc{A}(T_N)$ is the maximum degree of embedding of an $N$ node topology $T$.

To begin, we define the set of actions a node may execute that can increase its degree.

\begin{definition}
Let a \emph{degree-increasing action} of a virtual node $b$ be any action that adds a node $c$ to the neighborhood of a node $b' \in N_b$ such that $b'$ is not hosted by $\mathit{host}_b$.  Specifically, the degree-increasing actions are:
\begin{enumerate}
\item (Matching for Leader Clusters): edges added from the connecting and forwarding of followers during the $\mathit{PFC}(\mathit{ConnectFollowers},\bot)$ wave of Algorithm \ref{algo:lead}
\item (Matching for Follower Clusters): forwarding an edge to a potential leader after the root has begun a search for an open leader in Algorithm \ref{dalgo:follow}
\item (Merge: Replace Node): virtual node transfer actions during Algorithm \ref{algo:merge}
\item (Merge: Resolve Cluster): connecting children from each cluster with the same identifiers as the merge process propagates through the tree in Algorithm \ref{algo:merge}.
\end{enumerate}
\end{definition}

Given the degree-increasing actions, we can bound the degree expansion with the following two lemmas.

\begin{lemma}
Let $u$ be a real node in configuration $G_i$.  The maximum number of real nodes $u$ will have added to its neighborhood before all virtual nodes hosted by $u$ are members of a proper clean cluster is $\mathcal{O}(\Delta_\textsc{A}(T_N))$.
\label{dlemma:reset_increase}
\end{lemma}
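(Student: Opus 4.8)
The plan is to determine precisely how the virtual nodes hosted by $u$ first come to lie in a proper clean cluster, and then to charge the entire neighborhood growth to at most one in-progress cluster operation inherited from the adversarial configuration $G_i$.

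First I would observe that $u$ carries a single cluster identifier together with a single pair of pointers $\mathit{clusterSucc}_u, \mathit{clusterPred}_u$, so at any moment $u$ belongs to exactly one cluster and all of its virtual nodes lie in that cluster. In $G_i$ this cluster is in one of three regimes: it is already a proper clean cluster, in which case no edges are added and the lemma is immediate; it is not a proper cluster, in which case Lemma~\ref{dlemma:reset_to_clean} applies; or it is a proper but non-clean cluster currently executing a matching round or a merge. In every case $u$ reaches a proper clean cluster within $\mathcal{O}(D(T_N))$ rounds by finishing at most one such operation. The crucial point for the non-merge resolutions is that the reset action of Algorithm~\ref{algo:reset} only clears the cluster pointers and sets $\mathit{cluster}_u = \mathit{id}_u$, adding no edges to $u$'s neighborhood. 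A node that reaches its first clean cluster by resetting becomes a single-node cluster hosting all $N$ virtual nodes, every embedded edge internal to $u$, and so contributes zero new real neighbors.

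Next I would bound the two edge-adding operations. For a matching round, I would appeal to the accounting already used for Theorem~\ref{thm:avatar_degree-expansion}: a node gains at most one edge per child it has in the spanning tree, and since it has at most $\mathcal{O}(\Delta_\textsc{A}(T_N))$ such children, a single matching round adds only $\mathcal{O}(\Delta_\textsc{A}(T_N))$ real neighbors. For a merge, Lemma~\ref{dlemma:merge} guarantees the in-progress merge of $u$'s cluster $C$ with its partner $C'$ completes into a proper cluster $C''$, which is a valid $\textsc{Avatar}_T$ embedding; by the definition of the maximum degree of embedding, $u$'s degree inside $C''$ is at most $\Delta_\textsc{A}(T_N)$. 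It then remains to bound the edges $u$ acquires transiently \emph{during} the merge, before $C''$ settles.

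To control the transient edges I would analyze the two merge primitives of Algorithms~\ref{algo:replace_node} and~\ref{algo:merge} separately. Each $\mathit{ReplaceNode}$ transfer moves virtual nodes together with their tree neighbors to the winning host while keeping every host's range contiguous, so the real nodes $u$ sees through transfers are hosts of $u$'s neighbors in one of the valid embeddings $C$, $C'$, or $C''$, each bounded by $\Delta_\textsc{A}(T_N)$. The $\mathit{ResolveCluster}$ edges $(c_{i,a},c_{i,b})$ join corresponding children only at the current frontier of the top-down recursion and are discarded the instant the pair is resolved, so the simultaneously active cross-cluster edges incident to $u$ are confined to a single tree level. I expect the main obstacle to be exactly this last step: showing that the active bridging and transfer edges never expose $u$ to more than $\mathcal{O}(\Delta_\textsc{A}(T_N))$ \emph{distinct} real nodes, even though an adversarial $C'$ may scatter the virtual nodes corresponding to $u$'s range across many small-range hosts. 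The heart of the argument is that the level-by-level structure of $\mathit{ResolveCluster}$ keeps the live cross-cluster degree of $u$ governed by the embedding degree of an $\textsc{Avatar}_T$ instance rather than by the number of $C'$-hosts overlapping $u$'s range.
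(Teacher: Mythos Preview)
Your overall strategy matches the paper's---bound the growth per operation and argue that at most a constant number of operations fit in the $\mathcal{O}(D(T_N))$ window before $u$ lands in a proper clean cluster---but the regime-based case split introduces a gap that the paper's action-by-action enumeration avoids. The problem is your appeal to Lemma~\ref{dlemma:merge}. That lemma has as hypothesis that both $C$ and $C'$ are proper clusters; in the adversarial configuration $G_i$ covered by your regime~2 (and even in regime~3, since ``proper but non-clean'' in $G_i$ gives no guarantee about the \emph{partner} cluster), the adversary can place $u$ in what locally looks like a merge with a $C'$ that is not proper. The merge then never produces a valid $C''$; it aborts via a reset that propagates back to $u$. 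Your transient-edge argument also leans on ``$C$, $C'$, or $C''$'' being valid $\textsc{Avatar}_T$ embeddings, so it inherits the same unjustified hypothesis.

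The paper closes this gap by never assuming an operation succeeds. It first records that a real node resets immediately if two of its virtual nodes are executing different algorithms, so $u$ is committed to at most one operation type at a time. It then enumerates the four degree-increasing actions (leader matching, follower matching, $\mathit{ReplaceNode}$, $\mathit{ResolveCluster}$) and, for each, observes that a single pass takes $\mathcal{O}(D(T_N))$ rounds while Lemma~\ref{dlemma:reset_to_clean} forces $u$ into a proper clean cluster in the same time---so at most a constant number of passes occur, completed or aborted. The per-pass bound for $\mathit{ReplaceNode}$ is that $u$ can take over at most two ranges (one new predecessor, one new successor), each contributing $\mathcal{O}(\Delta_\textsc{A}(T_N))$ external edges by definition; for $\mathit{ResolveCluster}$ it is the number of spanning-tree children whose parent lies on a different host, again $\mathcal{O}(\Delta_\textsc{A}(T_N))$. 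Neither bound needs the partner cluster to be proper, and neither needs your level-frontier argument. Your closing worry about an adversarial $C'$ scattering virtual nodes across many small-range hosts is handled directly by the definition of $\Delta_\textsc{A}(T_N)$: it is a maximum over \emph{all} node sets $V\subseteq[N]$, so even the worst adversarial partition is already accounted for.
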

\begin{proof}
First, notice that a real node will detect a reset fault if it hosts two virtual nodes $b$ and $b'$ such that $b$ and $b'$ are executing different algorithms -- for example, if $b$ is merging while $b'$ is executing a selection for leaders, host $u$ will reset and not forward any neighbors.  Furthermore, note that all degree-increasing actions happen as part of PFC waves on the spanning tree.  Given these two observations, consider the three degree-increasing actions:
\begin{itemize}
    \item \emph{Matching for Leader Clusters:} Note that connecting and forwarding of followers happens only during the feedback portion of a PFC wave.  This requires $\mathcal{O}(D(T_N))$ rounds to complete once.  By Lemma \ref{dlemma:reset_to_clean}, every node will be a member of a proper clean cluster in $\mathcal{O}(D(T_N))$ rounds, meaning the leader selection will happen at most once before the node is in a proper clean cluster.  Each time a leader is selected the node's degree may grow by $\mathcal{O}(\Delta_\textsc{A}(T_N))$, and our lemma holds.
    
    \item \emph{Matching for Follower Clusters:} As with the matching process for leaders, followers only forward an edge to a potential leader as part of a PFC wave.  Furthermore, each child only forwards a single edge to its parent.  Since a PFC wave requires $\mathcal{O}(D(T_N))$ rounds to complete, and nodes are part of a proper clean cluster in $\mathcal{O}(D(T_N))$ rounds, at most a constant number of waves can occur (1, to be exact).  Each node has $\mathcal{O}(\Delta_\textsc{A}(T_N))$ children from other hosts in the cluster, so this is the bound on the degree increase before reaching a proper clean cluster.
    
    \item \emph{Replace Node:} If a real node believes it is participating in a merge, it will receive at most $\mathcal{O}(MaxExt_T(N))$ additional edges from each range it takes over.  As it can only believe it is merging with one other cluster, it can take over at most 2 such ranges before either (i) the merge completes successfully and all virtual nodes $u$ hosts are a member of a proper clean unmatched cluster, or (ii) some other node detects a faulty configuration, executes a reset fault, and this reset fault spreads to node $u$.
    
    \item \emph{Resolve Cluster:} As with other degree-increasing actions, connecting children from two clusters during a merge happens in a systematic fashion level by level.  Since this level-by-level process takes $\mathcal{O}(D(T_N))$ rounds, it can occur at most a constant number of times before nodes are in a proper clean cluster.  Note that degree of a real node can increase from this only if the parent of the connected child is hosted by a different node than the child.  By definition, there are at most $\mathcal{O}(\Delta_\textsc{A}(T_N))$ such parents, and therefore this is a limit of the degree increase during this step before the nodes are part of a proper clean cluster.
\end{itemize}
\end{proof}

Now that we have bounded the degree growth during the initial stages, we can consider how the degrees may grow as part of the intended operation (after all nodes are members of proper clusters).

\begin{lemma}
Let $u$ be a real node such that all virtual nodes hosted by $u$ are members of a proper clean unmatched cluster in configuration $G_i$.  Let $u$'s degree in $G_i$ be $\Delta_u$.  Node $u$'s degree will be $\mathcal{O}(\Delta_u + \Delta_\textsc{A}(T_N) + \Delta_\textsc{A}(T_N) \cdot T(\mathit{lead}))$ until the algorithm terminates, where $T(\mathit{lead})$ is the number of times the virtual nodes hosted by $u$ participate in the leader selection procedure of Algorithm \ref{algo:lead}.
\label{dlemma:degree_clean}
\end{lemma}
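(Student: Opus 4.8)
The plan is to track every way that $u$'s degree can grow once its virtual nodes sit in a proper clean unmatched cluster, and to argue that only the repeated leader-matching participations accumulate, while every other source is self-limiting. First I would note that at $G_i$ the degree of $u$ is by definition $\Delta_u$, and that since a proper cluster is an $N$-node instance of $T$, the cluster edges incident on the virtual nodes hosted by $u$ already number at most $\Delta_\textsc{A}(T_N)$ and continue to do so in every subsequent proper cluster. It therefore suffices to bound the extra edges arising from the four degree-increasing actions, charging each either to the additive $\mathcal{O}(\Delta_\textsc{A}(T_N))$ term or to the $\mathcal{O}(\Delta_\textsc{A}(T_N)\cdot T(\mathit{lead}))$ term.

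For the two merge actions (Replace Node and Resolve Cluster) I would invoke Lemma \ref{dlemma:merge}: any merge $u$'s virtual nodes participate in completes in $\mathcal{O}(D(T_N))$ rounds and returns the cluster to a proper $N$-node instance of $T$, so the edges introduced during a merge are transient and the cluster degree falls back to at most $\Delta_\textsc{A}(T_N)$ afterward. Because a cluster merges with at most one partner at a time, the merge-induced growth present at any single snapshot is $\mathcal{O}(\Delta_\textsc{A}(T_N))$ and does not accumulate across merges; I would absorb it into the additive $\Delta_\textsc{A}(T_N)$ term. The follower-matching action is handled the same way: while $u$'s cluster is a follower, each virtual node forwards a single edge to a potential leader per PFC poll, so one wave contributes at most one edge per child on a different host, i.e. $\mathcal{O}(\Delta_\textsc{A}(T_N))$. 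Since a follower polls only a constant number of times (Algorithm \ref{dalgo:follow}) before merging or re-selecting, and the PFC waves clear these forwarded edges, a follower episode contributes only $\mathcal{O}(\Delta_\textsc{A}(T_N))$ edges that vanish before the next episode, again folding into the additive term.

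The remaining and dominant source is leader matching. Each time $u$'s cluster runs $\mathit{PFC}(\mathit{ConnectFollowers},\bot)$, a virtual node may receive a forwarded follower from each of its children on a different host, so $u$'s degree grows by $\mathcal{O}(\Delta_\textsc{A}(T_N))$ per leader participation; since this is exactly the event counted by $T(\mathit{lead})$, the cumulative leader contribution is $\mathcal{O}(\Delta_\textsc{A}(T_N)\cdot T(\mathit{lead}))$. Summing the starting degree $\Delta_u$, the bounded non-accumulating merge and follower contributions of $\mathcal{O}(\Delta_\textsc{A}(T_N))$, and the accumulating leader contribution gives the claimed bound. I expect the main obstacle to be the bookkeeping for \emph{peak} rather than \emph{net} degree: I must verify that the transient merge and follower edges, which may coexist with edges surviving from a recent leader episode, are genuinely cleaned by the PFC waves and do not silently accumulate across role-reselection cycles, so that only $T(\mathit{lead})$ — and not the uncounted number of follower episodes — enters the final estimate.
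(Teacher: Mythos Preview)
Your proposal is correct and follows essentially the same approach as the paper: a case analysis over the four degree-increasing actions, arguing that the follower and merge contributions are transient and bounded by $\mathcal{O}(\Delta_\textsc{A}(T_N))$ while only the leader-matching contribution accumulates across the $T(\mathit{lead})$ participations. The paper is slightly more granular (it treats \emph{ResolveCluster} and \emph{ReplaceNode} separately and observes that the follower forwarding contributes only a single temporary edge per virtual node rather than $\mathcal{O}(\Delta_\textsc{A}(T_N))$), but the decomposition and the key non-additivity argument for merges are the same.
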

\begin{proof}
We consider the four degree-increasing actions that a proper clean unmatched cluster $T$ from configuration $G_i$ may execute.  Consider first the \emph{matching for follower clusters} from Algorithm \ref{dalgo:follow}.  The degree can increase only from node $b$ adding the neighbor $\mathit{leader}$ from cluster $T'$ to the neighborhood of $\mathit{parent}_b$.  Furthermore, this degree increase of one is temporary -- a node deletes an edge to $\mathit{leader}$ after forwarding it, and once the root has the edge, it eventually becomes part of a merge, and either the root of $T$ or the root of $T'$ is deleted.

Next, consider the \emph{matching for leader clusters} in Algorithm \ref{algo:lead}.  During the $\mathit{PFC}(\mathit{ConnectFollowers},\bot)$ wave, a virtual node $b$ may receive at most a single neighbor from each child, and after an additional round will retain at most half of these edges.  Since a real node $u$ can host at most $\mathcal{O}(\Delta_\textsc{A}(T_N))$ nodes with children from another host, the degree increase each time a node $u$ participates in the matching procedure for leaders is $\mathcal{O}(\Delta_\textsc{A}(T_N))$.

Next, consider the actions of connecting children from the two clusters from the \emph{ResolveCluster} procedure.  Notice the degree of a real node can increase from this step only if the parent of the connected children is hosted by a different node than the child (otherwise the edge is already present in the host network).  By definition, for any real node there are at most $\Delta_\textsc{A}(T_N)$ parents in the spanning tree that are hosted by another node, therefore the connecting of children can increase the degree by at most $\Delta_\textsc{A}(T_N)$.  Notice that after an additional round, these extra edges are deleted as the connected children resolve using the \emph{ResolveNode} procedure.

Finally, consider the degree increase from the \emph{ResolveNode} procedure.  Assume $b$ and $b'$ are nodes in level $i$ in $C$ and $C'$ (respectively), and suppose $b$ and $b'$ resolve.  Without loss of generality, assume $b'$ is the node which will be deleted (that is, the host of $b'$ is losing part of its range).  Notice that the degree of the $host_b$ will increase by at most $\Delta_\textsc{A}(T_N)$ when $host_b$ ``takes over'' the lost range of $host_{b'}$, as this is the highest external degree of $b'$ in cluster $C'$.  A real node can take over a lost range exactly once per merge, however, as there is only one node in the other cluster which will update its successor to be $b$.  Therefore, the degree increase caused from the taking over of a node's range is at most $\Delta_\textsc{A}(T_N)$ per merge.  Notice that, unlike the degree increase from the selection procedure for leaders, the degree increase from merges is not additive -- the largest node $u$'s degree can be as the result of intra-cluster edges is $\Delta_\textsc{A}(T_N)$, regardless of how many merges $u$ participates in.  Therefore, a node's degree may grow up to $\Delta_\textsc{A}(T_N)$ during a merge, but it will never exceed this value.
\end{proof}

Putting the above lemmas together, along with the results on the convergence time, leads us to the following theorem.

\begin{theorem}
The algorithm of Berns~\cite{berns_avatar_15} defines a self-stabilizing overlay network for $\textsc{Avatar}_T$, for some full graph family target topology $T$, with degree expansion of $\mathcal{O}(\Delta_\textsc{A}(T_N) \cdot \log N)$ in expectation.
\end{theorem}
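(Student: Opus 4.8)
The plan is to bound the maximum degree any real node $u$ attains at any point during convergence, and then divide by the denominator of the degree-expansion measure (the maximum node degree in the initial or final configuration). I would split each node's history into the two regimes already isolated by the supporting lemmas. In the first regime, before all of $u$'s virtual nodes belong to a proper clean cluster, Lemma~\ref{dlemma:reset_increase} adds at most $\mathcal{O}(\Delta_\textsc{A}(T_N))$ neighbors on top of $u$'s initial degree $d_u$; thus when $u$ first lands in a proper clean unmatched cluster its degree $\Delta_u$ is $\mathcal{O}(d_u + \Delta_\textsc{A}(T_N))$. For the remaining regime, Lemma~\ref{dlemma:degree_clean} caps $u$'s degree at $\mathcal{O}(\Delta_u + \Delta_\textsc{A}(T_N) + \Delta_\textsc{A}(T_N) \cdot T(\mathit{lead}))$, where $T(\mathit{lead})$ counts the leader-selection rounds the cluster(s) hosting $u$'s virtual nodes participate in. The whole argument therefore reduces to bounding $T(\mathit{lead})$.

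The crux is to show $\mathbb{E}[T(\mathit{lead})] = \mathcal{O}(\log N)$. I would argue that each leader-selection attempt costs $\Theta(D(T_N))$ rounds: it is composed of the $\mathit{PFC}(\mathit{Lead})$ wave, the closing wave, and the $\mathit{PFC}(\mathit{ConnectFollowers})$ wave, each $\Theta(D(T_N))$ by Lemma~\ref{dlemma:pfc} and bounded above by Lemma~\ref{dlemma:lead_max}. Because $u$'s virtual nodes lie in exactly one cluster at any time after reaching the proper clean state, these selections occur sequentially, so the number of leader selections in an execution is at most the realized number of rounds divided by $\Omega(D(T_N))$, a bound that holds deterministically. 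Since the total convergence time is $\mathcal{O}(D(T_N) \cdot \log N)$ in expectation by Theorem~\ref{thm:avatar_convergence}, taking expectations gives $\mathbb{E}[T(\mathit{lead})] = \mathcal{O}(\log N)$.

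Substituting this into Lemma~\ref{dlemma:degree_clean} yields a maximum degree during convergence of $\mathcal{O}(d_u + \Delta_\textsc{A}(T_N) \cdot \log N)$ for each real node $u$. Maximizing over all nodes, the numerator of the degree-expansion ratio is $\mathcal{O}(\Delta_{init} + \Delta_\textsc{A}(T_N) \cdot \log N)$, where $\Delta_{init}$ is the maximum initial degree. Dividing by the denominator, which is at least $\Delta_{init}$, collapses the $\Delta_{init}$ contribution to $\mathcal{O}(1)$ and leaves a degree expansion of $\mathcal{O}(\Delta_\textsc{A}(T_N) \cdot \log N)$ in expectation, as claimed.

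I expect the main obstacle to lie in the second step: cleanly relating $T(\mathit{lead})$ to the convergence time. The subtlety is that $T(\mathit{lead})$ counts leader selections rather than merges, and a cluster may select the leader role repeatedly without merging (for example when it pairs an even number of followers among themselves). The timing argument above sidesteps this by charging $\Omega(D(T_N))$ rounds to every selection, successful or not, rather than attempting to bound fruitless selections directly through the $1/16$ merge probability of Lemma~\ref{lemma:merge_partner_probability}. A secondary point requiring care is the accounting in Lemma~\ref{dlemma:degree_clean}: the matching-phase growth must be treated as (at most) additive over the $\mathcal{O}(\log N)$ selections, whereas the merge-phase growth stays capped at $\mathcal{O}(\Delta_\textsc{A}(T_N))$ regardless of the number of merges, and it is exactly this distinction that produces a single $\log N$ factor rather than two.
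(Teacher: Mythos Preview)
Your proposal is correct and follows essentially the same route as the paper: invoke Lemma~\ref{dlemma:reset_increase} for the pre-cluster phase, Lemma~\ref{dlemma:degree_clean} for the post-cluster phase, and bound $T(\mathit{lead})$ by charging each leader selection $\Omega(D(T_N))$ rounds against the expected $\mathcal{O}(D(T_N)\cdot\log N)$ convergence time of Theorem~\ref{thm:avatar_convergence}. Your write-up is in fact more careful than the paper's in making the ratio with the initial degree explicit and in flagging the additive-versus-capped distinction between matching and merging growth.
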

\begin{proof}
By Lemma \ref{dlemma:reset_increase}, we know the maximum degree expansion before a node is a member of a proper cluster is $\mathcal{O}(\Delta_\textsc{A}(T_N))$.  By Lemma \ref{dlemma:degree_clean}, once a node is a member of the proper cluster its degree expansion will be $\mathcal{O}(\Delta_\textsc{A}(T_N) \cdot T(lead))$, where $T(lead)$ is the number of times $u$'s virtual nodes participate in the leader selection procedure.  Since this procedure takes $\mathcal{O}(D(T_N))$ rounds to complete, and the entire algorithm completes in expected $\mathcal{O}(D(T_N) \cdot \log N)$ rounds, $u$'s virtual nodes participate in the leader selection procedure $\mathcal{O}(\log N)$ times, giving us our result.
\end{proof}
\end{document}